\newcommand{\hamiltonian}{{\bf h}}
\newcommand{\potential}{{\bf v}}
\newcommand{\dispersion}{\mathfrak{e}}
\newcommand{\birman}{\mathfrak{b}}
\newcommand{\pmin}{p_0}
\newcommand{\pmax}{p^0}
\newcommand{\Tr}{{\ensuremath{\mathrm{Tr}}}}
\newcommand{\Ker}{{\ensuremath{\mathrm{Ker\,}}}}
\newcommand{\sign}{{\ensuremath{\mathrm{sign\,}}}}
\newcommand{\disc}{{\ensuremath{\mathrm{disc}}}}
\newcommand{\ess}{{\ensuremath{\mathrm{ess}}}}
\newcommand{\emin}{\dispersion_{\min}}
\newcommand{\emax}{\dispersion_{\max}}
\renewcommand{\d}{\mathrm{d}}
\renewcommand{\hat}{\widehat}
\renewcommand{\tilde}{\widetilde}
\newcommand{\C}{\mathbb{C}}
\newcommand{\R}{\mathbb{R}}
\newcommand{\T}{\mathbb{T}}
\newcommand{\Z}{\mathbb{Z}}
\newcommand{\cF}{\mathcal{F}}
\newcommand{\cN}{\mathcal{N}}
\numberwithin{equation}{section}
\theoremstyle{plain}
\newtheorem{theorem}{Theorem}[section]
\newtheorem{proposition}[theorem]{Proposition}
\newtheorem{lemma}[theorem]{Lemma}
\newtheorem{corollary}[theorem]{Corollary}
\newtheorem{hypothesis}[theorem]{Hypothesis}
\theoremstyle{definition}
\date{\today}
\begin{document}

\title[]{Bound states of discrete Schr\"odinger operators on one and two dimensional lattices}

\author[Sh. Kholmatov, S.N. Lakaev, F. Almuratov]{Sh. Kholmatov, S.N. Lakaev, F. Almuratov}

\address[Sh. Kholmatov]{University of Vienna,
Oskar-Morgenstern-Platz 1, 1090  Vienna, Austria}
\email{shokhrukh.kholmatov@univie.ac.at}

\address[S. Lakaev]{Samarkand State University, University boulevard 15, 140104 Samarkand, Uzbekistan}
\email{slakaev@mail.ru}

\address[F. Almuratov]{Samarkand State University, University boulevard 15, 140104 Samarkand, Uzbekistan}
\email{almurotov93@mail.ru}

\begin{abstract}
We study the spectral properties of discrete Schr\"odinger operator
$$
\widehat \hamiltonian_\mu=\widehat \hamiltonian_0 +
\mu \widehat  \potential,\qquad \mu\ge0,
$$
associated to a one-particle system in $d$-dimensional lattice
$\Z^d, $ $d=1,2,$ where the non-perturbed operator $\hat \hamiltonian_0$ is a self-adjoint Laurent-Toeplitz-type operator generated by $\hat
\dispersion:\Z^d\to\C$ and the potential $\hat \potential$ is the
multiplication operator by $\hat v:\Z^d\to\R.$ Under certain
regularity assumption on $\hat \dispersion$ and a decay assumption
on $\hat v$, we establish the existence or non-existence and also
the finiteness of eigenvalues of $\hat\hamiltonian_\mu.$ Moreover,
in the case of existence we study the asymptotics of eigenvalues of
$\hat \hamiltonian_\mu$ as $\mu\searrow 0.$
\end{abstract}

\maketitle

\section{Introduction}

In \cite{Klaus:1977} Klaus studied the eigenvalues of the Schr\"odinger operator $-d^2/dx^2 + \lambda V$ for $\lambda>0$ and $V$ obeying
$$
\int_\R (1 + |x|) |V(x)|dx<\infty,
$$
extending the results of Simon in \cite{Simon:1976} in case of $d=1.$  Klaus showed that if $\int V(x)dx>0,$ then for small and positive $\lambda$ there is no bound state, and if $\int V(x)dx \le0,$ then there exists a bound state $E(\lambda)$ and it satisfies
$$
(-E(\lambda))^{1/2} = -\frac{\lambda}{2} \int V(x)dx - \frac{\lambda^2}{4} \int V(x)|x-y|V(y)dxdy +o(\lambda^2)
$$
as $\lambda\searrow0.$

In the present paper we replace the Euclidean $d$-dimensional space $\R^d$ by the $d$-dimensional lattice $\Z^d,$ $d=1,2,$ and study the discrete spectrum of a large class of lattice Schr\"odinger operators $\hat \hamiltonian_\mu$ in $\ell^2(\Z^d)$ given by
$$
\widehat \hamiltonian_\mu:=\widehat \hamiltonian_0 +\mu \widehat  \potential,\qquad \mu\ge0,
$$
where the non-perturbed operator  $\hat \hamiltonian_0$ is a Laurent-Toeplitz-type operator with a generating function $\hat\dispersion\in \ell^1(\Z^d)$ satisfying $\overline{\hat \dispersion (-x)} = \hat\dispersion(x):$
$$
\hat\hamiltonian_0 f(x) = \sum\limits_{y\in\Z^d} \hat \dispersion(y) \hat f(x+y),\qquad \hat f\in \ell^2(\Z^d),
$$
and the potential $\hat \potential$ is the multiplication operator by a real-valued function $\hat v:\Z^d\to\R$ vanishing at infinity.

We also work with the representation of $\hamiltonian_\mu$ in $L^2(\T^d),$ where $\T^d=(-\pi,\pi]^d$ is the $d$-dimensional torus, the dual group of $\Z^d,$ equipped with the normalized Haar measure $\d p,$ i.e. $\int_{\T^d}\d p=1.$  The so-called ``momentum-space representation'' of $\hat \hamiltonian_0$ and $\potential$ are defined via the standard Fourier transform
$$
\cF:\ell^2(\Z^d) \to L^2(\T^d),\qquad \cF \hat f(p) = \sum\limits_{x\in\Z^d} \hat f(x)e^{ix\cdot p}
$$
as
$$
\hamiltonian_0:=\cF\hat \hamiltonian_0\cF^*\qquad \text{and}\qquad \potential:=\cF\hat\potential \cF^*,
$$
where
$$
\cF^*:L^2(\T^d) \to \ell^2(\Z^d),\qquad \cF^* f(x) = \int_{\T^d} f(p)e^{-ix\cdot p}\, \d p
$$
is the inverse Fourier transform. Then $\hat\hamiltonian_\mu$ is unitarily equivalent to the operator
$$
\hamiltonian_\mu:L^2(\T^d) \to L^2(\T^d),\qquad \hamiltonian_\mu:=\hamiltonian_0+\mu\potential.
$$

Note that $\hamiltonian_0$ is the multiplication operator by the continuous function $\dispersion:=\cF\hat \dispersion\in C(\T^d)$  and the potential $\potential$ is a convolution-type integral operator
$$
\potential f(p) = \int_{\T^d} v(p-q)f(q)\,\d q
$$
with the kernel distribution
$
v: = \cF\hat v.
$

Unless otherwise stated, throughout the paper we always assume that $\dispersion$ and $\hat v$ satisfy

\begin{hypothesis}\label{hyp:maina}
(a) The function $\dispersion$ has a unique minimum at $ \pmin $ and a unique maximum at $ \pmax ,$ the function $\dispersion$ is $C^2$ around $ \pmin $ and $ \pmax,$  and both  $ \pmin $ and $ \pmax $  are non-degenerate.

(b) there exists $\gamma\in(0,1)$ such that $0<\sum\limits_{x\in\Z^d} |x|^{2-d+\gamma}|\hat v(x)|<\infty.$
\end{hypothesis}

Note that since $\hat \potential$ is self-adjoint and compact, by the classical Weyl Theorem for any $\mu\ge0,$
\begin{equation}\label{essential_spectrum}
\sigma_\ess(\hat\hamiltonian_\mu) = \sigma(\hat \hamiltonian_0) = [\emin,\emax],
\end{equation}
where
$$
\emin:=\min \dispersion\qquad\text{and}\qquad \emax:=\max\dispersion.
$$

A typical example of $\hat \hamiltonian_0$ is the discrete Laplacian $\hat \Delta$ on $\Z^d,$ i.e.,
$$
\hat \Delta \hat f(x) :=\sum\limits_{j=1}^d \left[\hat f(x) - \frac{\hat f(x+1_j) + \hat f(x-1_j)}{2}\right],
$$
where $\{1_j\}$ is the basis of the lattice. In this case
$$
\dispersion(p) :=\sum\limits_{j=1}^d (1 - \cos q_j)
$$
satisfies Hypothesis \ref{hyp:maina} (a).

The main aim of the current paper is to study the discrete spectrum of $\hat \hamiltonian_\mu,$ in particular, the existence or non-existence, the uniqueness and finiteness of eigenvalues, and also the asymptotics of eigenvalues absorbed into
the essential spectrum as $\mu\searrow0.$

Our first result is related to the  Bargmann-type estimates for the number of eigenvalues of $\hat \hamiltonian_\mu$
outside the essential spectrum.

\begin{theorem}\label{teo:finiteness_eigenvalues}
Assume Hypothesis \ref{hyp:maina}. Then for any $\mu>0,$
\begin{equation}\label{posi_eigenvalues}
\cN^+(\hat\hamiltonian_\mu,\emax) \le 1+   C_1\,\mu \sum\limits_{x\in\Z^d} |x|^{2-d+\gamma} |\hat v(x)|
\end{equation}
and
\begin{equation}\label{nega_eigenvalues}
 \cN^-(\hat\hamiltonian_\mu,\emin) \le 1+ C_2\,\mu \sum\limits_{x\in\Z^d} |x|^{2-d+\gamma}|\hat v(x)|,
\end{equation}
where $C_1,C_2>0$ are coefficients depending only on $\dispersion$ and $\gamma.$  In particular, the number of eigenvalues of $\hat \hamiltonian_\mu$ outside the essential spectrum is finite for any $\mu>0.$
\end{theorem}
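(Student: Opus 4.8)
The plan is to count eigenvalues above $\emax$ and below $\emin$ separately via a Birman-Schwinger argument, reducing to an operator-norm/trace estimate on a suitable compact operator built from $\hat v$ and the resolvent of $\hat\hamiltonian_0$. I focus on the positive eigenvalues; the negative case is symmetric after replacing $\dispersion$ by $-\dispersion$ and $\hat v$ by $-\hat v$. For $z>\emax$, an eigenvalue of $\hat\hamiltonian_\mu$ at $z$ corresponds, by the Birman-Schwinger principle, to the eigenvalue $1/\mu$ of the Birman-Schwinger operator; decomposing $\hat v = \hat v_+ - \hat v_-$ into positive and negative parts and using that only the negative part $\hat v_-$ (the attractive part, which pushes eigenvalues up past $\emax$) can produce eigenvalues above the spectrum, the number $\cN^+(\hat\hamiltonian_\mu,\emax)$ is bounded by $1$ plus the number of eigenvalues of $\mu\, |\hat v_-|^{1/2}(\hat\hamiltonian_0 - \emax)^{-1}|\hat v_-|^{1/2}$ exceeding $1$ (in $L^2(\T^d)$ or $\ell^2(\Z^d)$), which in turn is at most the Hilbert-Schmidt norm of that operator by Chebyshev. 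The extra ``$1+$'' absorbs the contribution of a possible eigenvalue at the edge coming from the degenerate behaviour of the resolvent kernel near $\pmax$.

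The core of the argument is therefore to estimate, in momentum space, the Hilbert-Schmidt norm of the operator with kernel
$$
K(p,q) = \mu\, \overline{v_-^{1/2}}(p)\,\frac{1}{\dispersion(q) - \emax}\,v_-^{1/2}(q),
$$
or, what is cleaner, to work directly on the lattice: by the Fourier transform the relevant Hilbert-Schmidt norm squared equals $\mu^2 \sum_{x,y} |\hat v(x)|\,|G_{\emax}(x-y)|^2\,|\hat v(y)|$ where $G_{\emax}$ is the kernel of $(\hat\hamiltonian_0 - \emax)^{-1}$ acting on $\ell^2(\Z^d)$, i.e. $G_{\emax}(x) = \int_{\T^d} \frac{e^{-ix\cdot p}}{\dispersion(p)-\emax}\,\d p$. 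Hypothesis \ref{hyp:maina}(a) — the unique non-degenerate maximum at $\pmax$ with $\dispersion\in C^2$ there — forces $\emax - \dispersion(p) \asymp |p-\pmax|^2$ near $\pmax$, so the singularity of $1/(\dispersion(p)-\emax)$ is of order $|p-\pmax|^{-2}$; a stationary-phase / direct estimate on this oscillatory integral then yields a pointwise bound $|G_{\emax}(x)| \le C (1+|x|)^{-(d-2)}$ for $d=1,2$ (with the understanding that for $d=2$ this means a logarithmic bound, $|G_{\emax}(x)|\le C\log(2+|x|)$; one may need to be slightly careful with the $d=2$ endpoint and carry a small power, which is exactly what the parameter $\gamma$ is there to supply). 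Plugging this in, after a Cauchy-Schwarz in the double sum,
$$
\|K\|_{\mathrm{HS}}^2 \le C\,\mu^2 \Big(\sum_{x\in\Z^d} (1+|x|)^{2-d}\,|\hat v(x)|\Big)^2,
$$
and tightening the exponent by the extra $\gamma$ to cover the borderline $d=2$ case gives the claimed bound with $\mathcal{N}^+ \le 1 + \|K\|_{\mathrm{HS}} \le 1 + C_1\mu \sum_x |x|^{2-d+\gamma}|\hat v(x)|$.

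I would carry out the steps in this order: (i) state the Birman-Schwinger principle for eigenvalues above $\emax$ and below $\emin$, reducing the counting function to counting large eigenvalues of a positive compact operator, and peel off the single edge eigenvalue to explain the ``$1+$''; (ii) rewrite the Birman-Schwinger kernel on $\ell^2(\Z^d)$ in terms of $\hat v$ and the Green's function $G_{\emax}$; (iii) prove the decay estimate $|G_{\emax}(x)|\lesssim (1+|x|)^{2-d+\varepsilon}$ near the singularity using the non-degeneracy of $\pmax$, which is the crux; (iv) assemble via Chebyshev/Hilbert-Schmidt and Cauchy-Schwarz to get \eqref{posi_eigenvalues}, then repeat verbatim with $-\dispersion, -\hat v$ for \eqref{nega_eigenvalues}; (v) conclude finiteness, since the right-hand sides are finite by Hypothesis \ref{hyp:maina}(b). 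The main obstacle is step (iii): controlling the Green's function decay uniformly, in particular handling the $d=2$ case where the ``free'' Green's function only grows logarithmically and the naive estimate is borderline — this is precisely where the hypothesis with the extra exponent $\gamma\in(0,1)$ (rather than the critical exponent $2-d$) is needed, and one must check that the $C^2$ non-degeneracy assumption at $\pmax$, together with possible lack of smoothness of $\dispersion$ away from $\pmin,\pmax$, still yields the required integrable oscillatory-integral bound; a careful cutoff separating a neighbourhood of $\pmax$ (where one uses the quadratic lower bound and a Hölder-type estimate with loss $\gamma$) from the rest of $\T^d$ (where $1/(\dispersion-\emax)$ is bounded and the $\ell^1$ assumption on $\hat\dispersion$ suffices) does the job.
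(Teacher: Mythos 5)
Your overall strategy (Birman--Schwinger reduction, a ``$1+$'' to absorb one exceptional eigenvalue, a H\"older-type loss of $\gamma$ near $\pmax$) points in the right direction, but the central estimate as you set it up fails. The edge operator $K=\mu\,|v|^{1/2}(\emax-\hamiltonian_0)^{-1}|v|^{1/2}$ is not Hilbert--Schmidt --- it is not even bounded. Indeed, by \eqref{trace_modul_birman} its trace equals $\mu\,|a(z)|\sum_x|\hat v(x)|$ with $a(z)\to\infty$ as $z\to\emax$, and the largest eigenvalue $\nu_0(z)$ itself diverges like $a(z)\sum_x|\hat v(x)|$, so $\|K\|_{\mathrm{HS}}\ge\nu_0=\infty$ at the edge. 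Equivalently, your Green's function $G_{\emax}(x)=\int_{\T^d}e^{-ix\cdot p}(\dispersion(p)-\emax)^{-1}\,\d p$ is a divergent integral for $d=1,2$: the non-degeneracy of $\pmax$ gives a singularity $|p-\pmax|^{-2}$, which is not locally integrable in dimensions $1$ and $2$. Only regularized differences such as $G_{\emax}(x)-G_{\emax}(0)$ (i.e.\ the kernel after subtracting the rank-one part $a(z)Q$, as in Lemma~\ref{lem:residual_b_sh}) make sense, and that subtraction is exactly what produces the one divergent eigenvalue your ``$1+$'' must account for. Your step (iii) therefore cannot be carried out as stated, and the claimed bound $\|K\|_{\mathrm{HS}}\le C\mu\sum_x(1+|x|)^{2-d}|\hat v(x)|$ is false.

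There is a second, independent problem: even granting a finite Hilbert--Schmidt norm, Chebyshev on the second moment gives $\#\{\nu_i\ge 1/\mu\}\le\mu^2\|K\|_{\mathrm{HS}}^2$, which is quadratic in $\mu$, not the linear bound in \eqref{posi_eigenvalues}. A bound linear in $\mu$ requires the first moment, i.e.\ the trace, and since the trace diverges at the edge one must bound $\Tr(|\birman(z)|)-\nu_0(z)$ uniformly in $z>\emax$. That quantity being finite is the real content of the paper's Lemma~\ref{lem:asymp_lambda_sign_definite_case}: one expands the top eigenvector around $\phi_0=v^{1/2}(\cdot-\pmax)$ and shows $\Tr-\nu_0\le -(Q_1(z)\phi_0,\phi_0)$, which is then estimated by $C\sum_x|x|^{2-d+\gamma}|\hat v(x)|$ via the $\sin^2$ inequality --- this perturbation-theoretic step has no counterpart in your outline and is the missing idea. (A minor further point: for eigenvalues above $\emax$ it is the \emph{positive} part of $\hat v$, not $\hat v_-$, that is relevant; the paper sidesteps the issue by majorizing $\hamiltonian_\mu\le\hamiltonian_0+\mu|\potential|$ and working with $|\hat v|$ throughout.)
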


We note that Theorem \ref{teo:finiteness_eigenvalues} improves the upper bound for the number of eigenvalues obtained in \cite[Theorem 1.2]{BPL:2018}. To the best of our knowledge, estimates of the form \eqref{posi_eigenvalues}-\eqref{nega_eigenvalues} are known only for the discrete Laplacian. In fact, in $d=1$  sharp bounds for $\hat v$ for the finiteness of bound states of $-\hat \Delta + \hat \potential$ have been established in \cite{DT:2007.proc.ams} using some variational estimates. In $d=2$ an estimate of type \eqref{nega_eigenvalues} (with $\hat v\le0$ and with $\ln(1+|x|)$ in place of $|x|^\gamma$) for $-\hat \Delta + \hat \potential$ has been obtained in \cite{MV:2012.jms} applying Markov processes.   Analogous estimate in $\Z^2$ (again with $\hat v\le0$ and with $\ln(1+|x|)$ in place of $|x|^\gamma$) for $-\hat \Delta + \hat \potential$ has been obtained in \cite{RS:2013.aia} using some careful estimates for the two dimensional continuous Schr\"odinger operators together with interpolation arguments. In this paper we establish \eqref{posi_eigenvalues}-\eqref{nega_eigenvalues} without using those techniques, rather adapting the methods of Klaus in \cite{Klaus:1977}. Note that in the continuous case \eqref{posi_eigenvalues} does not make sense.

Our next results are related to the existence or
non-existence and also the uniqueness of eigenvalues of
$\hat\hamiltonian_\mu.$

\begin{theorem}\label{teo:existence_and_nonexistence}

Assume Hypothesis \ref{hyp:maina}. Then for any $\mu>0$:
\begin{itemize}
\item[(1)]  if $\sum\limits_{x\in\Z^d} \hat v(x)\ge0,$ then
$\sigma_\disc(\hat \hamiltonian_\mu) \cap (\emax,+\infty)\ne\emptyset;$

\item[(2)]  if $\sum\limits_{x\in\Z^d} \hat v(x)\le0,$ then
$\sigma_\disc(\hat \hamiltonian_\mu) \cap (-\infty,\emin)\ne\emptyset.$
\end{itemize}
Moreover, there exists $\mu_o:=\mu_o(\dispersion,\hat v)>0$ such that for any $\mu\in(0,\mu_o)$:
\begin{itemize}
\item[(a)]  if $\sum\limits_{x\in\Z^d} \hat v(x)>0,$ then $\sigma_\disc(\hat \hamiltonian_\mu) \cap (\emax,+\infty)$ is a singleton $\{E(\mu)\}$ and  $\sigma_\disc(\hat \hamiltonian_\mu) \cap (-\infty,\emin)=\emptyset;$

\item[(b)]  if $\sum\limits_{x\in\Z^d} \hat v(x)<0,$ then  $\sigma_\disc(\hat \hamiltonian_\mu) \cap (-\infty,\emin)$ is a singleton $\{e(\mu)\}$ and  $\sigma_\disc(\hat \hamiltonian_\mu) \cap (\emax,+\infty)=\emptyset;$

\item[(c)]  if $\sum\limits_{x\in\Z^d} \hat v(x)=0,$ then both $\sigma_\disc(\hat \hamiltonian_\mu) \cap (\emax,+\infty)$ and  $\sigma_\disc(\hat \hamiltonian_\mu) \cap (-\infty,\emin)$ are singletons $\{E(\mu)\}$ and  $\{e(\mu)\},$ respectively.

\end{itemize}
\end{theorem}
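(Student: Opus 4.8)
The plan is to combine the Birman--Schwinger principle with a precise analysis of the Birman--Schwinger operator near the edges of $\sigma_\ess(\hamiltonian_\mu)=[\emin,\emax]$, adapting Klaus's method \cite{Klaus:1977}. Replacing $\hamiltonian_\mu$ by $-\hamiltonian_\mu$ turns $\dispersion$ into $-\dispersion$ (interchanging $\pmin$ and $\pmax$, still satisfying Hypothesis~\ref{hyp:maina}(a)) and $\hat v$ into $-\hat v$ (still satisfying Hypothesis~\ref{hyp:maina}(b)), and reverses the sign of $\sum_x\hat v(x)$; hence it suffices to prove the assertions about $\sigma_\disc(\hamiltonian_\mu)\cap(-\infty,\emin)$, those about $(\emax,+\infty)$ following automatically. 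So fix $z<\emin$ and work in $L^2(\T^d)$ with the compact self-adjoint operator
$$
\birman(z):=(\hamiltonian_0-z)^{-1/2}\,\potential\,(\hamiltonian_0-z)^{-1/2},
$$
whose integral kernel is $v(p-q)\big[(\dispersion(p)-z)(\dispersion(q)-z)\big]^{-1/2}$. By the Birman--Schwinger principle, $z\in\sigma_\disc(\hamiltonian_\mu)$ iff $-\mu^{-1}$ is an eigenvalue of $\birman(z)$, with equal multiplicities, and $\cN^-(\hamiltonian_\mu,\emin)=\lim_{z\uparrow\emin}\#\{\text{eigenvalues of }\birman(z)\text{ below }-\mu^{-1}\}$. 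Whenever $\hat v\not\equiv0$ and $\sum_x\hat v(x)\le0$ the function $\hat v$ is not $\ge0$, so the least eigenvalue $\lambda_{\min}(z)$ of $\birman(z)$ is negative; it is continuous in $z$ with $\lambda_{\min}(z)\to0$ as $z\to-\infty$, so everything reduces to the behaviour of $\lambda_{\min}(z)$ as $z\uparrow\emin$.

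The heart of the proof is the decomposition, for $z$ near $\emin$,
$$
\birman(z)=v(0)\,|\varphi_z\rangle\langle\varphi_z|+\big(|\varphi_z\rangle\langle\zeta_z|+|\zeta_z\rangle\langle\varphi_z|\big)+\birman_{\mathrm r}(z),\qquad v(0):=\textstyle\sum_{x\in\Z^d}\hat v(x),
$$
where $\varphi_z(p):=(\dispersion(p)-z)^{-1/2}$, so that $\|\varphi_z\|^2=J(z):=\int_{\T^d}(\dispersion(p)-z)^{-1}\,\d p\to+\infty$ as $z\uparrow\emin$ (this divergence is exactly where $d\le2$ and the non-degeneracy of $\pmin$ enter), and $\zeta_z(p):=\big(v(p-\pmin)-v(0)\big)(\dispersion(p)-z)^{-1/2}$, which stays bounded in $L^2(\T^d)$ with $\|\zeta_z\|^2\to\Lambda:=\int_{\T^d}|v(p-\pmin)-v(0)|^2(\dispersion(p)-\emin)^{-1}\,\d p$ and $\langle\varphi_z,\zeta_z\rangle$ bounded (both integrals converge since the numerators vanish to first order at $\pmin$, absorbing the $|p-\pmin|^{-2}$ singularity of the denominator), while $\birman_{\mathrm r}(z)$ is bounded uniformly for $z$ near $\emin$ — this last fact being the estimate that also powers Theorem~\ref{teo:finiteness_eigenvalues} and the place where Hypothesis~\ref{hyp:maina}(b) is used, the weight $|x|^{2-d+\gamma}$ being precisely what makes the relevant resolvent-difference sums summable. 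Diagonalising the rank-$\le2$ part on $\mathrm{span}\{\varphi_z,\zeta_z\}$: its determinant equals $|\langle\varphi_z,\zeta_z\rangle|^2-J(z)\|\zeta_z\|^2\to-\infty$ and its trace equals $v(0)J(z)+O(1)$, so its two eigenvalues behave, as $z\uparrow\emin$, like $v(0)J(z)\to-\infty$ and a bounded quantity if $v(0)<0$; like $\pm\sqrt{\Lambda\,J(z)}\to\pm\infty$ if $v(0)=0$; and like $v(0)J(z)\to+\infty$ and a bounded quantity if $v(0)>0$. Together with the uniform boundedness of $\birman_{\mathrm r}(z)$ this gives $\lambda_{\min}(z)\to-\infty$ whenever $v(0)\le0$, and (using also continuity of $\lambda_{\min}$ and decay of $\birman(z)$ at $-\infty$) $\inf_{z<\emin}\lambda_{\min}(z)>-\infty$ whenever $v(0)>0$. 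The only delicate case is $v(0)=0$, where one needs $\Lambda>0$; but then $\Lambda=\int_{\T^d}|v(p-\pmin)|^2(\dispersion(p)-\emin)^{-1}\,\d p>0$ unless $\hat v\equiv0$ — this is the lattice counterpart of the fixed sign of $\iint V(x)|x-y|V(y)\,\d x\,\d y$ in Klaus's low-energy expansion.

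Parts (1) and (2) follow at once: if $\sum_x\hat v(x)\le0$ then $\lambda_{\min}(z)$ is continuous on $(-\infty,\emin)$, tends to $0^-$ as $z\to-\infty$ and to $-\infty$ as $z\uparrow\emin$, hence attains $-\mu^{-1}$ for every $\mu>0$, i.e.\ $\sigma_\disc(\hamiltonian_\mu)\cap(-\infty,\emin)\ne\emptyset$ — this is (2), and (1) is its mirror image. For the ``moreover'' part, set $\|\hat v\|_\sharp:=\sum_x|x|^{2-d+\gamma}|\hat v(x)|$ and let $M>0$ bound $|\lambda_{\min}(z)|$ over $z<\emin$ in the case $v(0)>0$ and the top eigenvalue of the $\emax$-edge Birman--Schwinger operator over $z>\emax$ in the case $v(0)<0$; put $\mu_o:=\min\{(C_1\|\hat v\|_\sharp)^{-1},(C_2\|\hat v\|_\sharp)^{-1},M^{-1}\}$ with $C_1,C_2$ as in Theorem~\ref{teo:finiteness_eigenvalues}. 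For $0<\mu<\mu_o$, Theorem~\ref{teo:finiteness_eigenvalues} forces $\cN^+(\hamiltonian_\mu,\emax)\le1$ and $\cN^-(\hamiltonian_\mu,\emin)\le1$, so any eigenvalue outside $[\emin,\emax]$ is a simple singleton; the choice of $M$ excludes an eigenvalue above $\emax$ when $v(0)<0$ and below $\emin$ when $v(0)>0$; and the existence just proved supplies exactly one eigenvalue on the ``correct'' side(s). This yields (a), (b) and (c) precisely as stated, with $E(\mu)$ and $e(\mu)$ the eigenvalues above $\emax$ and below $\emin$ respectively.

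I expect the main obstacle to be the decomposition above with a remainder $\birman_{\mathrm r}(z)$ bounded uniformly down to the band edge. Subtracting only the rank-one term $v(0)|\varphi_z\rangle\langle\varphi_z|$ does not suffice: the Hilbert--Schmidt norm of what is left already diverges like $\sqrt{J(z)}$ (its square equals $2J(z)\int_{\T^d}|v(p-\pmin)-v(0)|^2(\dispersion(p)-\emin)^{-1}\,\d p+O(1)$), so one genuinely has to peel off the second, rank-two term $|\varphi_z\rangle\langle\zeta_z|+|\zeta_z\rangle\langle\varphi_z|$ as well, after which Hypothesis~\ref{hyp:maina}(b) yields a uniform trace-norm bound on $\birman_{\mathrm r}(z)$. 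Since this is the same quantitative mechanism underlying Theorem~\ref{teo:finiteness_eigenvalues}, I would organise the exposition so that the proof of that theorem furnishes the required control of $\birman_{\mathrm r}(z)$.
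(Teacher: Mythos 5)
Your proposal is correct in substance, but it takes a genuinely different route from the paper's. The paper works with the non-self-adjoint operator $\birman(z)={\bf s}\sqrt{|\potential|}(z-\hamiltonian_0)^{-1}\sqrt{|\potential|}$, subtracts only the rank-one term $a(z)Q$ (Lemma \ref{lem:residual_b_sh}), whose remainder $Q_1(z)$ is \emph{not} uniformly bounded in $d=1$, and must then control the subleading eigenvalues via Gohberg--Krein determinant inequalities and, in the delicate case $\sum_x\hat v(x)=0$, an implicit-function argument (Proposition \ref{prop:asym_eigenvalues_bsh}); existence for all $\mu>0$ is obtained separately by a variational monotonicity argument for $\sup\sigma(\hamiltonian_\mu)$. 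You instead symmetrize the Birman--Schwinger operator, peel off a rank-$\le2$ block so that the remainder is uniformly Hilbert--Schmidt down to the band edge (the factorization $\sum_x\hat v(x)(e^{ix\cdot(p-\pmin)}-1)(e^{-ix\cdot(q-\pmin)}-1)$ of its kernel numerator is exactly what Hypothesis \ref{hyp:maina}(b) controls, the same mechanism as the paper's bound on $-(Q_1(z)\phi_0,\phi_0)$ in Lemma \ref{lem:asymp_lambda_sign_definite_case}), read off the two eigenvalues of that block from its trace and determinant, get existence for every $\mu>0$ by the intermediate value theorem applied to $\lambda_{\min}(z)$, and get uniqueness for small $\mu$ directly from the Bargmann bound of Theorem \ref{teo:finiteness_eigenvalues} (which is proved independently of Theorem \ref{teo:existence_and_nonexistence}, so there is no circularity). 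Your route treats the three sign cases of $\sum_x\hat v(x)$ uniformly and avoids the Gohberg--Krein machinery entirely; what it does not produce, and what the paper's heavier Proposition \ref{prop:asym_eigenvalues_bsh} is really built for, is the quantitative control of $\lambda(z)$ needed later for the absorption asymptotics of Theorem \ref{teo:asymptotics}.

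One inaccuracy to fix: under Hypothesis \ref{hyp:maina}(b) alone your parenthetical claim that $\langle\varphi_z,\zeta_z\rangle$ stays bounded fails in $d=1$. First-order vanishing of the numerator only reduces the singularity of the integrand to $|p-\pmin|^{-1}$, which is not integrable in one dimension; the elementary estimate gives only $\langle\varphi_z,\zeta_z\rangle=O\bigl(|\ln(\emin-z)|\bigr)$. This is harmless for your argument, since the cross term enters the determinant as $|\langle\varphi_z,\zeta_z\rangle|^2=o(J(z))$ and the trace as $o(J(z)^{1/2})$, so the stated asymptotics of the two eigenvalues of the rank-two block (and hence all conclusions) are unaffected; but the claim should be weakened accordingly.
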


We remark that the existence of eigenvalues, i.e., assertions (1)-(2) of Theorem \ref{teo:existence_and_nonexistence} can also be obtained from \cite[Theorem 3.19]{HHRV:2017}, however, methods of \cite{HHRV:2017} seem not sufficient to establish the remaining assertions such as non-existence and uniqueness of eigenvalues.

Notice that by the linearity of $\mu\mapsto \hamiltonian_\mu,$ being a unique and isolated point of the discrete spectrum, both $\mu\in(0,\mu_o)\mapsto E(\mu)$ and $\mu\in(0,\mu_o)\mapsto e(\mu)$ are analytic. Moreover, $E(\mu)\searrow\emax$ and $e(\mu)\nearrow\emin$ as $\mu\searrow 0$ so that both eigenvalues are absorbed by the essential spectrum as $\mu\searrow0.$  Now we study their absorption rate.

\begin{theorem}\label{teo:asymptotics}
Assume Hypothesis \ref{hyp:maina} and additionally suppose that $\dispersion\in C^{3,\alpha}$ around $\pmax$ and $\pmin$ for some $\alpha\in(0,\gamma/8].$ Then there exists $\mu_1:=\mu_1(\dispersion,\hat v)\in(0,\mu_o)$ such that for any $\mu\in(0,\mu_1)$:
\begin{itemize}
\item[(a)]  if $\kappa_0:=\sum\limits_{x\in\Z^d} \hat v(x)>0,$ then
\begin{equation}\label{E_ning_asymp1}
E(\mu)- \dispersion_{\max}  =
\begin{cases}
\mu^2\Big[\kappa_0 a_1  + \mu^\alpha \Phi_1(\mu)\Big]^2 & \text{if $d=1,$}\\[4mm]
e^{-\frac{1}{\kappa_0 b_1\mu}}\,\Big[c_1 + \Psi_1(\mu)\Big]  & \text{if $d=2;$}
\end{cases}
\end{equation}
%

\item[(b)]  if $\kappa_0:=\sum\limits_{x\in\Z^d} \hat v(x)<0,$ then
\begin{equation}\label{e_ning_asymp1}
\emin - e(\mu) =
\begin{cases}
\mu^2 \Big[-\kappa_0a_2 + \mu^\alpha \Phi_2(\mu)\Big]^2 & \text{if $d=1,$}\\[4mm]
e^{\frac{1}{\kappa_0b_2\mu}}\,\Big[c_2 + \Psi_2(\mu)\Big] & \text{if $d=2;$}
\end{cases}
\end{equation}

\item[(c)]  if $\sum\limits_{x\in\Z^d} \hat v(x)=0,$ then both integrals
$$
\kappa_1:=\int_{\T^d} \frac{|v(p- \pmax )|^2\,\d p}{\dispersion_{\max}- \dispersion(p)},\qquad \kappa_2:=\int_{\T^d} \frac{|v(p- \pmin )|^2\,\d p}{\dispersion(p) - \dispersion_{\min}}
$$
are finite
and
\begin{equation}\label{E_ning_asymp01}
E(\mu)- \dispersion_{\max}  =
\begin{cases}
\mu^4\,\Big[\kappa_1a_3 + \mu^\gamma\ln^2\mu\,\Phi_3(\mu) \Big]^2 & \text{if $d=1,$}\\[4mm]
e^{- \frac{c_3}{\mu^2 \big(\sqrt{4\kappa_1 + b_3^2\mu^2} + b_3\mu\big)^2}}\,\Big[d_3+\Psi_3(\mu)\Big]\,
& \text{if $d=2,$}
\end{cases}
\end{equation}
and
\begin{equation}\label{e_ning_asymp01}
\dispersion_{\min} -e(\mu)  =
\begin{cases}
\mu^4\,\Big[\kappa_2a_4 + \mu^\gamma\ln^2\mu\,\Phi_4(\mu) \Big]^2 & \text{if $d=1,$}\\[4mm]
e^{- \frac{c_4}{\mu^2 \big(\sqrt{4\kappa_2 + b_4^2\mu^2} + b_4\mu\big)^2}}\,\Big[d_4+\Psi_4(\mu)\Big]\,
& \text{if $d=2.$}
\end{cases}
\end{equation}
\end{itemize}
Here $a_i,b_i,c_i,d_i>0$ are constants depending only on $\dispersion,$ and $\Phi_i,\Psi_i: [0,\mu_1]\to\R$ are continuous.
\end{theorem}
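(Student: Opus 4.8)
The plan is to follow the scheme of Klaus~\cite{Klaus:1977}: convert the eigenvalue problem into a Birman--Schwinger equation, extract from the resolvent of $\hat\hamiltonian_0$ its singular part near the relevant band edge, reduce to a scalar (or low-rank) transcendental equation, and solve it. \emph{Step 1: Birman--Schwinger reduction.} Factorise $\hat v=\hat A\,\hat J\,\hat A$, where $\hat A$ and $\hat J$ are the multiplication operators by $|\hat v|^{1/2}$ and $\sign\hat v$. For $z>\emax$ the operator $z-\hat\hamiltonian_0$ is boundedly invertible and positive; substituting $\psi=(z-\hat\hamiltonian_0)^{1/2}\varphi$ shows that $z\in\sigma_\disc(\hat\hamiltonian_\mu)$ if and only if $1\in\sigma(\mu\,S(z)\hat J)$, where $S(z):=\hat A\,(z-\hat\hamiltonian_0)^{-1}\hat A\ge0$ is a positive compact operator with kernel $|\hat v(x)|^{1/2}\,G_z(x-y)\,|\hat v(y)|^{1/2}$, $G_z(w):=\int_{\T^d}\frac{e^{-iw\cdot p}}{z-\dispersion(p)}\,\d p$. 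For the lower edge one replaces $z-\hat\hamiltonian_0$ by $\hat\hamiltonian_0-z$ and $\hat J$ by $-\hat J$, which amounts to replacing $\kappa_0=\sum_x\hat v(x)$ by $-\kappa_0$ and $\pmax$ by $\pmin$; so it suffices to analyse the upper edge.

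\textbf{Step 2: the singular part of the resolvent near the edge.} Put $\epsilon:=z-\emax\searrow0$. Localising near $\pmax$, using the non-degeneracy of the maximum and the Taylor expansion afforded by $\dispersion\in C^{3,\alpha}$, and computing the model integral explicitly, I would prove an \emph{exact} decomposition $G_z(w)=\beta_d(\epsilon)\,e^{-i\pmax\cdot w}+\rho_z(w)$, with $\beta_1(\epsilon)=a\,\epsilon^{-1/2}+O(1)$ and $\beta_2(\epsilon)=b\,\ln(1/\epsilon)+O(1)$ (the $O(1)$ terms being expanded one further order, with quantitative rates governed by $\alpha$), and with a remainder that is mildly growing and Hölder in $\epsilon$ down to $\epsilon=0$: in $d=1$ one has $|\rho_z(w)|\le C(1+|w|)$ and, after also peeling off the \emph{$\epsilon$-independent} term $-a'e^{-i\pmax\cdot w}|w|$ (Klaus's $\int V|x-y|V$ contribution), what remains is $O\big(\epsilon^{\gamma/2}(1+|w|)^{1+\gamma}\big)$; in $d=2$, $|\rho_z(w)|\le C\big(1+\ln(1+|w|)\big)\le C(1+|w|)^{\gamma}$. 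Consequently, by the weighted summability in Hypothesis~\ref{hyp:maina}(b) — whose exponent $2-d+\gamma$ is exactly calibrated to this growth — the vector $\phi_0(x):=|\hat v(x)|^{1/2}e^{-i\pmax\cdot x}$ lies in a suitable weighted $\ell^2$ space on which the remainder operator $R(z)$ (kernel $|\hat v(x)|^{1/2}\rho_z(x-y)|\hat v(y)|^{1/2}$) is bounded uniformly for small $\epsilon\ge0$ and Hölder in $\epsilon$. The upshot is the exact splitting $S(z)=\beta_d(\epsilon)\,|\phi_0\rangle\langle\phi_0|+R(z)$, isolating the divergence in a rank-one term whose range vector is \emph{independent} of $z$; note $\langle\hat J\phi_0,\phi_0\rangle=\sum_x\hat v(x)=\kappa_0$.

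\textbf{Step 3: the transcendental equation and its solution.} For $\mu$ small, $1-\mu R(z)\hat J$ is invertible and a rank-one-perturbation identity turns $1\in\sigma(\mu S(z)\hat J)$ into the scalar equation
$$
\frac{1}{\mu\,\beta_d(\epsilon)}=F(\mu,\epsilon),\qquad F(\mu,\epsilon):=\big\langle\hat J\phi_0,\ (1-\mu R(z)\hat J)^{-1}\phi_0\big\rangle .
$$
Expanding the Neumann series gives $F(\mu,\epsilon)=\kappa_0+\mu\,\langle\hat J\phi_0,R(\emax)\hat J\phi_0\rangle+O(\mu^2)+(\text{H\"older error in }\epsilon)$; in case~(c), where $\kappa_0=0$, the threshold plane-wave phases cancel and a direct computation identifies $\langle\hat J\phi_0,R(\emax)\hat J\phi_0\rangle=\int_{\T^d}\frac{|v(p-\pmax)|^2}{\emax-\dispersion(p)}\,\d p=\kappa_1$, the integral being finite precisely because $v(0)=\kappa_0=0$ and $v$ is H\"older of order $\gamma$ (Lipschitz if $d=1$) by Hypothesis~\ref{hyp:maina}(b); similarly $\kappa_2$ at the lower edge. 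In cases~(a)--(b) one has $F(\mu,0)=\kappa_0\ne0$, so for $d=1$ the equation reads $\sqrt\epsilon=a\mu F(\mu,\epsilon)$ and a contraction argument in $\epsilon$ (for $\mu\in(0,\mu_1)$) yields, after squaring, $E(\mu)-\emax=\mu^2\big[\kappa_0a_1+\mu^\alpha\Phi_1(\mu)\big]^2$, the exponent $\alpha$ (and the threshold $\alpha\le\gamma/8$) being exactly what lets one absorb the errors of Step~2 evaluated at $\epsilon\sim\mu^2$; for $d=2$ it reads $\ln(1/\epsilon)=(b\mu F(\mu,\epsilon))^{-1}+O(1)$, whence $\epsilon=e^{-1/(\kappa_0b_1\mu)}\big[c_1+\Psi_1(\mu)\big]$ once $F$ is expanded one more order. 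In case~(c) the right-hand side is one order smaller in $\mu$ (leading term $\mu\kappa_1$), pushing the eigenvalue much closer to the edge; it is then convenient to perform the Feshbach reduction on the two-dimensional space spanned by $\phi_0$ and $\hat J\phi_0$ (orthogonal of equal norm when $\kappa_0=0$), obtaining a $2\times2$ determinantal equation whose exact solution — a quadratic in the natural variable — produces the closed forms in \eqref{E_ning_asymp01}, and likewise \eqref{e_ning_asymp01}. Continuity of $\Phi_i,\Psi_i$ on the closed interval follows from analyticity of $F$ in $\mu$, its H\"older continuity in $\epsilon$, and the fixed-point construction.

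\textbf{Main obstacle.} The crux is Step~2: producing the resolvent expansion with a \emph{quantitative} H\"older modulus in $\epsilon$ while keeping the remainder's growth in $|w|$ slow enough to be controlled by the single weighted summability of Hypothesis~\ref{hyp:maina}(b) — which forces one to peel off not only the rank-one singular part but also the explicit $\epsilon$-independent linear-in-$|w|$ term and to work in weighted $\ell^2$ spaces. This is markedly more delicate in $d=2$, where the leading singularity is only logarithmic and the subleading terms of the two-dimensional resolvent themselves carry logarithms (hence the $\ln^2\mu$ and $\sqrt{4\kappa_i+b_i^2\mu^2}$ structures), and the additional bookkeeping for the $2\times2$ determinant in case~(c) is the remaining source of technical work.
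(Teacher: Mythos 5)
Your overall architecture is the same as the paper's: Birman--Schwinger reduction, splitting of the edge resolvent into a divergent rank-one piece $a(z)Q$ (your $\beta_d(\epsilon)\,|\phi_0\rangle\langle\phi_0|$) plus a remainder, and solution of the resulting transcendental equation in $\epsilon=z-\emax$ with $a(z)\sim\epsilon^{-1/2}$ ($d=1$) or $\sim\ln(1/\epsilon)$ ($d=2$) obtained from the Morse lemma. The genuine differences are in execution: you work in position space and reduce to a single scalar equation $1/(\mu\beta_d)=\langle \hat J\phi_0,(1-\mu R\hat J)^{-1}\phi_0\rangle$ via a rank-one determinant identity (and, for $\kappa_0=0$, a $2\times2$ Feshbach reduction), whereas the paper tracks the top eigenvalue $\lambda(z)$ of $\birman(z)$ through Riesz projections and, in the $\kappa_0=0$ case, needs the Implicit Function Theorem in a Banach space together with the Gohberg--Krein inequality $|\eta_0\eta_1\eta_2|\le\nu_0\nu_1\nu_2$ to show that only one eigenvalue blows up. Your scalar reduction is cleaner where it applies, and the equation $\mu\lambda(E(\mu))=1$ used in the paper is equivalent to yours.

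The concrete gap is in your Step 2 for $d=1$: the claim that, after peeling off $\beta_1(\epsilon)e^{-i\pmax\cdot w}$ and the $\epsilon$-independent term $-a'|w|e^{-i\pmax\cdot w}$, the remainder operator $R(z)$ is \emph{uniformly bounded in operator norm} down to $\epsilon=0$ and H\"older in $\epsilon$. Under Hypothesis \ref{hyp:maina}(b) with $d=1$ you only control $\sum_x|x|^{1+\gamma}|\hat v(x)|$ with $\gamma<1$; but the peeled-off linear term has kernel $|\hat v(x)|^{1/2}|x-y||\hat v(y)|^{1/2}$, and every natural bound on its operator (or Hilbert--Schmidt) norm — e.g.\ via $|x-y|\le|x|+|y|$ and rank-one estimates — requires $\sum_x|x|^{2}|\hat v(x)|<\infty$; likewise your $O(\epsilon^{\gamma/2}(1+|w|)^{1+\gamma})$ error would require $\sum_x|x|^{2+2\gamma}|\hat v(x)|<\infty$. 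This is exactly why the paper's Lemma \ref{lem:residual_b_sh}(a) asserts only $\|Q_1(z)\|\le C\,a(z)^{(1-\gamma)/2}\sim\epsilon^{-(1-\gamma)/4}$, a \emph{diverging} bound obtained by splitting the weight $|x|^{1+\gamma}$ symmetrically as $(|x|^{(1+\gamma)/2})^2$ and paying a power of $\epsilon$ from the integral $T_{(1+\gamma)/2}$. Your scheme is salvageable, but only by weakening Step 2 to this $o(a(z))$ operator bound and extracting the subleading correction from \emph{scalar} quantities — the quadratic form $\langle\hat J\phi_0,R(z)\hat J\phi_0\rangle$, which is finite (indeed $O(\ln(1/\epsilon))$, cf.\ \eqref{ishaq_inna}) under Hypothesis \ref{hyp:maina}(b) even though the operator is not uniformly bounded — and by noting that at the solution $\epsilon\sim\mu^2$ one still has $\mu\|R(z)\|\lesssim\mu^{(1+\gamma)/2}\to0$, so the Neumann series/contraction step survives. (In $d=2$ your uniform-boundedness claim is fine, since the logarithmic growth of $\rho_z$ is absorbed by the weight $|x|^{\gamma}$, consistent with \eqref{limit_q1_d2}.) A second, smaller point: in case (c) with $d=2$ the specific closed form $\exp\big(-c_3/(\mu^2(\sqrt{4\kappa_1+b_3^2\mu^2}+b_3\mu)^2)\big)$ arises because the correction to $\kappa_1$ in the eigenvalue equation is of size $a(z)^{-1/2}$, i.e.\ a quadratic equation in $[-\ln\epsilon]^{1/2}$ must be solved exactly; your expansion of $F$ in powers of $\mu$ alone would misplace this term at the level of the constant in the exponent, so the $2\times2$ reduction must be carried out keeping the $a(z)^{-1/2}$ dependence explicit rather than replacing it by its value $\sim\mu$ at the solution.
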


We remark that the asymptotics for $e(\mu)$ in $d=1$ corresponds to the continuous counterparts obtained in \cite{Klaus:1977,Simon:1976}, however, in $d=2$ the asymptotics  \eqref{e_ning_asymp1}-\eqref{e_ning_asymp01} sharper than the one in \cite[Theorem 3.4]{Simon:1976} obtained in the continuous case.

As in  \cite{Klaus:1977,Simon:1976} to prove Theorem \ref{teo:asymptotics} we obtain an asymptotic equation for $E(\mu)$ and $e(\mu).$ It turns out that in one dimensional case $E(\mu)$ and $e(\mu)$ satisfy
\begin{equation}\label{eq_for_eigen_case_d1}
\begin{aligned}
&\sqrt{E(\mu)-\emax} = [c_1+g_1(E(\mu)-\emax)]\mu^n,\\
&\sqrt{\emin-e(\mu)} = [c_2+g_2(\emin -e(\mu))]\mu^n,
\end{aligned}
\end{equation}
where $c_1,c_2>0$ are explicit constants, $g_1(z),g_2(z)\to0$ as $z\to0,$ and $n=1$ or $n=2$ depending on whether $\sum_x\hat v(x)$ is nonzero or zero. The equation \eqref{eq_for_eigen_case_d1} readily gives the first term of the asymptotics of $E(\mu)$ and $e(\mu).$ To identify the second term  we need to analyse the convergence rates of $g_1$ and $g_2.$  Similarly, in two dimensions the associated equations for $E(\mu)$ and $e(\mu)$ read as
\begin{equation}\label{eq_for_eigen_case_d2}
\begin{aligned}
&\frac{1}{-\mu^n\ln(E(\mu) - \emax)} =c_3+g_3(E(\mu)-\emax),\\
&\frac{1}{-\mu^n\ln(\emin -e(\mu)) } =c_4+g_4(\emin - e(\mu)),
\end{aligned}
\end{equation}
where $c_3,c_4>0$ are explicit constants, $g_3(z),g_4(z)\to0$ as $z\to0,$ and $n=1$ or $n=2$ depending on whether $\sum_x\hat v(x)$ is nonzero or zero. Unlike the one dimensional case, \eqref{eq_for_eigen_case_d2} is not sufficient even to find the first term of the asymptotics of $E(\mu)$ and $e(\mu),$ rather it yields  only  estimates of the form
$$
\exp(-(c_3 - \epsilon)^{-1}\mu^{-n}) \le E(\mu) -\emax \le \exp(-(c_3 + \epsilon)^{-1}\mu^{-n})
$$
and
$$
\exp(-(c_4 - \epsilon)^{-1}\mu^{-n}) \le \emin - e(\mu) \le \exp(-(c_4 + \epsilon)^{-1}\mu^{-n})
$$
for small $\epsilon>0$ (see e.g., \cite[Theorem 3.4]{Simon:1976}).

We prove  \eqref{E_ning_asymp1}-\eqref{e_ning_asymp01} obtaining careful estimates for $g_i$ using the perturbation theory for a (not necessarily self-adjoint) Birman-Schwinger operator $\birman(z),$ $z\in\R\setminus[\emin,\emax]$ (see Section \ref{sec:birman_sh_princip}). The equations for eigenvalues in \eqref{eq_for_eigen_case_d1} and \eqref{eq_for_eigen_case_d2} for the case $\sum_x \hat v(x)\ne0$ is obtained employing the similar arguments to \cite{Klaus:1977}. In this case $\birman(z)$ is represented as a small perturbation of rank-one operator (Lemma \ref{lem:residual_b_sh}) that has a unique non-zero eigenvalue. However, the case $\sum_x \hat v(x)=0$ requires more delicate analysis since in this case the previous perturbation-theory arguments fail. Here we were not able to use the arguments of Klaus and instead we needed to employ the Implicit Function Theorem in Banach spaces and \cite[Lemma 3.3]{GK:1969} to prove the existence of a unique positive and a unique negative eigenvalues of $\birman(z)$ which blows up if $z\in\R\setminus [\emin,\emax]$ approaches to $[\emin,\emax].$ In view of the Birman-Schwinger principle (Lemma \ref{lem:b_sh_principle}) this allows to establish the uniqueness of the eigenvalue of $H_\mu$ provided $\mu$ is small enough (Corollary \ref{cor:existence_for small_mu}).

Naturally, to get the further terms of the asymptotics of $E(\mu)$ and $e(\mu)$ one needs a further condition on the regularity of $\dispersion$ and decay of $\hat v.$ In the case with analytic $\dispersion$ and exponentially decaying  $\hat v$ one can even obtain convergent expansions as in the continuous setting \cite{KS:1980}; such an expansion for $E(\mu)$ has been obtained, for instance, in \cite{LKL:2012.tmf} in the discrete Laplacian case with zero-range non-positive perturbation.

The present paper  is organized as follows. In Section \ref{sec:birman_sh_princip} the main technical tool -- the Birman-Schwinger operator is introduced  and some of its properties are studied. The main results are proven in Section \ref{sec:proof_of_main_results}. Finally, in Appendix we obtain an asymptotics of a parametrical integral which is frequently used throughout the paper.

\subsection*{Acknowledgments}
Sh. Kholmatov acknowledges support from the Austrian Science Fund
(FWF) project M~2571-N32. S.N. Lakaev acknowledges support from the  the Foundation for Basic Research of the Republic of
Uzbekistan (Grant No.OT-F4-66).

\section{Birman-Schwinger operator and some of its properties}\label{sec:birman_sh_princip}

Given $z\in \C\setminus [\emin,\emax],$ let
$$
\birman(z):L^2(\T^d)\to L^2(\T^d),\qquad \birman(z):= {\bf s}\,\sqrt{|\potential|}(z-\hamiltonian_0)^{-1}\sqrt{|\potential|}
$$
be the Birman-Schwinger operator  associated to $\hamiltonian_\mu,$ where ${\bf s}$ is the sign of $\potential.$ By the uniqueness of the polar decomposition,
$$
|\birman(z)| =
\begin{cases}
\sqrt{|\potential| } (z - \hamiltonian_0)^{-1}\sqrt{|\potential|} & \text{for $z>\emax,$} \\
\sqrt{|\potential| } (\hamiltonian_0 - z)^{-1}\sqrt{|\potential|} & \text{for $z<\emin.$}
\end{cases}
$$
Since $\alpha_x(p):=e^{i x\cdot p},$ $x\in\Z^d,$ is the eigenvector of $\potential$  associated to $\hat v(x),$
\begin{equation}\label{trace_modul_birman}
\Tr (|\birman(z)|) = \sum\limits_{x\in\Z^d} (|\birman(z)|\alpha_x,\alpha_x) =
|a(z)|\sum\limits_{x\in\Z^d} |\hat v(x)|,\quad z\in\R\setminus[\emin,\emax],
\end{equation}
and
\begin{equation*}
\Tr (\birman(z)) =  \sum\limits_{x\in\Z^d} (\birman(z)\alpha_x,\alpha_x) =   
a(z)\sum\limits_{x\in\Z^d}  \hat v(x),\quad z\in\C\setminus[\emin,\emax],
\end{equation*}
where
$$
a(z):=\int_{\T^d} \frac{\d p}{z - \dispersion(p)}.
$$
For shortness write
\begin{equation}\label{v12_modulv12}
v^{1/2} := \cF(\sign(\hat v) \sqrt{|\hat v|})\qquad \text{and}\qquad |v|^{1/2}: = \cF(\sqrt{|\hat v|}).
\end{equation}
Then $\birman(z)$ is the integral operator with the kernel
$$
B(p,q;z) := \int_{\T^d} \frac{v^{1/2}(p-t) |v|^{1/2}(t-q)\,\d t}{z  - \dispersion(t)},\qquad p,q\in\T^d.
$$
Note that for any $z\in \C\setminus [\emin,\emax],$
$$
\|B(\cdot,\cdot;z)\|_{L^\infty((\T^d)^2)  } \le \frac{ 1 }{{\rm dist}(z,[\emin,\emax])}\,\sum\limits_{x\in\Z^d}  |\hat v(x)|.
$$
Therefore, $\birman(\cdot)$ is Hilbert-Schmidt.

\begin{lemma}[\textbf{Birman-Schwinger principle}]\label{lem:b_sh_principle}
For any  $\mu\ge0$ and $z\in\R\setminus[\emin,\emax],$
\begin{equation}\label{birman_sh_princip}
\dim \Ker (\hamiltonian_\mu - z) = \dim\Ker( 1 - \mu\birman(z) ).
\end{equation}
Moreover,
\begin{equation}\label{real_spectrum}
\sigma(\birman(z))\subset\R,\qquad z\in\R\setminus[\emin,\emax].
\end{equation}
\end{lemma}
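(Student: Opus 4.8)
The plan is to prove the two assertions in turn, starting with the Birman--Schwinger identity \eqref{birman_sh_princip} and then deducing the reality of the spectrum \eqref{real_spectrum}.

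\textbf{Step 1: The dimension identity.} Fix $z\in\R\setminus[\emin,\emax]$. Since $z\notin\sigma(\hamiltonian_0)=[\emin,\emax]$, the resolvent $(z-\hamiltonian_0)^{-1}$ is a bounded, everywhere-defined operator. The plan is to exhibit an explicit linear isomorphism between $\Ker(\hamiltonian_\mu-z)$ and $\Ker(1-\mu\birman(z))$. Given $\psi\in\Ker(\hamiltonian_\mu-z)$, i.e.\ $(\hamiltonian_0-z)\psi = -\mu\potential\psi = -\mu\,{\bf s}|\potential|\psi$, write $|\potential|=\sqrt{|\potential|}\sqrt{|\potential|}$ and set $\varphi := \sqrt{|\potential|}\,\psi$. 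Multiplying $(\hamiltonian_0-z)\psi=-\mu\,{\bf s}\sqrt{|\potential|}\varphi$ by $-\sqrt{|\potential|}(z-\hamiltonian_0)^{-1}$ gives $\varphi = \mu\,{\bf s}\sqrt{|\potential|}(z-\hamiltonian_0)^{-1}\sqrt{|\potential|}\varphi = \mu\birman(z)\varphi$, so $\varphi\in\Ker(1-\mu\birman(z))$. Conversely, given $\varphi\in\Ker(1-\mu\birman(z))$, set $\psi := -\mu(z-\hamiltonian_0)^{-1}{\bf s}\sqrt{|\potential|}\varphi$; one checks directly that $(\hamiltonian_\mu-z)\psi = 0$ and that the two maps are mutually inverse on the respective kernels (using $\varphi=\mu\birman(z)\varphi$ to recover $\sqrt{|\potential|}\psi=\varphi$, and hence $|\potential|\psi = \sqrt{|\potential|}\varphi$). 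The only point requiring a word of care is injectivity: if $\sqrt{|\potential|}\psi=0$ for $\psi\in\Ker(\hamiltonian_\mu-z)$, then $\potential\psi=0$, so $(\hamiltonian_0-z)\psi=0$, forcing $\psi=0$ since $z\notin\sigma(\hamiltonian_0)$; similarly the map in the other direction is injective because $\varphi=\mu\birman(z)\varphi$ recovers $\varphi$ from $\psi$. This establishes \eqref{birman_sh_princip}.

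\textbf{Step 2: Reality of the spectrum.} For $z\in\R\setminus[\emin,\emax]$ the operator $\hamiltonian_0-z$ is self-adjoint with spectrum bounded away from $0$ of a definite sign: $\hamiltonian_0 - z > 0$ if $z<\emin$ and $z-\hamiltonian_0>0$ if $z>\emax$. Hence $\varepsilon(z)(z-\hamiltonian_0)^{-1}$ is a bounded positive self-adjoint operator, where $\varepsilon(z)=\sign(z-\emax)=\pm1$, and it has a bounded positive self-adjoint square root $R(z):=\big(\varepsilon(z)(z-\hamiltonian_0)^{-1}\big)^{1/2}$. Then
\begin{equation*}
\varepsilon(z)\,\birman(z) = {\bf s}\sqrt{|\potential|}\,R(z)^2\,\sqrt{|\potential|} = \big({\bf s}\sqrt{|\potential|}R(z)\big)\big(R(z)\sqrt{|\potential|}\big),
\end{equation*}
and this has the same non-zero spectrum as the reversed product $\big(R(z)\sqrt{|\potential|}\big)\big({\bf s}\sqrt{|\potential|}R(z)\big) = R(z)\,\potential\,R(z)$, which is self-adjoint (since $\potential$ is self-adjoint and $R(z)$ is self-adjoint) and therefore has real spectrum. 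Consequently $\sigma(\varepsilon(z)\birman(z))\setminus\{0\}\subset\R$, and since $\varepsilon(z)=\pm1$ we get $\sigma(\birman(z))\setminus\{0\}\subset\R$; as $0$ is real this yields \eqref{real_spectrum}. (Equivalently, one may invoke that $\birman(z)$ is Hilbert--Schmidt, hence has purely discrete spectrum accumulating only at $0$, and apply the elementary fact that $AB$ and $BA$ share non-zero eigenvalues with equal algebraic multiplicities.)

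\textbf{Main obstacle.} Neither step is deep; the technical care goes into Step 1, namely verifying that the two maps $\psi\mapsto\sqrt{|\potential|}\psi$ and $\varphi\mapsto -\mu(z-\hamiltonian_0)^{-1}{\bf s}\sqrt{|\potential|}\varphi$ are genuinely inverse bijections \emph{between the kernels} and not merely between larger spaces --- in particular that no kernel vector is lost under $\sqrt{|\potential|}$, which is where the hypothesis $z\notin[\emin,\emax]$ is used. For Step 2 the only subtlety is that $\birman(z)$ need not be self-adjoint (the sign ${\bf s}$ and the fact that $|\potential|^{1/2}$ does not commute with the resolvent), so one cannot argue directly; the device is to pass to the similar self-adjoint operator $R(z)\potential R(z)$ via the $AB\leftrightarrow BA$ spectral identity.
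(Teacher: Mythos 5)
Your proposal is correct in substance, and on both halves it proceeds differently from the paper. For the dimension identity \eqref{birman_sh_princip} the paper offers no argument at all, only a pointer to \cite{BPL:2018}, so your explicit pair of mutually inverse maps between $\Ker(\hamiltonian_\mu-z)$ and the kernel of $1-\mu\birman(z)$ is an actual proof where the paper has a citation. For \eqref{real_spectrum} the paper argues on eigenvectors: for a normalized $f_\lambda$ with $\birman(z)f_\lambda=\lambda f_\lambda$ it shows $({\bf s}f_\lambda,f_\lambda)\ne0$ and exhibits $\lambda$ as the quotient of the real quantities $(|\birman(z)|f_\lambda,f_\lambda)$ and $({\bf s}f_\lambda,f_\lambda)$. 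Your route --- passing to the self-adjoint operator $R(z)\potential R(z)$ and using that $AB$ and $BA$ have the same non-zero spectrum --- is arguably cleaner: it needs no case distinction at $\lambda=0$ and no discussion of whether a denominator vanishes. Both arguments exploit the same structural fact, namely that $\birman(z)$ is a self-adjoint partial isometry times a non-negative operator.

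One algebraic step in your Step~1 must be repaired. With $\varphi:=\sqrt{|\potential|}\,\psi$, multiplying $(\hamiltonian_0-z)\psi=-\mu\,{\bf s}\sqrt{|\potential|}\varphi$ on the left by $-\sqrt{|\potential|}(z-\hamiltonian_0)^{-1}$ gives
\begin{equation*}
\varphi=\mu\,\sqrt{|\potential|}(z-\hamiltonian_0)^{-1}{\bf s}\sqrt{|\potential|}\,\varphi ,
\end{equation*}
and ${\bf s}$ cannot be slid to the front: it commutes with $\sqrt{|\potential|}$ (both come from the functional calculus of the self-adjoint $\potential$) but not with the resolvent $(z-\hamiltonian_0)^{-1}$. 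As written you have shown $\varphi\in\Ker(1-\mu\birman(z)^{*})$, not $\Ker(1-\mu\birman(z))$, so the displayed identification with $\mu\birman(z)\varphi$ is false. The repair is one line: either take $\varphi:={\bf s}\sqrt{|\potential|}\,\psi$ and apply ${\bf s}\sqrt{|\potential|}(z-\hamiltonian_0)^{-1}$ to $\psi=\mu(z-\hamiltonian_0)^{-1}\sqrt{|\potential|}\,{\bf s}\sqrt{|\potential|}\,\psi$, which yields exactly $\varphi=\mu\birman(z)\varphi$ (injectivity persists because $\sqrt{|\potential|}\varphi=\potential\psi$, so $\varphi=0$ forces $(\hamiltonian_0-z)\psi=0$ and hence $\psi=0$), with inverse map $\varphi\mapsto\mu(z-\hamiltonian_0)^{-1}\sqrt{|\potential|}\varphi$; or keep your $\varphi$ and invoke $\dim\Ker(1-\mu K)=\dim\Ker(1-\mu K^{*})$ for compact $K$. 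With that adjustment the proof is complete.
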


\begin{proof}
The equality \eqref{birman_sh_princip} is well-known and can be done following the proof of for instance \cite[Lemma 2.1]{BPL:2018}. To prove \eqref{real_spectrum}, we choose any $\lambda\in\sigma(\birman(z)).$ Since $\birman(z)$ is compact, either $\lambda=0$ or $\lambda\in\C\setminus\{0\}$ is an eigenvalue of finite multiplicity. Let $f_\lambda\in L^2(\T^d)$ be a normalized eigenfunction. Then
$(|\birman(z)|f_\lambda,f_\lambda) = \lambda ({\bf s}f_\lambda,f_\lambda).$ Note that if $({\bf s} f_\lambda,f_\lambda)=0,$ then
$$
(|\birman(z)|f_\lambda,f_\lambda) = \|\sqrt{|\birman(z)|} \,f_\lambda\|^2 =0,
$$
and therefore,
$\lambda f_\lambda = {\bf s}|\birman(z)|\, f_\lambda=0,$ i.e., $f_\lambda =0.$
Hence, $({\bf s}f_\lambda,f_\lambda)\ne0.$ Since both ${\bf s}$ and $|\birman(z)|$ are self-adjoint, it follows that $\lambda=\frac{(|\birman(z)|f_\lambda,f_\lambda)}{({\bf s}f_\lambda,f_\lambda)}\in\R.$
\end{proof}

Further we study $\birman(z)$ only  for $z>\emax;$ all results straightforwardly apply to the case $z<\emin$ considering $-\hamiltonian_0 + \mu \potential.$

We start by studying of the largest eigenvalue of $\birman(z).$

\begin{lemma}\label{lem:properties_of_lambda}
The function
\begin{equation}\label{lambdaz}
z\in(\emax,+\infty)\mapsto \lambda(z) := \sup\sigma (\birman(z))
\end{equation}
is continuous and non-increasing. Moreover, $\lambda(\cdot)$ is strictly decreasing in $\{\lambda>0\}$ and
\begin{equation}\label{birman_estimate}
0\le \lambda(z) \le \|\birman(z)\|.
\end{equation}
Finally $\lambda(z_0)=0$ for some $z_0>\emax$ if and only if $\hat v\le 0.$
\end{lemma}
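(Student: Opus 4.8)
The plan is to pass from $\birman(z)$ to its symmetrized version
$$
\tilde{\birman}(z):=(z-\hamiltonian_0)^{-1/2}\,\potential\,(z-\hamiltonian_0)^{-1/2},\qquad z>\emax,
$$
which is self-adjoint and compact, and to observe that it has the same spectrum as $\birman(z)$. Indeed, put $K:=\sqrt{|\potential|}\,(z-\hamiltonian_0)^{-1/2}$, so $K^*=(z-\hamiltonian_0)^{-1/2}\sqrt{|\potential|}$; since ${\bf s}$ commutes with $\sqrt{|\potential|}$ and $\sqrt{|\potential|}\,{\bf s}\,\sqrt{|\potential|}=\potential$, one has $\birman(z)=({\bf s}K)\,K^*$ and $\tilde{\birman}(z)=K^*\,({\bf s}K)$, so the two operators share the same non-zero spectrum by the identity $\sigma(XY)\setminus\{0\}=\sigma(YX)\setminus\{0\}$, and $0$ lies in both spectra since both are compact on the infinite-dimensional space $L^2(\T^d)$. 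Hence $\lambda(z)=\max\sigma(\tilde{\birman}(z))\ge0$, which is the lower bound in \eqref{birman_estimate}, and $\lambda(z)\le\|\birman(z)\|$ follows since $\lambda(z)$ is bounded by the spectral radius of $\birman(z)$.

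The second ingredient is a Rayleigh-quotient formula. For $z>\emax$ the operators $(z-\hamiltonian_0)^{\pm1/2}$ are bounded and boundedly invertible (recall $\hamiltonian_0$ is bounded with $\sigma(\hamiltonian_0)=[\emin,\emax]$), so substituting $\phi=(z-\hamiltonian_0)^{-1/2}\psi$ in $\lambda(z)=\max_{\|\psi\|=1}\big(\tilde{\birman}(z)\psi,\psi\big)$ gives
$$
\lambda(z)=\sup_{\phi\in L^2(\T^d),\ \phi\ne0}\ \frac{(\potential\phi,\phi)}{\big((z-\hamiltonian_0)\phi,\phi\big)},
$$
the supremum being attained whenever $\lambda(z)>0$. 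Monotonicity and strict monotonicity now fall out together: for $\emax<z_1<z_2$, if $\lambda(z_2)=0$ then $\lambda(z_1)\ge0=\lambda(z_2)$; if $\lambda(z_2)>0$, pick a maximizer $\phi_2$, so $(\potential\phi_2,\phi_2)=\lambda(z_2)\big((z_2-\hamiltonian_0)\phi_2,\phi_2\big)>0$, and since $z\mapsto\big((z-\hamiltonian_0)\phi_2,\phi_2\big)=z\|\phi_2\|^2-(\hamiltonian_0\phi_2,\phi_2)$ is strictly increasing,
$$
\lambda(z_1)\ \ge\ \frac{(\potential\phi_2,\phi_2)}{\big((z_1-\hamiltonian_0)\phi_2,\phi_2\big)}\ >\ \frac{(\potential\phi_2,\phi_2)}{\big((z_2-\hamiltonian_0)\phi_2,\phi_2\big)}\ =\ \lambda(z_2).
$$
Thus $\lambda$ is non-increasing on $(\emax,+\infty)$ and strictly decreasing on the subinterval where it is positive.

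For continuity I would again work through $\tilde{\birman}$: the map $z\mapsto(z-\hamiltonian_0)^{-1/2}$, i.e.\ multiplication by $(z-\dispersion(\cdot))^{-1/2}$, is norm-continuous on $(\emax,+\infty)$ (an elementary estimate gives $\sup_{p\in\T^d}\big|(z-\dispersion(p))^{-1/2}-(z'-\dispersion(p))^{-1/2}\big|\le\tfrac12\big(\min(z,z')-\emax\big)^{-3/2}|z-z'|$), hence $z\mapsto\tilde{\birman}(z)$ is norm-continuous, and since $|\max\sigma(A)-\max\sigma(B)|\le\|A-B\|$ for bounded self-adjoint $A,B$, the function $\lambda=\max\sigma(\tilde{\birman}(\cdot))$ is continuous. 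For the last assertion: if $\hat v\le0$ then $\potential\le0$, so the numerator of the Rayleigh quotient is non-positive and $\lambda(z)\le0$; together with $\lambda(z)\ge0$ this forces $\lambda(z)=0$ for \emph{every} $z>\emax$. Conversely, if $\hat v(x_0)>0$ for some $x_0\in\Z^d$, testing the quotient with $\phi=\alpha_{x_0}$ (so that $\potential\alpha_{x_0}=\hat v(x_0)\alpha_{x_0}$, $\|\alpha_{x_0}\|=1$ and $(\hamiltonian_0\alpha_{x_0},\alpha_{x_0})=\int_{\T^d}\dispersion\,\d p$) gives $\lambda(z)\ge\frac{\hat v(x_0)}{z-\int_{\T^d}\dispersion\,\d p}>0$ for all $z>\emax$, the denominator being $\ge z-\emax>0$; hence $\lambda$ never vanishes.

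I expect the only genuinely delicate point to be the reduction from the non-self-adjoint $\birman(z)$ to the self-adjoint $\tilde{\birman}(z)$ --- the identity $\sigma(\birman(z))=\sigma(\tilde{\birman}(z))$ and the algebraic bookkeeping $\sqrt{|\potential|}\,{\bf s}\,\sqrt{|\potential|}=\potential$; once this is in place, together with the variational formula, every assertion reduces to a one-variable monotonicity argument. A minor verification, needed for the substitution in the Rayleigh quotient, is that $\phi\mapsto(z-\hamiltonian_0)^{1/2}\phi$ is a bijection of $L^2(\T^d)$, which is immediate since $\hamiltonian_0$ is bounded and $z>\emax$.
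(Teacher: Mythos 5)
Your proof is correct, but it follows a genuinely different route from the paper's. You symmetrize: writing $\birman(z)=({\bf s}K)K^*$ with $K=\sqrt{|\potential|}(z-\hamiltonian_0)^{-1/2}$, you pass to $K^*({\bf s}K)=(z-\hamiltonian_0)^{-1/2}\potential(z-\hamiltonian_0)^{-1/2}$, which is self-adjoint and has the same spectrum by $\sigma(XY)\setminus\{0\}=\sigma(YX)\setminus\{0\}$ (the bookkeeping $\sqrt{|\potential|}\,{\bf s}\,\sqrt{|\potential|}=\potential$ is fine, since ${\bf s}$ and $\sqrt{|\potential|}$ are simultaneously diagonal in position space). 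From there the Rayleigh quotient $\lambda(z)=\sup_{\phi\ne0}(\potential\phi,\phi)/((z-\hamiltonian_0)\phi,\phi)$ delivers everything by one-variable monotonicity of the denominator, continuity follows from the Lipschitz bound $|\max\sigma(A)-\max\sigma(B)|\le\|A-B\|$ for self-adjoint operators, and the "only if" direction of the last claim comes from testing with $\alpha_{x_0}$. The paper instead works with the non-self-adjoint $\birman(z)$ throughout: it proves continuity via Riesz spectral projections and a contradiction argument, proves strict monotonicity by Kato's analytic perturbation theory for the eigenvalue branches $\theta_i(z)$ together with a Feynman--Hellmann-type computation giving $\theta_i'(z)=-\|(z-\hamiltonian_0)^{-1}\sqrt{|\potential|}\phi_i\|^2/({\bf s}\phi_i,\phi_i)<0$, and proves the last assertion indirectly through the Birman--Schwinger principle by driving an eigenvalue of $\hamiltonian_\mu$ above $z_0+1$ with a large coupling constant. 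Your argument is shorter, more elementary, and as a byproduct re-proves the reality of $\sigma(\birman(z))$ from Lemma \ref{lem:b_sh_principle}; what it does not produce is the explicit derivative formula for the eigenvalue branches, which, however, the paper does not use elsewhere. Both arguments rely only on facts already established before the lemma, so yours is a legitimate substitute.
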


\begin{proof}
Since  $0\in\sigma(\birman(z))$ and the spectral radius of $\birman(z)$ does not exceed $\|\birman(z)\| = \| |\birman(z)| \|,$  \eqref{birman_estimate} follows.
Let us show that $\lambda(\cdot)$ is continuous. Fix any $z_o>\emax.$ If there exists $z_k\to z_o$ such that
$
c:=\lim\limits_{k\to\infty}\lambda(z_k) <  \lambda(z_o),
$
then $\lambda(z_o)>0$ so that it is an isolated eigenvalue of $\birman(z_o).$ Then for the spectral projection
$$
P(z_k) := \frac{1}{2\pi i} \int_{|\xi - \lambda(z_o)| = \delta_0} \big(\birman(z_k) - \xi\big)^{-1}d\xi,\qquad \delta_0:= \frac{\lambda(z_o) -c}{8}>0,
$$
is the spectral projection associated to $\lambda(z_k).$ By the definition of $c,$ $\lambda(z_k)\notin \{\xi\in\C:\,|\xi - \lambda(z_o)|<\delta_0\}$ for all large $k$ so that by the definition of $\lambda(z_k),$ $P(z_k)=0.$ However, by the norm continuity of $\birman(\cdot),$
$$
0\ne \frac{1}{2\pi i} \int_{|\xi - c|=\delta_0} \big(\birman(z_o) - \xi\big)^{-1}d\xi =\text{s-}\lim\limits_{k\to\infty} P(z_k) =0,
$$
a contradiction, where s-$\lim$ is the strong limit.  Analogous contradiction is obtained assuming the existence of $z_k\to z_o$ such that
$\lim\limits_{k\to\infty}\lambda(z_k) > \lambda(z_o).
$
Thus, $\lambda(\cdot)$ is continuous.

Now we prove that $\lambda(\cdot)$ is non-increasing. It suffices to prove that $\lambda(\cdot)$ strictly decreases in $\{\lambda>0\}.$ Fix any $z_o\in\{\lambda>0\}$ and let $\lambda(z_o)$ be an eigenvalue of $\birman(z_o)$ of multiplicity $n_o.$ Since $\birman(\cdot)$ is analytic and compact, by perturbation theory (see e.g. \cite[Chap. II, Par. 4]{Kato:1995}), there exists $\epsilon>0$ and $n_o$ differentiable  functions $\theta_1,\ldots,\theta_{n_o}:(z_o - \epsilon,\epsilon)\to(0,+\infty)$ with $\theta_i(z_o)=\lambda(z_o)$ for $i=1,\ldots,n_o$  and $\theta_i(z)$ is an eigenvalue of $\birman(z)$ with associated differentiable eigenvectors $\phi_i(z)\in L^2(\T^d).$ Then for any $i=1,\ldots,n_o$ and $z\in(z_o-\epsilon,z_o+\epsilon)$
from the eqaulity
\begin{equation}\label{and_believe_me}
\birman(z)\phi_i(z) =\theta_i(z)\phi_i(z),
\end{equation}
we get
$$
({\bf s}\phi_i(z),\phi_i(z)) = \frac{(|\birman(z_o)|\phi_i(z),\phi_i(z))}{\theta_i(z)} >0.
$$
Moreover, differentiating \eqref{and_believe_me} and scalar multiplying by ${\bf s}\phi_i(z)$ we obtain
$$
\begin{aligned}
- (\sqrt{|\potential|}(z-\hamiltonian_0)^{-2}\sqrt{|\potential|}\phi_i(z),\phi_i(z)) + (\phi_i'(z),[|\birman(z)| - \theta_i(z){\bf s}]\phi_i(z))\\
=  \theta_i'(z)({\bf s}\phi_i(z),\phi_i(z)).
\end{aligned}
$$
By \eqref{and_believe_me}, $[|\birman(z)| -  \theta_i(z){\bf s}]\phi(z)=0,$ hence,
\begin{equation*}
 \theta_i'(z) = - \frac{\|(z-\hamiltonian_0)^{-1}\sqrt{|\potential|}\phi_i(z)\|^2}{({\bf s}\phi_i(z),\phi_i(z))}.
\end{equation*}
Since $ \theta_i(z)\phi(z)\ne 0,$ this implies $ \theta_i'(z)<0$ in $(z_o-\epsilon,z_o+\epsilon).$  Hence, each $\theta_i$ is strictly decreasing. Since $\lambda(z)=\max\limits_{1\le i\le n_o} \theta_i(z_o),$ $\lambda(\cdot)$ also strictly decreases in $(z_o-\epsilon,z_o+\epsilon).$

Clearly, if $\hat v\le 0,$ then $\birman(z)\le 0$ for any $z>\emax.$ Thus, $\lambda\equiv0.$ Let us show that $\potential^+\ne 0,$ then $\lambda(z)>0$ for any $z>\emax.$ Indeed, if $\lambda(z_o)=0$ for some $z_o>\emax,$ then by monotonicity, $\lambda\equiv0$ in $(z_o,+\infty).$ Thus, for any $z>z_o$ and $\mu>0$
$$
\Ker (\hamiltonian_\mu - z)=\Ker(1 - \mu\birman(z)) =\{0\}.
$$
However, since $\potential^+\ne0,$ there exists a normalized $f_0\in L^2(\T^d)$ such that $(\potential f_0,f_0)>0.$ Thus, if we choose $\mu>\frac{z_o +1-\emin}{(\potential f_0,f_0)},$ then
$$
(\hamiltonian_\mu f_0,f_0) \ge z_o+1,
$$
i.e., by the self-adjointness of $\hamiltonian_\mu$ and \eqref{essential_spectrum}, $\Ker (\hamiltonian_\mu - z_\mu) \ne\{0\}$ for some $z_\mu>z_o+1,$ a contradiction.
\end{proof}

Further, without loss of generality we assume that the set $\{\hat v>0\}$ is non-empty so that by Lemma \ref{lem:properties_of_lambda}, $\lambda(z)>0$ for any $z>\emax.$ The following lemma shows that as $z\to\emax,$ $\birman(z)$ can be represented as a small perturbation of a rank-one operator.

\begin{lemma}\label{lem:residual_b_sh}
Assume Hypothesis \ref{hyp:maina}. Let
\begin{equation*}
Q f(p) = v^{1/2}(p-  \pmax   )\,\int_{\T^d} |v|^{1/2}(  \pmax   -q)\,f(q) \,\d q
\end{equation*}
be the rank-one projection and
\begin{equation}\label{qora_kozli_sanamjon}
Q_1(z):= \birman(z) - a(z)\,Q.
\end{equation}
%
\begin{itemize}
\item[(a)] If $d=1,$ then there exists $C_1:=C_1(\dispersion,\hat v,\gamma)>0$ such that
\begin{equation}\label{limit_q1_d1}
\|Q_1(z)\|  \le  C_1\,a(z)^{\frac{1-\gamma}{2}}
\end{equation}
for any $z\in(\emax,\emax +1).$

\item[(b)] If $d=2,$ then there exists $C_2:=C_2(\dispersion,\hat v,\gamma)>0$ such that
\begin{equation}\label{limit_q1_d2}
\|Q_1(z)\|  \le C_2
\end{equation}
for any $z>\emax$  and there exists the operator-norm limit
\begin{equation}\label{norm_converge_q1}
Q_1(\emax):=\lim\limits_{z\to+\infty} Q_1(z).
\end{equation}
Moreover, there exists $C_3:=C_3(\dispersion,\hat v,\gamma)>0$ such that
\begin{equation}\label{norm_converge_q10}
\|Q_1(z) - Q_1(\emax)\| \le C_3 (z- \emax)^{\gamma/2} \end{equation}
for any $z\in(\emax,\emax+1).$
\end{itemize}

\end{lemma}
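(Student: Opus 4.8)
The plan is to work in the orthonormal basis $\alpha_x(p)=e^{ix\cdot p}$, $x\in\Z^d$, of $L^2(\T^d)$, which diagonalizes $\potential$ (and $\sqrt{|\potential|}$, with $\sqrt{|\potential|}\alpha_x=\sqrt{|\hat v(x)|}\,\alpha_x$), and to reduce both $\birman(z)$ and $Q$ to their matrix entries. Writing $w:=\sign(\hat v)\sqrt{|\hat v|}$, so that $v^{1/2}=\cF w$ and $|v|^{1/2}=\cF|w|$, a computation along the lines of \eqref{trace_modul_birman} gives, for $z>\emax$,
\[
(\birman(z)\alpha_y,\alpha_x)=w(x)|w(y)|\,a_{y-x}(z),\qquad (Q\alpha_y,\alpha_x)=w(x)|w(y)|\,e^{i(y-x)\cdot\pmax},
\]
where $a_k(z):=\int_{\T^d}\tfrac{e^{ik\cdot p}}{z-\dispersion(p)}\,\d p$ (so $a_0(z)=a(z)$), whence
\[
(Q_1(z)\alpha_y,\alpha_x)=w(x)|w(y)|\,\varphi_{y-x}(z),\qquad \varphi_k(z):=\int_{\T^d}\frac{e^{ik\cdot p}-e^{ik\cdot\pmax}}{z-\dispersion(p)}\,\d p .
\]
Since $\birman(\cdot)$ is Hilbert--Schmidt and $|w(x)|^2=|\hat v(x)|$, every estimate below will come from bounding $\|Q_1(z)\|\le\|Q_1(z)\|_{HS}$, i.e.\ from controlling $\sum_{x,y}|\hat v(x)||\hat v(y)|\,|\varphi_{y-x}(z)|^2$ (and the analogous sum with $Q_1(\emax)$ subtracted) by Hypothesis~\ref{hyp:maina}(b); so the whole game is the scalar quantity $\varphi_k(z)$.

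The scalar input is the elementary bound $|e^{i\xi}-e^{i\eta}|\le\min\{2,|\xi-\eta|\}\le 2^{1-\beta}|\xi-\eta|^{\beta}$, $\beta\in[0,1]$, which gives $|\varphi_k(z)|\le 2^{1-\beta}|k|^{\beta}I_\beta(z)$ with $I_\beta(z):=\int_{\T^d}\tfrac{|p-\pmax|^{\beta}}{z-\dispersion(p)}\,\d p$, together with the behaviour of $I_\beta$ near $\emax$. The latter follows from the non-degeneracy of $\dispersion$ at $\pmax$ (so $\emax-\dispersion(p)$ is comparable to $|p-\pmax|^{2}$ near $\pmax$) exactly as for the parametrical integral in the Appendix: for $d=1$ and $\beta\in[0,1)$ one has $I_\beta(z)\le C\,a(z)^{1-\beta}$ on $(\emax,\emax+1)$, whereas for $d=2$ and $\beta>0$ one has $I_\beta(z)\le C_\beta$ on $(\emax,+\infty)$ (recall $a(z)\to\infty$ as $z\downarrow\emax$ in both dimensions, of order $(z-\emax)^{-1/2}$ resp.\ $|\ln(z-\emax)|$). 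For part (a) choose $\beta:=\tfrac{1+\gamma}{2}\in(0,1)$, so $2\beta=2-d+\gamma$; then $|\varphi_k(z)|^{2}\le C|k|^{1+\gamma}a(z)^{1-\gamma}$ and, using $|y-x|^{1+\gamma}\le 2^{\gamma}(|x|^{1+\gamma}+|y|^{1+\gamma})$,
\[
\|Q_1(z)\|^{2}\le\sum_{x,y}|\hat v(x)||\hat v(y)|\,|\varphi_{y-x}(z)|^{2}\le C\,a(z)^{1-\gamma}\Big(\sum_{x}|x|^{1+\gamma}|\hat v(x)|\Big)\Big(\sum_{y}|\hat v(y)|\Big),
\]
both sums being finite by Hypothesis~\ref{hyp:maina}(b); this is \eqref{limit_q1_d1}. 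For the uniform bound in (b) take $\beta:=\gamma/2\in(0,1)$, so $|\varphi_k(z)|^{2}\le C|k|^{\gamma}$, and the same summation (now using subadditivity of $s\mapsto s^{\gamma}$) gives $\|Q_1(z)\|^{2}\le C\big(\sum_{x}|x|^{\gamma}|\hat v(x)|\big)\big(\sum_{y}|\hat v(y)|\big)$, i.e.\ \eqref{limit_q1_d2}.

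For the limit and the rate in $d=2$: for each fixed $k$, dominated convergence with majorant $\tfrac{|k|\,|p-\pmax|}{\emax-\dispersion(p)}$ — which is integrable on $\T^2$ precisely because near $\pmax$ it behaves like $|k|/|p-\pmax|$ and $d=2$ (this fails in $d=1$) — shows $\varphi_k(z)\to\varphi_k(\emax):=\int_{\T^2}\tfrac{e^{ik\cdot p}-e^{ik\cdot\pmax}}{\emax-\dispersion(p)}\,\d p$ as $z\downarrow\emax$. Define $Q_1(\emax)$ by the matrix $w(x)|w(y)|\varphi_{y-x}(\emax)$: it is Hilbert--Schmidt since $|\varphi_k(\emax)|\le C|k|^{\gamma/2}$ (same proof, with $I_{\gamma/2}(\emax)<\infty$ because $d=2$), and since every summand of $\|Q_1(z)-Q_1(\emax)\|_{HS}^{2}=\sum_{x,y}|\hat v(x)||\hat v(y)|\,|\varphi_{y-x}(z)-\varphi_{y-x}(\emax)|^{2}$ tends to $0$ and is dominated by the summable $C'|\hat v(x)||\hat v(y)|\,|y-x|^{\gamma}$, dominated convergence gives $\|Q_1(z)-Q_1(\emax)\|\le\|Q_1(z)-Q_1(\emax)\|_{HS}\to0$, which is \eqref{norm_converge_q1}. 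For the rate, the resolvent identity yields
\[
\varphi_k(z)-\varphi_k(\emax)=-(z-\emax)\int_{\T^2}\frac{e^{ik\cdot p}-e^{ik\cdot\pmax}}{(z-\dispersion(p))(\emax-\dispersion(p))}\,\d p,
\]
and bounding the numerator by $2^{1-\beta}(|k|\,|p-\pmax|)^{\beta}$ and the remaining parametrical integral by $\int_{\T^2}\tfrac{|p-\pmax|^{\beta}}{(z-\dispersion(p))(\emax-\dispersion(p))}\,\d p\le C_\beta(z-\emax)^{\beta/2-1}$ on $(\emax,\emax+1)$ ($\beta\in(0,2)$, again an Appendix-type computation) gives $|\varphi_k(z)-\varphi_k(\emax)|\le C_\beta|k|^{\beta}(z-\emax)^{\beta/2}$. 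Summing with $\beta=\gamma/2$ then yields $\|Q_1(z)-Q_1(\emax)\|\le C(z-\emax)^{\gamma/4}$, and hence \eqref{norm_converge_q10} (a fixed positive power of $z-\emax$, which is all that is used later).

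The technical heart is this last rate. Passing $|\varphi_k(z)-\varphi_k(\emax)|^{2}$ through the Hilbert--Schmidt norm forces $\beta=\gamma/2$ (by Hypothesis~\ref{hyp:maina}(b), which only guarantees $\sum_x|x|^{2\beta}|\hat v(x)|<\infty$ for $2\beta\le\gamma$), and the square root inherent in $\|\cdot\|_{HS}$ then delivers the exponent $\gamma/4$; to obtain the sharper $(z-\emax)^{\gamma/2}$ stated in \eqref{norm_converge_q10} one must estimate $\|Q_1(z)-Q_1(\emax)\|$ directly as an operator — for instance via the representation $Q_1(z)-Q_1(\emax)=-(z-\emax)\,\widetilde Q(z)$, where $\widetilde Q(z)=\lim_{z'\downarrow\emax}\big({\bf s}\sqrt{|\potential|}(z-\hamiltonian_0)^{-1}(z'-\hamiltonian_0)^{-1}\sqrt{|\potential|}-b(z,z')Q\big)$, together with a careful operator-norm control of $\widetilde Q(z)$ as $z\downarrow\emax$ that exploits the quadratic non-degeneracy of $\dispersion$ at $\pmax$ more fully. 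Finally, the whole $d=1$/$d=2$ dichotomy above — $I_\beta$ blowing up versus staying bounded, and $Q_1(\emax)$ existing only for $d=2$ — merely records whether $|p-\pmax|^{-1}$ is integrable near $\pmax$, i.e.\ whether $a(\emax)<\infty$.
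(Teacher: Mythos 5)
Your proof is correct in substance but follows a genuinely different route from the paper's. You diagonalize $\sqrt{|\potential|}$ in the exponential basis $\alpha_x$ and reduce the whole lemma to the scalar quantities $\varphi_k(z)=\int_{\T^d}\frac{e^{ik\cdot p}-e^{ik\cdot\pmax}}{z-\dispersion(p)}\,\d p$, controlling $\|Q_1(z)\|$ by its Hilbert--Schmidt norm $\big(\sum_{x,y}|\hat v(x)||\hat v(y)|\,|\varphi_{y-x}(z)|^2\big)^{1/2}$. The paper instead splits the kernel of $Q_1(z)$ into a far region $\T^d\setminus B_{r_0}(\pmax)$ (the uniformly bounded $Q_{11}$) and a near region, and further splits the near-region difference of products into two pieces $Q_{12}^1+Q_{12}^2$, estimating each by a mixed $\ell^2$/H\"older argument. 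Both proofs rest on exactly the same two ingredients --- the bound $2|\sin(k\cdot(t-\pmax)/2)|\le 2^{1-\beta}|k|^\beta|t-\pmax|^\beta$ with $2\beta=2-d+\gamma$ dictated by Hypothesis \ref{hyp:maina}(b), and the parametric integrals $T_\alpha$ of the Appendix --- so the exponents that come out are identical; your version is arguably cleaner because it avoids the near/far decomposition and the double splitting (the far-region contribution is absorbed using that $a(z)$ is bounded below on $(\emax,\emax+1)$, resp.\ that $I_\beta$ is uniformly bounded for $d=2$). Your verifications of \eqref{limit_q1_d1}, \eqref{limit_q1_d2} and \eqref{norm_converge_q1} are complete; note only that $\hat v\in\ell^1$ does follow from Hypothesis \ref{hyp:maina}(b) since $|x|^{2-d+\gamma}\ge1$ for $x\ne0$, so the Hilbert--Schmidt sums you use are legitimate.

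On \eqref{norm_converge_q10}: you obtain the exponent $\gamma/4$ and flag that $\gamma/2$ would require a genuinely sharper operator-norm argument. You are right, and in fact the paper's own computation yields the same $\gamma/4$: in the estimate for $\|(Q_{12}^1(z)-Q_{12}^1(\emax))f\|$ the substitution $r=(z-\emax)^{1/2}t$ turns $\int_0^{r_0}\frac{r^{\gamma/2-1}\,\d r}{c_1r^2+z-\emax}$ into $(z-\emax)^{\gamma/4-1}\int_0^{\infty}\frac{t^{\gamma/2-1}\,\d t}{c_1t^2+1}$, and multiplying by the prefactor $(z-\emax)$ gives $(z-\emax)^{\gamma/4}$, not the $(z-\emax)^{\gamma/2}$ asserted there. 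The obstruction is identical in both arguments: the sine bound must be applied with $\beta\le\gamma/2$ so that $\sum_x|x|^{2\beta}|\hat v(x)|<\infty$, and then $(z-\emax)J_\beta(z)$ with $J_\beta(z)\asymp(z-\emax)^{\beta/2-1}$ caps the rate at $\gamma/4$. This is harmless downstream --- every subsequent use (e.g.\ \eqref{estimate_ez_e0} and the $O((z-\emax)^{\gamma/4})$ remainders in Proposition \ref{prop:asym_eigenvalues_bsh}) only needs some positive power --- so your proof establishes everything the lemma is actually used for; the literal exponent in \eqref{norm_converge_q10} should be read as $\gamma/4$.
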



\begin{proof}
By Hypothesis \ref{hyp:maina}  there exists a ball $B_{r_0}(\pmax)$  such that
\begin{equation}\label{unique_minimum_estimate}
c_1 |p-\pmax|^2 \le \emax - \dispersion(p)  \le c_2|p-\pmax|^2,\qquad p\in B_{r_0}(\pmax),
\end{equation}
and
\begin{equation}\label{epsilon_deee}
\emax - \max\limits_{p\in\T^d\setminus B_{r_0}(\pmax)}\, \dispersion(p)\ge c_3,
\end{equation}
where $c_1,c_2,c_3>0$ and $r_0\in(0,1)$ are constants depending only on $\dispersion$.
We rewrite $Q_1(z)$ as
$$
Q_1(z) = Q_{11}(z) + Q_{12}(z),
$$
where
\begin{align*}
Q_{11}(z) f(p) :=&  \int_{\T^d} \int_{\T^d\setminus B_{r_0}(\pmax)}\frac{v^{1/2}(p-t) |v|^{1/2}(t-q)\,\d t}{z - \dispersion(t)}\,f(q)\,\d q \\
-& \int_{\T^d\setminus B_{r_0}(\pmax) } \frac{\,\d t}{z-\dispersion(t)} \,\,\int_{\T^d} v^{1/2}(p-  \pmax   ) |v|^{1/2}(  \pmax   -q)\,f(q)\,\d q
\end{align*}
and
\begin{align*}
&Q_{12}(z)f(p)\\
:=&\int_{\T^d} \int_{B_{r_0}(\pmax)}\frac{\big[v^{1/2}(p-t) |v|^{1/2}(t-q) - v^{1/2}(p-  \pmax   ) |v|^{1/2}(  \pmax   -q)\big]\,\d t}{z - \dispersion(t)}\,f(q)\,\d q.
\end{align*}
By the defintions of $v^{1/2}$ and $|v|^{1/2},$ and \eqref{epsilon_deee} \begin{equation}\label{est_Q11}
\sup\limits_{z\le \emin} \, \|Q_{11}(z)\| \le \frac{2}{c_3}\,\sum\limits_{x\in\Z^d} |\hat v(x)|.
\end{equation}
We rewrite $Q_{12}(z)$ as
$$
Q_{12}(z): = Q_{12}^1(z) +Q_{12}^2(z),
$$
where
$$
Q_{12}^1(z) f(p) := \int_{\T^d} \int_{B_{r_0}(\pmax)} \frac{\big[v^{1/2}(p-t) - v^{1/2}(p-  \pmax   )\big]|v|^{1/2}(t-q)\,\,\d t}{z - \dispersion(t)}\,f(q)\,\d q
$$
and
$$
Q_{12}^2(z)f(p) := \int_{\T^d} \int_{B_{r_0}(\pmax)} \frac{\big[|v|^{1/2}(t-q) - |v|^{1/2}(  \pmax   -q)\big]v^{1/2}(p-  \pmax   ) \,\,\d t}{z - \dispersion(t)}\,f(q)\,\d q.
$$
By \eqref{v12_modulv12} and the Fubini Theorem,
$$
\|Q_{12}^1(z)f\|_{L^2}^2 =
\sum\limits_{x\in\Z^d} |\hat v(x)| \, \left| \int_{\T^d} \left[\int_{B_{r_0}(\pmax)} \frac{1- e^{-ix\cdot (t-  \pmax   )}}{z -\dispersion(t)} \,|v|^{1/2}(t-q)\,\d t\right] f(q)\,\d q\right|^2.
$$
Then using  the H\"older inequality we estimate
\begin{align}
& \left| \int_{\T^d} \left[\int_{B_{r_0}(\pmax)} \frac{1- e^{-ix\cdot (t-  \pmax   )}}{z -\dispersion(t)} \,|v|^{1/2}(t-q)\,\d t\right] f(q)\,\d q\right|^2 \nonumber\\
\le
& \left| \int_{B_{r_0}(\pmax)} \frac{|1- e^{-ix\cdot (t-  \pmax   )}| }{z -\dispersion(t)} \left( \int_{\T^d} \big||v|^{1/2}(t-q)\big|^2\,\d q\right)^{1/2}\,\d t \left(\int_{\T^d} |f(q)|^2\,\d q\right)^{1/2}\,\d t\right|^2\nonumber\\
= & \|f\|^2 \sum\limits_{y\in\Z^d} |\hat v(y)| \,\left[ \int_{B_{r_0}(\pmax)} \frac{2 |\sin(x\cdot (t-  \pmax   )/2)| \,\d t}{z - \dispersion(t)} \right]^2,\label{shashashasha}
\end{align}
where we used
\begin{equation}\label{norm_vhalf}
\int_{\T^d}\big|v^{1/2}(t-  q)\big|^2\,\d q =\int_{\T^d}\big||v|^{1/2}(t-  q)\big|^2\,\d q=\sum\limits_{y\in\Z^d}|\hat v(y)| = \|v\|_{\ell^1}
\end{equation}
for any $t\in\T^d.$
Note that $\frac{2-d+\gamma}{2}\in(0,1),$ where $\gamma\in(0,1)$ is given in Hypothesis \ref{hyp:maina},  thus,
\begin{equation}\label{sinus_estimate}
2\big|\sin[x\cdot (t-  \pmax   )/2]\big|\le  2^{\frac{d-\gamma}{2}} |x|^{\frac{2-d+\gamma}{2}}\,|t-p^0|^{\frac{2-d+\gamma}{2}}.
\end{equation}
Hence, by \eqref{unique_minimum_estimate}
\begin{align}\label{ajskalska_d1}
\int_{B_{r_0}(\pmax)} \frac{2 \big|\sin(x\cdot (t-  \pmax   )/2)|\,\d t }{z - \dispersion(t)} \le  &
2^{\frac{d-\gamma}{2}} |x|^{\frac{2-d+\gamma}{2}}\,\int_{B_{r_0}(\pmax)} \frac{|t-p^0|^{\frac{2-d+\gamma}{2}} \,\d t}{c_1|t-\pmax|^2 + z-\emax}\nonumber \\
= & \frac{2^{\frac{d-\gamma}{2}}d\omega_d}{c_1}\,  |x|^{\frac{2-d+\gamma}{2}}\, T_{\frac{d+\gamma}{2}}\big((\tfrac{z-\emax}{c_2})^{1/2}\big),
\end{align}
where in the equality we passed to polar coordinates, $w_d$ is the volume of the unit ball in $\R^d$, i.e., $\omega_1:=2$ and $w_2:=\pi,$ and $T_\alpha$ is given in \eqref{def:t_d}.  Now if $d=1,$ then $\frac{d+\gamma}{2}\in(0,1)$ and thus, by Lemma \ref{lem:asymp_some_integral}
$$
T_{\frac{d+\gamma}{2}}\big((\tfrac{z-\emax}{c_2})^{1/2}\big) \le c_{1,\gamma}\, (z - \emax)^{\frac{\gamma-1}{4}},\qquad z>\emax.
$$
for some $c_{1,\gamma}>0.$ If $d=2,$ then $\frac{d+\gamma}{2}\in(1,2)$ and hence
$$
T_{\frac{d+\gamma}{2}}\big((\tfrac{z-\emax}{c_2})^{1/2}\big) \le \frac{2r_0^{\gamma/2}}{\gamma}, \qquad z>\emax.
$$
Hence,
\begin{equation}\label{est_Q1p1}
\|Q_{12}^1(z)\| \le
\begin{cases}
A_1\, (z-\emax)^{\frac{\gamma - 1}{4}} & \text{if $d=1,$}\\
A_2  & \text{if $d=2,$}
\end{cases}
\end{equation}
where
\begin{equation}\label{ad_coefficient}
A_d:=
\begin{cases}
\frac{c_{1,\gamma}}{c_1}\,\Big(2^{3- \gamma }\sum\limits_{y\in\Z^d} |\hat v(y)| \sum\limits_{x\in\Z^d}|x|^{2-d+ \gamma } |\hat v(x)| \Big)^{1/2} & \text{if $d=1,$}\\[1mm]
\frac{4\pi r_0^{\gamma/2}}{c_1\gamma}\,\Big(2^{2- \gamma }\sum\limits_{y\in\Z^d} |\hat v(y)| \sum\limits_{x\in\Z^d}|x|^{2-d+ \gamma } |\hat v(x)| \Big)^{1/2} & \text{if $d=2.$}
\end{cases}
\end{equation}

Now we estimate $\|Q_{12}^2(z)f\| .$ By \eqref{v12_modulv12} and the Fubini Theorem,
$$
\begin{aligned}
& |Q_{12}^2(z)f(p)|\\
\le  & |v^{1/2}(p-  \pmax   )|\sum\limits_{x\in\Z^d} \sqrt{ |\hat v(x)|}\,\left|\int_{B_{r_0}(\pmax)} \frac{1- e^{ix\cdot (t-  \pmax   )}}{z - \dispersion(t) } \,e^{-ix\cdot t}\,\,\d t \,\int_{\T^d} e^{-ix\cdot q} f(q)\,\d q\right| \\
\le & |v^{1/2}(p-  \pmax   )|\sum\limits_{x\in\Z^d} \sqrt{|\hat v(x)|}\,\, |\hat f(x)|\,\int_{B_{r_0}(\pmax)} \frac{|1- e^{ix\cdot (t-  \pmax   )}|\,\d t}{z - \dispersion(t)}\\
 = & |v^{1/2}(p-  \pmax   )|\sum\limits_{x\in\Z^d} \sqrt{|\hat v(x)|}\,\,|\hat f(x)|\,\int_{B_{r_0}(\pmax)} \frac{2\big|\sin[x\cdot (t-  \pmax   )/2]\big|\,\d t}{z - \dispersion(t)},
\end{aligned}
$$
where $\hat f = \cF^*f.$  By \eqref{ajskalska_d1} and the H\"older inequality
\begin{align*}
&\sum\limits_{x\in\Z^d} \sqrt{|\hat v(x)|}\,\,|\hat f(x)|\,\int_{B_{r_0}(\pmax)} \frac{2|\sin[x\cdot (t-  \pmax   )/2]|\,\d t}{z - \dispersion(t)}\\
\le& \sum\limits_{x\in\Z^d}
\Big(2^{d- \gamma }|x|^{2-d+ \gamma } |\hat v(x)|\Big)^{1/2} |\hat f(x)|  \int_{B_{r_0}(\pmax)} \frac{|t-  \pmax   |^{\frac{2-d+ \gamma }{2}}\,\d t }{c_1|t-\pmax|^2 +z - \emax}
\\
\le &
\frac{1}{c_1}\,\Big(2^{d- \gamma } \sum\limits_{x\in\Z^d} |x|^{2-d+ \gamma } |\hat v(x)|\Big)^{1/2} \|\hat f\|_{\ell^2} T_{\frac{2-d+\gamma}{2}} \big((\tfrac{z-\emax}{c_1})^{1/2}\big).
\end{align*}
Thus, using $\|\hat f\|_{\ell^2(\Z^d)} = \|f\|_{L^2(\T^d)}$ and \eqref{norm_vhalf} we get
\begin{equation}\label{est_Q1p2}
\|Q_{12}^2(z)\| \le
\begin{cases}
A_1\, (z-\emax)^{\frac{\gamma - 1}{4}} & \text{if $d=1,$}\\
A_2  & \text{if $d=2,$}
\end{cases}
\end{equation}
where $A_d$ is given in \eqref{ad_coefficient}.  Since,
\begin{equation}\label{rep_q1_as_sum}
Q_1(z) = Q_{11}(z) +  Q_{12}^1(z) + Q_{12}^2(z),
\end{equation}
from \eqref{est_Q11}, \eqref{est_Q1p1} and \eqref{est_Q1p2} it follows that
\begin{equation} \label{estimate_q1_norm}
\|Q_1(z)\| \le \frac{2}{c_3}\,\sum\limits_{x\in\Z^d} |\hat v(x)| +
\begin{cases}
2A_1\, (z-\emax)^{\frac{\gamma - 1}{4}} & \text{if $d=1,$}\\
2A_2  & \text{if $d=2.$}
\end{cases}
\end{equation}

(a)  Let $d=1.$ Let us estimate $a(z)$ from below. By \eqref{unique_minimum_estimate}
\begin{align*}
a(z) \ge & \int_{B_{r_0}(\pmax)} \frac{\,\d q}{c_2| p -\pmax|^2 + z-\emax} = \frac{d\omega_d}{c_2}\,T_{d-1}\big( (\tfrac{z-\emax}{c_2})^{1/2} \big)\nonumber \\[2mm]
= &
\frac{\pi}{c_2^{1/2}(z-\emax)^{1/2}}\,\Big[1 - \frac{2}{\pi}\, \arctan \frac{(z-\emax)^{1/2}}{c_2^{1/2}r_0}\Big].
\end{align*}
This and \eqref{estimate_q1_norm} implies \eqref{limit_q1_d1}.

(b) Let $d=2.$ The estimate \eqref{limit_q1_d2} directly follows from \eqref{estimate_q1_norm}. Now we prove \eqref{norm_converge_q1}-\eqref{norm_converge_q10}. By the definition of
$Q_1(z),$ the norm limit
$$
Q_{11}(\emax):=\lim\limits_{z\to\emax} Q_{11}(z)
$$
exists and
\begin{align*}
Q_{11}(\emax) f(p):= &\int_{\T^d} \int_{\T^d\setminus B_{r_0}(\pmax)}\frac{v^{1/2}(p-t) |v|^{1/2}(t-q)\,\d t}{\emax - \dispersion(t)}\,f(q)\,\d q \\
-& \int_{\T^d\setminus B_{r_0}(\pmax) } \frac{\,\d t}{\emax-\dispersion(t)} \,\,\int_{\T^d} v^{1/2}(p-  \pmax   ) |v|^{1/2}(  \pmax   -q)\,f(q)\,\d q.
\end{align*}
It is obvious that
\begin{equation}\label{norm_limit_q11}
\|Q_{11}(z) - Q_{11}(\emax)\| \le \tilde c_1\,(z-\emax),\qquad  z>\emax,
\end{equation}
for some $\tilde c_1>0$ independent of $z.$
Furthermore, repeating the same proof of the uniform boundedness of $Q_{12}^1(z)$ and $Q_{12}^2(z)$ one can show the boundedness of operators
$$
Q_{12}^1(\emax) f(p) := \int_{\T^d} \int_{B_{r_0}(\pmax)} \frac{\big[v^{1/2}(p-t) - v^{1/2}(p-  \pmax   )\big]|v|^{1/2}(t-q)\,\,\d t}{\emax - \dispersion(t)}\,f(q)\,\d q
$$
and
$$
Q_{12}^2(\emax)f(p) := \int_{\T^d} \int_{B_{r_0}(\pmax)} \frac{\big[|v|^{1/2}(t-q) - |v|^{1/2}(  \pmax   -q)\big]v^{1/2}(p-  \pmax   ) \,\,\d t}{\emax - \dispersion(t)}\,f(q)\,\d q.
$$
We claim that for any $z>\emax$
\begin{align}
\|Q_{12}^1(z) - Q_{12}^1(\emax) \|\le \tilde c_2\,(z -\emax)^{\gamma/2}, \label{norm_limit_q111}\\
\|Q_{12}^2(z) - Q_{12}^2(\emax) \|\le \tilde c_2\,(z -\emax)^{\gamma/2}  \label{norm_limit_q112}
\end{align}
for some $\tilde c_2>0$ independent of $z.$
We prove only \eqref{norm_limit_q111}, the proof of \eqref{norm_limit_q112} being similar. For any $f\in L^2(\T^d)$ let us estimate the $L^2$-norm of
\begin{align*}
&(Q_{12}^1(z) - Q_{12}^1(\emax))f(p)\\
=&-(z - \emax)\, \int_{\T^d}\int_{B_{r_0}(\pmax)} \frac{\big[v^{1/2}(p-t) - v^{1/2}(p-  \pmax   )\big]|v|^{1/2}(t-q)\,\,\d t}{(z-\dispersion(t))(\emax - \dispersion(t))}\,f(q)\,\d q.
\end{align*}
As in \eqref{shashashasha}-\eqref{sinus_estimate}
\begin{align*}
&\|(Q_{12}^1(z) - Q_{12}^1(\emax))f\|^2
\le 2^{2-\gamma}(z-\emax) \|\hat v\|_{\ell^1}\|f\|^2\times\\
&\times \sum\limits_{x\in\Z^2} |x|^{\frac{\gamma}{2}}|\hat v(x)|\, \left|\int_{B_{r_0}(\pmax)} \frac{|t-\pmax|^{\frac{\gamma}{2}} \,\d t}{c_1|t-\pmax|^2\,(c_1|t-\pmax|^2 + z-\emax)}\right|^2.
\end{align*}
Then passing to the polar coordinates we obtain
$$
\|(Q_{12}^1(z) - Q_{12}^1(\emax))f\|
\le \tilde c_0\,(z-\emax)\|f\| \, \int_0^{r_0} \frac{r^{\frac{\gamma}{2}-1}\, \d r}{c_1r^2 + z-\emax},
$$
where $\tilde c_0:=2\pi \Big(2^{2-\gamma}\|\hat v\|_{\ell^1} \sum\limits_{x\in\Z^2} |x|^{\frac{\gamma}{2}}|\hat v(x)|\Big)^{1/2}.$
Now using the change of variables $r=(z-\emax)^{1/2} t$ we get
$$
\|(Q_{12}^1(z) - Q_{12}^1(\emax))\|
\le \tilde c_0\,(z-\emax)^{\gamma/2}\,\int_0^{+\infty} \frac{t^{\frac{\gamma}{2}-1} \,\d t}{c_1t^2 +1}.
$$
Since $\int_0^{+\infty} \frac{t^{\frac{\gamma}{2}-1} \,\d t}{c_1t^2 +1}<\infty,$ \eqref{norm_limit_q111} follows.

Now we set
$$
Q_1(\emax):=Q_{11}(\emax) + Q_{11}^1(\emax)  + Q_{11}^2(\emax).
$$
Then from \eqref{rep_q1_as_sum}, \eqref{norm_limit_q11}, \eqref{norm_limit_q111} and \eqref{norm_limit_q112} we conclude \eqref{norm_converge_q1} and \eqref{norm_converge_q10}.
\end{proof}

%

Next we study the case of sign-definite potentials.

\begin{lemma}\label{lem:asymp_lambda_sign_definite_case}
Assume Hypothesis \ref{hyp:maina} and that $\hat v\ge0.$  Then there exists $C_4:=C_4(\dispersion,\gamma)>0$ such that
\begin{equation}\label{boje_moi}
\big|\Tr(\birman(z)) - \|\birman(z)\| \big|\le C_4 \sum\limits_{x\in\Z^d} |x|^{2-d+ \gamma }|\hat v(x)|<\infty.
\end{equation}
\end{lemma}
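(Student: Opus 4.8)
The plan is to exploit the structure $\birman(z) = a(z)\,Q + Q_1(z)$ from Lemma~\ref{lem:residual_b_sh}, where $Q$ is a rank-one orthogonal projection (note that when $\hat v\ge0$ one has $v^{1/2} = |v|^{1/2}$, so $Q$ is genuinely the orthogonal projection onto the span of $|v|^{1/2}(\cdot-\pmax)$, normalized since $\||v|^{1/2}(\cdot-\pmax)\|^2 = \|\hat v\|_{\ell^1}$ by \eqref{norm_vhalf}), and the key point is that $\birman(z)$ is a \emph{positive} compact operator because $\hat v\ge0$ forces ${\bf s}=1$ and $(z-\hamiltonian_0)^{-1}>0$ for $z>\emax$. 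Hence $\|\birman(z)\| = \lambda(z) = \sup\sigma(\birman(z))$ is the top eigenvalue, while $\Tr(\birman(z)) = \sum_j \lambda_j(z)$ is the sum of all eigenvalues (all $\ge0$). Therefore
$$
0 \le \Tr(\birman(z)) - \|\birman(z)\| = \sum_{j\ge2} \lambda_j(z),
$$
the sum of all eigenvalues \emph{except} the largest. The strategy is to bound this tail by the trace (equivalently, the sum of eigenvalues) of a related positive operator that is a genuine small perturbation, and to control that via Lemma~\ref{lem:residual_b_sh}.

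The main step is to estimate $\sum_{j\ge2}\lambda_j(z)$ in terms of $\|Q_1(z)\|$ and $\Tr(|Q_1(z)|)$ or a similar quantity. Here is the clean route: since $a(z)Q$ is rank one, the min-max principle gives $\lambda_{j+1}(\birman(z)) \le \lambda_j(a(z)Q) + \|Q_1(z)\|$ — but $\lambda_j(a(z)Q)=0$ for $j\ge2$, so this only yields $\lambda_{j}(z)\le \|Q_1(z)\|$ for $j\ge2$, which is not summable. Instead I would use Ky Fan's inequality at the level of the operator $\birman(z) - a(z)Q = Q_1(z)$: by Weyl's inequality for the eigenvalues of a sum of compact self-adjoint operators,
$$
\sum_{j\ge2} \lambda_j(\birman(z)) \le \sum_{j\ge2}\lambda_j(a(z)Q) + \sum_{j\ge1}\lambda_j^+(Q_1(z)) = \sum_{j\ge1}\lambda_j^+(Q_1(z)) \le \Tr|Q_1(z)|,
$$
where $\lambda_j^+$ denotes the positive eigenvalues in decreasing order. (The first inequality is the standard fact $\sum_{j>k}\lambda_j(A+B)\le \sum_{j>k}\lambda_j(A)+\sum_{j\ge1}\lambda_j^+(B)$ applied with $k=1$; it follows from the variational characterisation of partial sums of eigenvalues.) Thus it suffices to bound $\Tr|Q_1(z)|$ uniformly in $z$ by $C_4\sum_x|x|^{2-d+\gamma}|\hat v(x)|$.

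So the real work is a trace-norm (rather than operator-norm) version of Lemma~\ref{lem:residual_b_sh}, and this is the step I expect to be the main obstacle. I would revisit the decomposition $Q_1(z) = Q_{11}(z) + Q_{12}^1(z) + Q_{12}^2(z)$ from \eqref{rep_q1_as_sum} and estimate each piece in trace norm. For $Q_{11}(z)$ the kernel is bounded away from the singularity (using \eqref{epsilon_deee}) and the operator is a fixed finite-rank-like smoothing operator whose trace norm is controlled by $\tfrac{C}{c_3}\|\hat v\|_{\ell^1}$; more carefully, writing it in the eigenbasis $\alpha_x$ of $\potential$ one gets $\Tr|Q_{11}(z)| \le \tfrac{2}{c_3}\sum_x|\hat v(x)|$ just as in \eqref{est_Q11}. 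For $Q_{12}^1(z)$ and $Q_{12}^2(z)$ I would re-run the Hölder/Fubini computation \eqref{shashashasha}–\eqref{est_Q1p2} but keeping track of the full diagonal sum over $x\in\Z^d$ rather than taking a supremum: the factor $|1-e^{-ix\cdot(t-\pmax)}|$ produces, after \eqref{sinus_estimate} and \eqref{unique_minimum_estimate} and passing to polar coordinates, a bound of the form $\sum_x (2^{d-\gamma}|x|^{2-d+\gamma}|\hat v(x)|)^{1/2}\cdot(\text{something})^{1/2}\cdot T_{\frac{d+\gamma}{2}}$-type integral; in $d=2$ one has $\frac{d+\gamma}{2}\in(1,2)$ so $T_{\frac{d+\gamma}{2}}\big((\tfrac{z-\emax}{c_2})^{1/2}\big)\le \frac{2r_0^{\gamma/2}}{\gamma}$ is bounded uniformly in $z>\emax$, and in $d=1$ one has $2-d+\gamma = 1+\gamma$ and $T_{\frac{d+\gamma}{2}}$ with $\frac{d+\gamma}{2}\in(0,1)$ is still integrable against the extra decay so that the $z$-dependence cancels when combined with the lower bound on $a(z)$ — but here, since we want a bound \emph{uniform} in $z$ rather than one vanishing as $z\to\emax$, the $d=1$ case needs the explicit lower bound $a(z)\ge \pi c_2^{-1/2}(z-\emax)^{-1/2}[1-\tfrac{2}{\pi}\arctan(\cdots)]$ to absorb the $(z-\emax)^{(\gamma-1)/4}$-type growth. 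Assembling these pieces as in \eqref{rep_q1_as_sum} gives $\Tr|Q_1(z)| \le C_4'\sum_x|x|^{2-d+\gamma}|\hat v(x)|$, and combining with the Weyl-inequality bound above yields \eqref{boje_moi} with $C_4$ depending only on $\dispersion$ and $\gamma$; finiteness is immediate from Hypothesis~\ref{hyp:maina}(b).
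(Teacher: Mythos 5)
Your reduction is sound as far as it goes: for $\hat v\ge 0$ and $z>\emax$ the operator $\birman(z)$ is positive and trace class, so $\Tr(\birman(z))-\|\birman(z)\|=\sum_{j\ge2}\lambda_j(z)$, and since $a(z)Q$ is rank one and positive, the variational bound $\lambda_1(\birman(z))\ge a(z)\kappa_0-\Tr(Q_1(z)_-)$ does give $\sum_{j\ge2}\lambda_j(z)\le\Tr(Q_1(z)_+)\le\Tr|Q_1(z)|$. The gap is exactly in the step you flag as the main obstacle, and it does not close. In $d=1$ the estimates you propose to re-run (\eqref{shashashasha}--\eqref{est_Q1p2}) rest on the bound $2|\sin(x\cdot(t-\pmax)/2)|\le C|x|^{\frac{2-d+\gamma}{2}}|t-\pmax|^{\frac{2-d+\gamma}{2}}$ and therefore produce the factor $T_{\frac{1+\gamma}{2}}\big((\tfrac{z-\emax}{c_2})^{1/2}\big)\sim(z-\emax)^{\frac{\gamma-1}{4}}$, which blows up as $z\to\emax$; this is precisely why Lemma \ref{lem:residual_b_sh}(a) only yields $\|Q_1(z)\|\le C_1 a(z)^{\frac{1-\gamma}{2}}$ rather than a uniform bound. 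Your proposed remedy, absorbing this growth ``with the lower bound on $a(z)$,'' has nothing to attach to: in the chain $\Tr(\birman(z))-\|\birman(z)\|\le\Tr|Q_1(z)|$ no factor $a(z)^{-1}$ ever appears, so the blow-up survives and the claimed uniform bound on $\Tr|Q_1(z)|$ is not obtained (and since $\Tr|Q_1(z)|\ge\|Q_1(z)\|$, it could only hold at all by exploiting cancellations that your kernel estimates discard). A second, independent problem: the computations \eqref{shashashasha}--\eqref{est_Q1p2} bound $\|Q_1(z)f\|$ for each $f$, i.e.\ an operator (or Hilbert--Schmidt) norm; an \emph{upper} bound on the trace norm does not follow from them, nor from summing diagonal entries $\sum_x|(Q_1(z)\alpha_x,\alpha_x)|$, which only bounds $\Tr|Q_1(z)|$ from below.

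The paper avoids both issues by never estimating $Q_1(z)$ as an operator at this point. Writing the top eigenvector as $f_z=\phi_0+\phi_z$ with $\phi_z\perp\phi_0$, it shows
$\big|\Tr(\birman(z))-\|\birman(z)\|\big|\,\sum_x\hat v(x)\le -(Q_1(z)\phi_0,\phi_0)$,
a single diagonal matrix element equal to $\int_{\T^d}\big(v(0)^2-|v(t-\pmax)|^2\big)(z-\dispersion(t))^{-1}\,\d t$. For this quantity the \emph{quadratic} bound $\sin^2 u\le|u|^{2-d+\gamma}$ applies --- the full exponent $2-d+\gamma$, not half of it --- and the resulting integral $T_{1+\gamma}$ is bounded uniformly in $z$ because $1+\gamma>1$. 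That doubling of the exponent is the whole point, and it is unavailable for operator- or trace-norm bounds on $Q_1(z)$. If you want to salvage your route, replace $\Tr|Q_1(z)|$ by the variational lower bound $\lambda_1(\birman(z))\ge(\birman(z)\phi_0,\phi_0)/\|\phi_0\|^2$ in $\Tr-\lambda_1$; this reduces everything to the same diagonal element and is then essentially the paper's argument.
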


\begin{proof}
Let $\nu(z):=\|\birman(z)\|;$ by self-adjointness and  \eqref{lambdaz} $\lambda(z)=\nu(z)$ for any $z>\emax.$
Since $\hat v\ge0,$ one has $\birman(z)\ge0$ and $v^{1/2} =|v|^{1/2}.$ By Lemma \ref{lem:residual_b_sh}, the self-adjoint operator $\frac{\birman(z)}{a(z)} = Q + \frac{Q_1(z) }{a(z)}$ is a small pertubation of the self-adjoint rank-one projector
$$
Q:=\phi_0(\cdot,\phi_0),
$$
where $\phi_0(p):=v^{1/2}(p -\pmax),$ therefore, by the standard perturbation theory, the eigenvector $f_z$ associated to $\nu(z)$ can be written as $f_z = \phi_0+ \phi_z,$ where $\phi_z$ is orthogonal to $\phi_0.$ Then using
$$
(\phi_0,\phi_0) =   \sum\limits_{x\in\Z^d} \hat v(x)
$$
one has
\begin{equation}\label{asymp_eigenvalue}
\nu(z)\phi_0  + \nu(z)\phi_z =  a(z)\phi_0 \sum\limits_{x\in \Z^d} \hat v(x) + Q_1(z)\phi_0  + Q_1(z)\phi_z.
\end{equation}
Multiplying \eqref{asymp_eigenvalue} by $\phi_z$  we get
$$
\nu(z)\|\phi_z\|^2 = (Q_1(z)\phi_0 ,\phi_z) +(Q_1(z)\phi_z,\phi_z).
$$
Since $(\phi_z,\phi_0 )=0,$ one has $(\birman(z)\phi_z,\phi_z)=(Q_1(z)\phi_z,\phi_z)\ge0.$ Thus,
$$
\|\birman(z)\|\,\,\|\phi_z\|^2 =  (Q_1(z)\phi_0 ,\phi_z) + (\birman(z)\phi_z,\phi_z),
$$
and in particular, $(Q_1(z)\phi_0 ,\phi_z)\le0.$ Multiplying \eqref{asymp_eigenvalue} by $\phi_0$ we get
\begin{equation}\label{nihoyat}
\nu(z)\sum\limits_{x\in\Z^d} \hat v(x)  = a(z)\Big( \sum\limits_{x\in \Z^d}  \hat v(x) \Big)^2 + (Q_1(z)\phi_0,\phi_0)  + (Q_1(z)\phi_z,\phi_0).
\end{equation}
Since all nonzero eigenvalues of $\birman(z)$ are positive and $ (Q_1(z)\phi_z,\phi_0)\ge0,$  by \eqref{trace_modul_birman} and \eqref{nihoyat},
\begin{align}\label{eqrqeqeqe}
\Big| \Tr(\birman(z)) - \nu(z)\Big|\,\,\sum\limits_{x\in\Z^d} |\hat v(x)| = \Big[a(z) \sum\limits_{x\in\Z^d} \hat v(x) -\nu(z) \Big]\,\,\sum\limits_{x\in\Z^d} \hat v(x)  \nonumber \\
=-(Q_1(z)\phi_0,\phi_0)  - (Q_1(z)\phi_z,\phi_0) \le -(Q_1(z)\phi_0,\phi_0).
\end{align}
Note that
\begin{align*}
- (Q_1(z)\phi_0,\phi_0)=&  \int_{(\T^d)^3}\frac{v^{1/2}(p-\pmax)v^{1/2}(\pmax-q) - v^{1/2}(p-t)v^{1/2}(t-q) }{z - \dispersion(t)}\\
& \times v^{1/2}(q-\pmax)v^{1/2}(\pmax-p)\,\,\d t\,\d q\,\d p\\
=& \int_{\T^d} \frac{v(0)^2 - |v(t - \pmax)|^2}{z-\dispersion(t) }\,\d t,
\end{align*}
where we used
$$
\int_{\T^d} v^{1/2}(p-t)v^{1/2}(s-p)\,\d p= v(s-t),\qquad s,t\in\T^d.
$$
Since $\hat v\ge0,$
$$
|v(p)|^2= \sum\limits_{x,y\in\Z^d} \hat v(x) \hat v(y) \cos[ (x-y)\cdot p],
$$
and thus, $p=0$ is the maximum point of $|v(\cdot)|^2.$ Therefore, the map $z\in(\emax,+\infty)\mapsto -(Q_1(z)\phi_0,\phi_0)$ is decreasing
 and
\begin{equation}\label{otamning_giroti}
-(Q_1(z)\phi_0,\phi_0)= 2 \sum\limits_{x,y\in\Z^d} \hat v(x)\hat v(y) \int_{\T^d} \frac{\sin^2\big[\frac{(x-y)(t-\pmax)}{2}\big] \,\d t}{z -\dispersion(t)}.
\end{equation}
Since $\sin^2\big[\frac{(x-y)(t-\pmax)}{2}\big] \le \big|\frac{(x-y)(t-\pmax)}{2}\big|^{2-d+ \gamma },$  from \eqref{otamning_giroti} it follows that
\begin{equation*}
-(Q_1(z)\phi_0,\phi_0)\le 2^{d-1- \gamma } \sum\limits_{x,y\in\Z^d} \hat v(x)\hat v(y)|x-y|^{2-d+ \gamma } \int_{\T^d} \frac{|t-\pmax|^{2-d+ \gamma }\,\d t}{z - \dispersion(t)}.
\end{equation*}
Using
$$
|x-y|^{2-d+ \gamma } \le 2^{2-d+ \gamma } \big(|x|^{2-d+ \gamma } +|y|^{2-d+ \gamma } \big)
$$
one can readily check that
$$
\sum\limits_{x,y\in\Z^d} \hat v(x)\hat v(y)|x-y|^{2-d+ \gamma } \le
2^{3-d+ \gamma } \sum\limits_{x\in\Z^d} |\hat v(x)|\,\sum\limits_{x\in\Z^d} |x|^{2-d+ \gamma }|\hat v(x)|.
$$
Thus, from \eqref{otamning_giroti} we get
$$
-(Q_1(z)\phi_0,\phi_0)\le 4\sum\limits_{x\in\Z^d} |\hat v(x)|\,\sum\limits_{x\in\Z^d} |x|^{2-d+ \gamma }|\hat v(x)|\, \int_{\T^d} \frac{|t-\pmax|^{2-d+ \gamma }\,\d t}{z - \dispersion(t)}.
$$
Let $B_{r_0}(\pmax)$ be such that \eqref{unique_minimum_estimate} and \eqref{epsilon_deee} hold. We write
\begin{align}\label{har_muyi_uza_yuz}
\int_{\T^d} \frac{|t-\pmax|^{2-d+ \gamma }\,\d t}{z - \dispersion(t)}
= &
\int_{B_{r_0}(\pmax)} \frac{|t-\pmax|^{2-d + \gamma }\,\d t}{z - \dispersion(t)}+ \int_{\T^d\setminus B_{r_0}(\pmax)} \frac{|t-\pmax|^{2-d+ \gamma }\,\d t}{z - \dispersion(t)}\nonumber \\
=&:I_1(z)+I_2(z).
\end{align}
By \eqref{epsilon_deee}, $\sup\limits_{z<\emin} I_2(z)<\frac{16\pi^2}{c_3}.$ Using  \eqref{unique_minimum_estimate} in $I_1(z)$ and passing to polar coordinates  we get
$$
I_1(z) =\int_{B_{r_0}(\pmax)} \frac{|t-\pmax|^{2-d + \gamma }\,\d t}{c_1|t-\pmax|^2 + z -\emax} = \frac{d\omega_d}{c_1} \, T_{1+\gamma}\big((\tfrac{z-\emax}{c_1})^{1/2}\big),
$$
where $T_\alpha$ is defined in \eqref{def:t_d}. Since $1+\gamma>1,$ by Lemma \ref{lem:asymp_some_integral}
$$
I_1(z)\le \frac{d\omega_dr_0^\gamma}{c_1 \gamma } < \infty
$$
for any $z>\emax.$ Thus,  from \eqref{eqrqeqeqe} and \eqref{har_muyi_uza_yuz},
$$
\Big| \Tr(\birman(z)) - \nu(z)\Big| \le 4\,\Big(\frac{16\pi^2}{c_1} + \frac{d\omega_d r_0^\gamma }{ c_1\gamma }
\Big) \sum\limits_{x\in\Z^d} |x|^{2-d+ \gamma }|\hat v(x)|
$$
and  \eqref{boje_moi} follows.
\end{proof}

Next we study all positive eigenvalues of $\birman(z)$ and their limits as $z\to \emax.$

\begin{proposition}\label{prop:asym_eigenvalues_bsh}
Assume Hypothesis \ref{hyp:maina}  and let $\lambda(\cdot)$ be defined as \eqref{lambdaz}. Let $\lambda_0(z)= \lambda(z)\ge \lambda_1(z)\ge \ldots >0$ be  all positive eigenvalues (if any) of $\birman(z)$ counted with their multiplicities. Then there exists $C_5:=C_5(\dispersion,\hat v,\gamma)>1$ such that for any $z>\emax$
\begin{equation}\label{jilmayasan_xayol_surasan}
0\le \lambda_k(z) \le C_5,\qquad k\ge1.
\end{equation}
Moreover:
\begin{itemize}
\item[(a)] if $\sum_{x\in\Z^d} \hat v(x)<0,$ then
$$
0<\sup\limits_{z>\emax}  \lambda(z) \le C_5;
$$

\item[(b)] if $\sum_{x\in\Z^d} \hat v(x)\ge0,$ then there exists $\delta_0:=\delta_0(\dispersion,\hat v)>0$ such that:

\begin{itemize}
\item[(b1)] if $\kappa_0:=\sum_{x\in\Z^d} \hat v(x)>0,$ then
for any $z\in (\emax,\emax + \delta_0)$
\begin{equation}\label{dangal_dangal}
\frac{\lambda(z)}{a(z)} =  \kappa_0 +
\begin{cases}
\frac{g_1(z)\ln (z-\emax)}{\kappa_0 a(z)} & \text{if $d=1,$}\\[1mm]
\frac{C_6}{\kappa_0  a(z)} + \frac{(z-\emax)^{\gamma/4}\,g_2(z)}{\kappa_0 a(z)} & \text{if $d=2,$}
\end{cases}
\end{equation}
where $C_6\in\R,$ $g_1,g_2\in C^0[\emax,\emax+\delta_0];$

\item[(b2)] if $\kappa_0:=\sum_{x\in\Z^d} \hat v(x)=0,$ then
$$
\kappa_1:=\int_{\T^d} \frac{|v(p-\pmax)|^2\,\d p}{\emax - \dispersion(p)}
$$
is finite and  for any $z\in (\emax,\emax + \delta_0)$
\begin{equation}\label{issaqbaazi_he}
\frac{\lambda(z)^2}{a(z)} =
\kappa_1 +
\begin{cases}
\frac{g_3(z)\ln^2(z-\emax)}{a(z)^{\gamma/2}} & \text{if $d=1,$}\\[1mm]
\frac{C_7}{a(z)^{1/2}} + \frac{(z-\emax)^{\gamma/2}g_4(z)}{a(z)^{1/2}} & \text{if $d=2,$}
\end{cases}
\end{equation}
where $C_7\in\R,$ $g_3,g_4\in C^0[\emax,\emax+\delta_0].$
\end{itemize}
\end{itemize}
\end{proposition}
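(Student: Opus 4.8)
The plan is to pass through a self-adjoint reduction of $\birman(z)$ for the \emph{a priori} bounds, and to use perturbation theory for the rank-one decomposition of Lemma~\ref{lem:residual_b_sh} for the asymptotics. For $z>\emax$ write $\birman(z)={\bf s}\,|\birman(z)|$ with $|\birman(z)|=\sqrt{|\potential|}(z-\hamiltonian_0)^{-1}\sqrt{|\potential|}\ge0$. Then $f\mapsto|\birman(z)|^{1/2}f$ is a linear bijection from $\Ker(\birman(z)-\lambda)$ onto $\Ker(\mathfrak A(z)-\lambda)$ for every $\lambda\ne0$, where $\mathfrak A(z):=|\birman(z)|^{1/2}{\bf s}\,|\birman(z)|^{1/2}$ is compact and self-adjoint; hence $\{\lambda_k(z)\}$ is exactly the family of positive eigenvalues of $\mathfrak A(z)$. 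Since $\|{\bf s}\|\le1$ one has $-|\birman(z)|\le\mathfrak A(z)\le|\birman(z)|$, so by the min--max principle $\lambda_k(z)\le\tau_k(z)$ and $|\mu_j(z)|\le\tau_j(z)$, where $\tau_0(z)\ge\tau_1(z)\ge\cdots\ge0$ are the eigenvalues of $|\birman(z)|$ and $\mu_j(z)<0$ the negative eigenvalues of $\mathfrak A(z)$ (if any). But $|\birman(z)|$ is precisely the nonnegative Birman--Schwinger operator of the potential $|\hat v|$, so Lemma~\ref{lem:asymp_lambda_sign_definite_case} applied to $|\hat v|$ gives $\sum_{k\ge1}\tau_k(z)=\Tr|\birman(z)|-\tau_0(z)\le C_4\sum_{x}|x|^{2-d+\gamma}|\hat v(x)|$ uniformly in $z>\emax$; this yields \eqref{jilmayasan_xayol_surasan} (take $C_5:=1+C_4\sum_x|x|^{2-d+\gamma}|\hat v(x)|$) and, simultaneously, $\sum_{j\ge1}|\mu_j(z)|\le C_4\sum_x|x|^{2-d+\gamma}|\hat v(x)|$.

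For the asymptotics I would use $\birman(z)=a(z)Q+Q_1(z)$ of Lemma~\ref{lem:residual_b_sh}, with $\|Q_1(z)\|/a(z)\to0$ and $a(z)\to+\infty$ as $z\searrow\emax$, together with the Parseval identities $(\phi_0,\bar\psi_0)=\sum_x\hat v(x)=\kappa_0$, $\|\phi_0\|^2=\|\psi_0\|^2=\sum_x|\hat v(x)|$, and the computation
\begin{equation*}
(Q_1(z)\phi_0,\bar\psi_0)=(\birman(z)\phi_0,\bar\psi_0)-a(z)\kappa_0^2=\int_{\T^d}\frac{|v(t-\pmax)|^2-\kappa_0^2}{z-\dispersion(t)}\,\d t ,
\end{equation*}
which, since $\big||v(w)|^2-\kappa_0^2\big|\le C|w|^{2-d+\gamma}$ and $\emax-\dispersion(t)\ge c_1|t-\pmax|^2$ on $B_{r_0}(\pmax)$ (cf.~\eqref{unique_minimum_estimate}), is bounded uniformly in $z$ and converges as $z\searrow\emax$. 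If $\kappa_0>0$ (assertion (b1)), then $\kappa_0$ is the simple isolated top point of $\sigma(Q)=\{0,\kappa_0\}$, so Riesz-projection perturbation theory applied to $\birman(z)/a(z)=Q+a(z)^{-1}Q_1(z)$ gives, on some $(\emax,\emax+\delta_0)$, a unique simple eigenvalue near $\kappa_0$; by the first step every other positive eigenvalue of $\birman(z)/a(z)$ is $\le C_5/a(z)<\kappa_0/2$, so that eigenvalue is $\lambda(z)/a(z)$, and the first-order expansion $\lambda(z)/a(z)=\kappa_0+(Q_1(z)\phi_0,\bar\psi_0)/(\kappa_0 a(z))+O(\|Q_1(z)\|^2/a(z)^2)$ combined with the parametric-integral asymptotics of Lemma~\ref{lem:asymp_some_integral} (the logarithm $\ln(z-\emax)$ for $d=1$, and the norm limit $Q_1(z)\to Q_1(\emax)$ with rate $(z-\emax)^{\gamma/2}$ for $d=2$ from Lemma~\ref{lem:residual_b_sh}(b)) produces \eqref{dangal_dangal}, with $C_6=(Q_1(\emax)\phi_0,\bar\psi_0)$ and $g_1,g_2\in C^0[\emax,\emax+\delta_0]$.

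When $\kappa_0<0$ (assertion (a)), Lemma~\ref{lem:properties_of_lambda} already gives $\lambda(z)>0$; for the upper bound I would use the trace identity $\Tr\mathfrak A(z)=\Tr\birman(z)=a(z)\kappa_0$ (cf.~\eqref{trace_modul_birman}) to write $\lambda(z)=a(z)\kappa_0-\sum_{k\ge1}\lambda_k(z)-\mu_0(z)-\sum_{j\ge1}\mu_j(z)\le\big(a(z)\kappa_0-\mu_0(z)\big)+\sum_{j\ge1}|\mu_j(z)|$. Since $\birman_{-\hat v}(z)=-\birman(z)$ one has $\mu_0(z)=\inf\sigma(\birman(z))=-\sup\sigma(\birman_{-\hat v}(z))$, and applying the previous paragraph to $-\hat v$ (whose sum is $-\kappa_0>0$) gives $\sup\sigma(\birman_{-\hat v}(z))=-a(z)\kappa_0+O(1)$, so $a(z)\kappa_0-\mu_0(z)$ is bounded; together with $\sum_{j\ge1}|\mu_j(z)|\le C_4\sum_x|x|^{2-d+\gamma}|\hat v(x)|$ this gives $0<\sup_{z>\emax}\lambda(z)\le C_5$ after enlarging $C_5$. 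When $\kappa_0=0$ (assertion (b2)), finiteness of $\kappa_1$ (and of $\kappa_2$ by symmetry) follows from $|v(p-\pmax)|^2=\big||v(p-\pmax)|^2-|v(0)|^2\big|\le C|p-\pmax|^{2-d+\gamma}$ and $\emax-\dispersion(p)\ge c_1|p-\pmax|^2$; now $Q$ is rank-one nilpotent ($(\phi_0,\bar\psi_0)=0$), so the perturbation argument fails, and instead, for $\lambda>\|Q_1(z)\|$ (which contains $\lambda(z)$ near $\emax$, since $\lambda(z)$ diverges faster than $\|Q_1(z)\|$), the equation $\dim\Ker(1-\lambda^{-1}\birman(z))>0$ and the Sherman--Morrison formula for $a(z)Q$ reduce to
\begin{equation*}
\lambda^2=a(z)\Big[(Q_1(z)\phi_0,\bar\psi_0)+\tfrac1\lambda\,(Q_1(z)^2\phi_0,\bar\psi_0)+\cdots\Big]=:a(z)\,F(\lambda,z),
\end{equation*}
with $F(\lambda,z)\to(Q_1(z)\phi_0,\bar\psi_0)\to\kappa_1>0$ as $\lambda\to\infty$, $z\searrow\emax$. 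The Implicit Function Theorem in the scalar variable $\lambda$, uniform in $z$ (packaged through \cite[Lemma 3.3]{GK:1969}), then produces a unique positive solution $\lambda(z)$ with $\lambda(z)^2/a(z)\to\kappa_1$; quantifying $F(\lambda(z),z)-\kappa_1$ through Lemma~\ref{lem:residual_b_sh} and Lemma~\ref{lem:asymp_some_integral} yields \eqref{issaqbaazi_he} (the constants $C_7$ and functions $g_3,g_4$), and the first step guarantees $\lambda(z)$ is the only positive eigenvalue. The case $z<\emin$ is obtained by the same analysis applied to $-\hamiltonian_0+\mu\potential$.

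The two main obstacles are exactly these sign-critical situations. When $\kappa_0<0$ the top eigenvalue $\lambda(z)$ is \emph{not} controlled directly by $\|Q_1(z)\|$ (which diverges when $d=1$), and one is forced to extract it from the trace identity together with the precise leading term of $\inf\sigma(\birman(z))$, i.e. from the $\kappa_0>0$ analysis applied to $-\hat v$. When $\kappa_0=0$ the nilpotency of $Q$ removes classical perturbation theory, so the existence, uniqueness and asymptotics of $\lambda(z)$ must be read off the fixed-point reformulation $\lambda^2=a(z)F(\lambda,z)$, which in turn requires estimating the parametric integrals sharply enough to produce the claimed continuous remainder functions.
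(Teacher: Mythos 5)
Your a priori bound \eqref{jilmayasan_xayol_surasan} and your treatment of case (b2) are correct but take genuinely different routes from the paper, and in both places the routes are arguably cleaner. For \eqref{jilmayasan_xayol_surasan} the paper stays with the non-self-adjoint $\birman(z)$ and controls the second-largest-modulus eigenvalue via the Gohberg--Krein product inequality, which forces a case split between $z$ near $\emax$ and $z$ bounded away from it; your symmetrization $\mathfrak{A}(z)=|\birman(z)|^{1/2}{\bf s}\,|\birman(z)|^{1/2}$ plus min--max gives $\lambda_k(z)\le\tau_k(z)$ for every $k\ge1$ and every $z>\emax$ in one stroke, and Lemma \ref{lem:asymp_lambda_sign_definite_case} applied to $|\hat v|$ finishes it. For (b2) the paper constructs the eigenvector through a Banach-space Implicit Function Theorem applied to spectral projections $P_\epsilon(z)$ built on the rank-one operator $QQ_1(z)$; your Sherman--Morrison reduction of $(a(z)Q+Q_1(z))f=\lambda f$ to the scalar equation $\lambda^2=a(z)\sum_{n\ge1}\lambda^{1-n}(Q_1(z)^n\phi_0,\phi_1)$ is an equivalent but more transparent route (the relation $(\phi_0,\phi_1)=\kappa_0=0$ is exactly what kills the $\lambda^{1}$ term and produces the $\sqrt{a(z)\kappa_1}$ scaling), and combined with your first step it yields uniqueness of the large eigenvalue. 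Case (b1) follows the paper essentially verbatim; note only that the second-order remainder $O(\|Q_1(z)\|^2/a(z)^2)$ is of size $1/\ln^2(z-\emax)$ when $d=2$, so converting it into the specific form $(z-\emax)^{\gamma/4}g_2(z)/a(z)$ requires the same delicate bookkeeping the paper performs at that point.

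The genuine gap is in part (a) for $d=1$. You bound $\lambda(z)\le\big(a(z)\kappa_0-\mu_0(z)\big)+\sum_{j\ge1}|\mu_j(z)|$ and claim $a(z)\kappa_0-\mu_0(z)=O(1)$ because $-\mu_0(z)=\sup\sigma(\birman_{-\hat v}(z))=-a(z)\kappa_0+O(1)$. But by your own expansion (b1) applied to $-\hat v$, the correction to $-a(z)\kappa_0$ in $d=1$ is of size $|(Q_1(z)\phi_0,\phi_1)|=O(|\ln(z-\emax)|)$, which is unbounded as $z\searrow\emax$: the integral $\int_{\T^d}\big(|v(t-\pmax)|^2-|v(0)|^2\big)(z-\dispersion(t))^{-1}\,\d t$ genuinely diverges logarithmically in one dimension. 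So the trace identity only yields $\lambda(z)\le O(|\ln(z-\emax)|)$, not the uniform bound $\sup_{z>\emax}\lambda(z)\le C_5$. The paper closes this with a product, rather than additive, inequality: since $\eta_0(z)/a(z)\to\kappa_0<0$, the largest positive eigenvalue satisfies $\lambda(z)\le|\eta_1(z)|$, and $|\eta_1(z)|\le\nu_0(z)\nu_1(z)/|\eta_0(z)|\le C$ because $\nu_0(z)\sim a(z)\sum_x|\hat v(x)|$ while $\nu_1(z)$ is bounded by Lemma \ref{lem:asymp_lambda_sign_definite_case}. In your symmetrized setting the same fix reads $s_0(\mathfrak{A}(z))\,s_1(\mathfrak{A}(z))\le\tau_0(z)\tau_1(z)$ (Horn's inequality for $\mathfrak{A}=(B{\bf s})B$ with $B=|\birman(z)|^{1/2}$), with $s_0(\mathfrak{A}(z))=|\mu_0(z)|\sim a(z)|\kappa_0|$, giving $\lambda(z)\le s_1(\mathfrak{A}(z))\le C$. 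The multiplicative cancellation of the two factors of $a(z)$ is essential; the additive cancellation coming from the trace is too weak in $d=1$.
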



\begin{proof}

First assume that
$$
\kappa_0:=\sum\limits_{x\in\Z^d} \hat v(x)\ne0.
$$
Then  $\kappa_0$ is the unique nonzero eigenvalue of $Q$ and its  associated eigenvector is $\phi_0(p):=v^{1/2}(p-\pmax).$ Note that
\begin{equation}\label{resolvent_Q}
(Q-\xi)^{-1} = - \frac{1}{\xi} - \frac{Q}{\xi(\xi-\kappa_0)},\qquad \xi\ne0,\kappa_0.
\end{equation}
By Lemma \ref{lem:residual_b_sh}, $\frac{\birman(z)}{a(z)}$ is a small perturbation of $Q,$ and hence, by the standard perturbation theory methods for the non-selfadjoint operators, the eigenvalue $\eta(z)$ of $\birman(z)$ of maximal modulus satisfies
\begin{equation}\label{maximal_eigenvalues}
\lim\limits_{z\to\emax}\, \frac{\eta(z)}{a(z)} = \kappa_0.
\end{equation}
Let $\epsilon:=\frac{|\kappa_0|}{8}>0.$ By Lemma \ref{lem:residual_b_sh} there exists $\delta_1\in(0,1)$ such that
$$
\Big| \frac{\eta(z)}{a(z)}  - \kappa_0\Big| <\epsilon\qquad\text{and}\qquad \frac{\|Q_1(z)(Q-\xi)^{-1}\|}{a(z)}<\frac{1}{2}
$$
for any $z\in(\emax,\emax+\delta_1)$ and any $\xi\in\C$ such that $|\xi - \kappa_0|=\epsilon.$  Then for any such $\xi$ one has
$$
\begin{aligned}
\left(\frac{\birman(z)}{a(z)} - \xi \right)^{-1} = & (Q-\xi)^{-1}\left(1  + \frac{Q_1(z)}{a(z)}\,(Q-\xi)^{-1}\right)^{-1} \\
= & (Q-\xi)^{-1} + \sum\limits_{n\ge1}\frac{(-1)^n}{a(z)^n}(Q-\xi)^{-1}[Q_1(z)(Q-\xi)^{-1}]^n.
\end{aligned}
$$
Therefore, if we integrate this equality over the complex circle $|\xi - \kappa_0|=\epsilon$ and insert \eqref{resolvent_Q}, then by the Residue Theorem for analytic functions and the estimates \eqref{limit_q1_d1} and \eqref{limit_q1_d2},
\begin{align*}
P(z):= & \frac{1}{2\pi i}\,\int_{|\xi - \kappa_0|=\epsilon} \left(\frac{\birman(z)}{a(z)} - \xi \right)^{-1}d\xi \\
= & -\frac{Q}{\kappa_0} - \frac{1}{a(z)}\,\frac{Q_1(z)Q + QQ_1(z)}{\kappa_0^2} + \frac{R(z)}{a(z)},
\end{align*}
where $\|R(z)\|=O((z-\emax)^{\gamma/4})$ as $z\to\emax.$ By the definition of $\epsilon,$ $P(z)$  is the spectral projection of $\birman(z)$ associated to its eigenvalue $\frac{\eta(z)}{a(z)}.$ Hence, using
$$
Q\phi_0=\kappa_0 \phi_0
$$
and
$$
(Q_1(z)Q + QQ_1(z))\phi_0 = \kappa_0Q_1(z)\phi_0 + (Q_1(z)\phi_0,\phi_1)\phi_0,
$$
where $\phi_1(p):=|v|^{1/2}(p-\pmax),$ we deduce that the associated eigenvector $f_z$ is
$$
f_z:= - P(z)\phi_0= \phi_0 + \frac{\kappa_0Q_1(z)\phi_0 + (Q_1(z)\phi_0,\phi_1)\phi_0}{\kappa_0^2a(z)} + \frac{T(z)\phi_0}{a(z)}.
$$
Then scalar multiplying the eigenvalue equation $\frac{\eta(z)}{a(z)}\,f_z = \big(Q + \frac{Q_1(z)}{a(z)}\big)f_z$ by $\phi_1$ and using $(\phi_0,\phi_1)_{L^2} = \kappa_0$ we get
$$
 \left(\frac{\eta(z)}{a(z)} - \kappa_0\right) \kappa_0 + \frac{(Q_1(z)\phi_0,\phi_1)}{\kappa_0 a(z)}\,\left(\frac{2\eta(z)}{a(z)} - 3\kappa_0\right)= \frac{h_0(z)}{a(z)},
$$
where $h_0\in C^0(\emax,\emax+\delta_1)$ satisfies $|h_0(z)|=O((z-\emax)^{\gamma/4})$ as $z\to\emax.$
Hence,
\begin{align}\label{qolu_ya_abana}
\frac{\eta(z)}{a(z)} = \kappa_0 + \frac{(Q_1(z)\phi_0,\phi_1)}{\kappa_0  a(z)} + \frac{h_1(z)}{\kappa_0 a(z)},
\end{align}
where $h_1\in C^0(\emax,\emax+\delta_1)$ satisfies $|h_1(z)|=O((z-\emax)^{\gamma/4})$ as $z\to\emax.$

Notice that
\begin{equation}\label{tastati_a}
(Q_1(z)\phi_0,\phi_1)= \int_{\T^d} \frac{(|v(p-\pmax)|^2 - |v(0)|^2)\,\d p}{z-\dispersion(p)}.
\end{equation}
Since $v\in C^0(\T^d),$ by $v=\cF\hat v$ we have
\begin{align*}
\big||v(p-\pmax)|^2 - |v(0)|^2\big| \le &2\|v\|_{L^\infty} \,|v(p-\pmax) - v(0)|\nonumber \\
\le& 2\|v\|_{L^\infty} \sum\limits_{x\in\Z^d} |\hat v(x)|\, \big|\sin \tfrac{x\cdot (p-\pmax)}{2}\big|\nonumber \\
\le &
\begin{cases}
2\|v\|_{L^\infty} \,|p-\pmax|\,\sum\limits_{x\in\Z^d} |x|\, |\hat v(x)| &\text{if $d=1,$}\\[2mm]
2\|v\|_{L^\infty} \,|p-\pmax|^\gamma\,\sum\limits_{x\in\Z^d} |x|^\gamma\, |\hat v(x)| &\text{if $d=2.$}
\end{cases}
\end{align*}
Separating integral in \eqref{tastati_a} into integrals over $\T\setminus B_{r_0}(\pmax)$ and $B_{r_0}(\pmax)$ and using \eqref{unique_minimum_estimate} and \eqref{epsilon_deee} we obtain
$$
|(Q_1(z)\phi_0,\phi_1)| \le  \tilde c_1 + \tilde c_2 \int_{B_{r_0}(\pmax)} \frac{|p-\pmax|^\alpha \,\d p}{c_1|p-\pmax|^2 + z-\emax},
$$
where $\tilde c_1,\tilde c_2>0$ and $\alpha=1$ for $d=1$ and $\alpha=\gamma$ if $d=2.$ Now passing to polar coordinates and using Lemma \ref{lem:asymp_some_integral} for any $z\in(\emax,\emax+\delta_1)$ we get
\begin{align}\label{ishaq_inna}
|(Q_1(z)\phi_0,\phi_1)| \le  & \tilde c_1 + \frac{\tilde c_2 d\omega_d}{c_1}\, T_{d-1+\alpha}\big((\tfrac{z-\emax}{c_1})^{1/2}\big)\nonumber \\
\le &
\begin{cases}
\tilde c_1 - \tilde c_3 \ln(z-\emax) &\text{if $d=1,$}\\
\tilde c_3 &\text{if $d=2,$}
\end{cases}
\end{align}
where $T_\alpha$ is defined in \eqref{def:t_d} and $\tilde c_3>0.$ Thus,  if $d=1,$ then \eqref{qolu_ya_abana} is represented as
\begin{equation}\label{eigenvalue_asymp_d1}
\frac{\eta(z)}{a(z)} = \kappa_0 + \frac{h_2(z)}{\kappa_0 a(z)},
\end{equation}
where $h_2\in C^0(\emax,\emax+\delta_1)$ satisfies $|h_2(z)|\le - \tilde c_4 \ln (z-\emax)$ for some $\tilde c_4>0.$
If $d=2,$ then there exists the norm-limit
$$
(Q_1(\emax)\phi_0,\phi_1):=\lim\limits_{z\to\emax} (Q_1(z)\phi_0,\phi_1).
$$
Repeating the arguments of \eqref{norm_converge_q10} one can show
\begin{equation}\label{estimate_ez_e0}
\big| (Q_1(z)\phi_0,\phi_1) - (Q_1(\emax)\phi_0,\phi_1)\big| \le \tilde c_5 (z-\emax)^{\gamma/4}
\end{equation}
for any $z\in(\emax,\emax+\delta).$ Thus, \eqref{qolu_ya_abana} is rewritten as
\begin{equation}\label{eigenvalue_asymp_d2}
\frac{\eta(z)}{a(z)} = \kappa_0 + \frac{(Q_1(\emax)\phi_0,\phi_1)}{\kappa_0  a(z)} + \frac{h_3(z)}{\kappa_0 a(z)},
\end{equation}
where $h_3\in C^0(\emax,\emax+\delta_1)$ satisfies $|h_3(z)|=O((z-\emax)^{\gamma/4})$ as $z\to\emax.$ Hence, \eqref{dangal_dangal} follows.

Now we prove \eqref{jilmayasan_xayol_surasan}.
Let $\eta_0(z)=\eta(z),\eta_1(z),\ldots$ be all nonzero eigenvalues of $\birman(z)$ counted with multiplicities as $|\eta_0(z)| \ge|\eta_1(z)|\ge \ldots>0$
and let $\nu_1(z)=\big\||\birman(z)|\big\|\ge \nu_2(z) \ge \ldots>0$ be  all eigenvalues of $|\birman(z)|,$ then by \cite[Lemma 3.3]{GK:1969} we have
$$
0<|\eta_0(z)|<\nu_1(z)\quad\text{and}\quad 0< |\eta_0(z)\eta_1(z)| \le \nu_0(z)\nu_1(z).
$$
By Lemma \ref{lem:asymp_lambda_sign_definite_case},
\begin{equation*}
\nu_0(z)= a(z)\Big[\sum\limits_{x\in\Z^d} |\hat v(x)| + o(1) \Big]
\end{equation*}
and
\begin{equation*}
\nu_1(z) \le  C_4 \sum\limits_{x\in\Z^d} |x|^{2-d+ \gamma }|\hat v(x)|,
\end{equation*}
and by \eqref{eigenvalue_asymp_d1}-\eqref{eigenvalue_asymp_d2}
\begin{equation*}
\eta_0(z)= a(z)\Big[\sum\limits_{x\in\Z^d} \hat v(x)  + o(1) \Big].
\end{equation*}
Therefore, there exists $\delta_2\in(0,\delta_1)$ such that
$$
|\eta_1(z)| \le \frac{\nu_0(z) \nu_1(z)}{|\eta_0(z)|} \le \tilde C_5:=C_4 \Big|\sum\limits_{x\in\Z^d} \hat v(x)\Big|^{-1}\,  \sum\limits_{x\in\Z^d} |x|^{2-d+ \gamma }|\hat v(x)| + 1
$$
for any $z\in(\emax,\emax + \delta_2).$ Since $|\eta_1(z)| \le |\eta_0(z)|\le \big\||\birman(z)|\big\|$ for any $z>\emax$ and the map $z\mapsto \big\||\birman(z)|\big\|$ is non-increasing,
$$
\sup\limits_{z\ge \emax+\delta_2} \,\,|\eta_1(z)| \le \big\||\birman(\emax+\delta_2)|\big\|=\big\|\birman(\emax+\delta_2)\big\|.
$$
Since $|\eta_k|\le |\eta_1|$ for any $k\ge1,$ we get
\begin{equation}\label{quyoshga_qarab_oqqan_suv}
|\eta_k(z)| \le   C_5:=\max\{\tilde C_5,\|\birman(\emax+\delta_2)\|\},\qquad k\ge1.
\end{equation}
In particular, from \eqref{jilmayasan_xayol_surasan} we get \eqref{quyoshga_qarab_oqqan_suv}.

Finally, we observe that
\begin{itemize}
\item[--] if $\kappa_0=\sum\limits_{x\in\Z^d} \hat v(x)<0,$  then by \eqref{maximal_eigenvalues}, $\eta_0(z)<0,$ and hence the assertion (a) follows from \eqref{quyoshga_qarab_oqqan_suv};

\item[--] if $\kappa_0=\sum\limits_{x\in\Z^d} \hat v(x)>0,$  then $\lambda(z)=\eta_0(z)$ and the assertion (b) follows from  \eqref{maximal_eigenvalues}, \eqref{eigenvalue_asymp_d1} and \eqref{eigenvalue_asymp_d2}.
\end{itemize}

Now we consider the case $\kappa_0=\sum\limits_{x\in\Z^d} \hat v(x)=0.$ In this case since $v(0)=0,$
\begin{equation}\label{q1f_phi1}
(Q_1(z)f,\phi_1)_{L^2} = \int_{\T^d}\int_{\T^d} \frac{v(\pmax- t)|v|^{1/2}(t- q)\,\d t}{z - \dispersion(t)}\,f(q)\,\,\d q.
\end{equation}
Moreover, repeating the same arguments of \eqref{ishaq_inna} we have
\begin{equation}\label{q1_kuydurgi}
|(Q_1(z)\phi_0,\phi_1)_{L^2}| \le
\begin{cases}
\big\||v|^{1/2}\big\|_{L^2}\|f\|_{L^2} \big[\tilde c_1 - \tilde c_2\ln(z-\emax)\big] & \text{if $d=1,$} \\[1mm]
\tilde c_3\,\big\||v|^{1/2}\big\|_{L^2}\|f\|_{L^2}  & \text{if $d=2 $} \\[1mm]
\end{cases}
\end{equation}
for some $\tilde c_1,\tilde c_2,\tilde c_3>0.$
Thus,
\begin{equation}\label{def_ez}
\kappa_z:=(Q_1(z)\phi_0,\phi_0)= \int_{\T^d} \frac{|v(t-\pmax)|^2\,\d t}{z - \dispersion(t)}\in(0,+\infty)
\end{equation}
and
\begin{equation}\label{def_e0}
\kappa_1:=(Q_1(\emax)\phi_0,\phi_0):= \sup\limits_{z>\emax}\, (Q_1(z)\phi_0,\phi_0)= \int_{\T^d} \frac{|v(t-\pmax)|^2\,\d t}{\emax - \dispersion(t)}<+\infty.
\end{equation}

Note that since $0$ is not an isolated eigenvalue of $Q,$ hence, we cannot directly use  the perturbation theory for $\birman(z)$.

Let us show that if $z-\emax>0$ is sufficiently small, then the spectral projection
\begin{equation}\label{P_epsilon_z}
P_\epsilon(z):=\frac{1}{2\pi i}\int_{|\xi - \kappa_1|=\delta} \Big(\xi - QQ_1(z) + \frac{Q_1(z)}{a(z)^{1/2}[\kappa_z + \epsilon]^{1/2}}\Big)^{-1}d\xi
\end{equation}
is non-zero for any sufficiently small $|\epsilon|,\delta>0.$ Indeed, since $QQ_1(z)$ is rank-one and $(Q_1(z)\phi_0,\phi_0)$ is the unique positive eigenvalue of $QQ_1(z),$
$$
\Big(\xi - QQ_1(z)\Big)^{-1} = \frac{1}{\xi^2} + \frac{QQ_1(z)}{\xi(\xi - \kappa_z)},\qquad \xi\in\C\setminus\{0,\kappa_z\}.
$$
Therefore, for $\delta,|\epsilon|<\kappa_1/8$
\begin{align*}
P_\epsilon(z):=&\frac{1}{2\pi i}\int_{|\xi - \kappa_1|=\delta} \left[1-\Big(\xi - QQ_1(z)\Big)^{-1}\frac{Q_1(z)}{a(z)^{1/2}[\kappa_z + \epsilon]^{1/2}}\right]^{-1}\Big(\xi - QQ_1(z)\Big)^{-1} d\xi\\
=&\frac{1}{2\pi i}\int_{|\xi - \kappa_1|=\delta} \left[1-\left(\frac{1}{\xi^2} + \frac{QQ_1(z)}{\xi(\xi - \kappa_z)}\right)\frac{Q_1(z)}{a(z)^{1/2}[\kappa_z + \epsilon]^{1/2}}\right]^{-1}\left(\frac{1}{\xi^2} + \frac{QQ_1(z)}{\xi(\xi - \kappa_z)}\right) d\xi.
\end{align*}
Notice that
$$
\|QQ_1(z)f\|=\left|\int_{\T^d}\int_{\T^d}\frac{v(\pmax - t)|v|^{1/2}(t-q)\,\d t}{z-\dispersion(t)} \,f(q)\,\,\d q\right|\le \|f\|\,\,\||v|^{1/2}\|\,\int_{\T^d}\frac{|v(\pmax - t)|\,\,\d t}{z-\dispersion(t)}.
$$
Thus, as in the proof of \eqref{ishaq_inna},
\begin{equation*}
\frac{\|QQ_1(z)f\|}{\|f\|} \le  \||v|^{1/2}\|\,
\begin{cases}
\tilde c_4-\tilde c_5\ln(z-\emax) &\text{if $d=1,$}\\[2mm]
\tilde c_5 &\text{if $d=2$}
\end{cases}
\end{equation*}
for some $\tilde c_4,\tilde c_5>0$ independent of $z.$ Since $a(z)$ behaves as $(z-\emax)^{-1/2}$ for $d=1$ and as $-\ln(z-\emax)$ for $d=2,$ by \eqref{limit_q1_d1} and \eqref{limit_q1_d2} we obtain
\begin{align*}
&\left \|\left(\frac{1}{\xi^2} + \frac{QQ_1(z)}{\xi(\xi - \kappa_z)}\right)  \frac{Q_1(z)}{a(z)^{1/2}[\kappa_z + \epsilon]^{1/2}}\right\| \nonumber \\
&\hspace*{5cm}\le
\begin{cases}
-c_\delta (z-\emax)^{\gamma/4} \ln(z-\emax) &\text{if $d=1,$}\\[2mm]
-\frac{c_\delta}{\ln(z-\emax)} &\text{if $d=2$}
\end{cases}
\end{align*}
for some $c_\delta>0,$ where we took also account that $|\xi - \kappa_z|=\delta>0.$ In particular, for all sufficiently small $z-\emax>0$
\begin{align*}
P_\epsilon(z):=&\frac{1}{2\pi i}\int_{|\xi - \kappa_1|=\delta} \sum\limits_{n\ge0}  \left[\left(\frac{1}{\xi^2} + \frac{QQ_1(z)}{\xi(\xi - \kappa_z)}\right)\frac{Q_1(z)}{a(z)^{1/2}[\kappa_z + \epsilon]^{-1/2}}\right]^{n}\left(\frac{1}{\xi^2} + \frac{QQ_1(z)}{\xi(\xi - \kappa_z)}\right)d\xi,
\end{align*}
and thus,
\begin{align*}
P_\epsilon(z)= & \frac{1}{2\pi i}\int_{|\xi - \kappa_1|=\delta} \left(\frac{1}{\xi^2} + \frac{QQ_1(z)}{\xi(\xi - \kappa_z)}\right) \,d\xi\nonumber \\
+ & \sum\limits_{n\ge1} \frac{1}{2\pi i}\int_{|\xi - \kappa_1|=\delta} \left[\left(\frac{1}{\xi^2} + \frac{QQ_1(z)}{\xi(\xi - \kappa_z)}\right)\frac{Q_1(z)}{a(z)^{1/2}[\kappa_z + \epsilon]^{-1/2}}\right]^{n}\left(\frac{1}{\xi^2} + \frac{QQ_1(z)}{\xi(\xi - \kappa_z)}\right)\,d\xi\nonumber \\
=& \frac{QQ_1(z)}{\kappa_z} + \sum\limits_{n\ge1} \frac{(\kappa_z+ \epsilon)^{n/2}}{a(z)^{n/2}}\left[\frac{Q_1^nQQ_1}{\kappa_z^{2n+1}} + \sum\limits_{0\le k\le n-1} \frac{Q_1(z)^k QQ_1(z)^{n-k+1}}{\kappa_z^{2n+1}}\right],
\end{align*}
where in the second equality we used the Cauchy's Integral Theorem for analytic functions. Therefore,
\begin{equation}\label{projection_perturba}
P_\epsilon(z)\phi_0 = \phi_0 + \psi_z^\epsilon,
\end{equation}
where
\begin{equation}\label{def_psi_z_epsi}
\psi_z^\epsilon:=\sum\limits_{n\ge1} \frac{(\kappa_z+ \epsilon)^{n/2}}{a(z)^{n/2}}\left[\frac{Q_1(z)^n\phi_0 }{\kappa_z^{2n}} + \sum\limits_{0\le k\le n-1} \frac{ (Q_1(z)^{n-k+1}\phi_0,\phi_1)_{L^2}^{}\,Q_1(z)^k\phi_0}{\kappa_z^{2n+1}}\right].
\end{equation}
Now if $d=1,$ then by \eqref{q1f_phi1} and \eqref{limit_q1_d1} one has
\begin{align*}
\frac{(\kappa_z+ \epsilon)^{n/2}}{a(z)^{n/2}} \left\|\frac{Q_1(z)^n\phi_0 }{\kappa_z^{2n}} + \sum\limits_{0\le k\le n-1} \frac{ (Q_1(z)^{n-k+1}\phi_0,\phi_1)_{L^2}^{} \,Q_1(z)^k\phi_0 }{\kappa_z^{2n+1}}\right\|\qquad \qquad \\
\le - \tilde C n \,a(z)^{-\gamma/2}\ln(z-\emax)
\end{align*}
for some $\tilde C>0$ depending on $\kappa_1$ and $\delta.$  Hence, choosing $\delta_1>0$ such that $-\tilde C a(z)^{-\gamma/2}\ln(z-\emax)<\frac14$ for $z\in(\emax,\emax+\delta_1)$ we get
\begin{equation}\label{psiz_baho_d1}
\|\psi_z^\epsilon\|\le  \frac{-\tilde C\, a(z)^{-\gamma/2}\ln(z-\emax)}{(1- a(z)^{-\gamma/2})^2}< \frac{\|\phi_0\|}{2}.
 \end{equation}
If $d=2,$ then by \eqref{limit_q1_d2} $\|Q_1(z)\|\le C_2$ for any $z\ge\emax.$ Therefore, we can choose $\delta_1>0$ such that
\begin{equation}\label{psiz_baho_d2}
\|\psi_z^\epsilon\|\le  \frac{\tilde C\, a(z)^{-1/2}}{(1- a(z)^{-1/2})^2}< \frac{\|\phi_0\|}{2}
 \end{equation}
for any $z\in(\emax,\emax+\delta_1).$ The estimates \eqref{psiz_baho_d1} and \eqref{psiz_baho_d2} implies that for such $z,$ $P_\epsilon(z)\ne0.$ Let us show that there exists $\delta_2\in(0,\delta_1)$ such that for any $z\in(\emax,\emax+\delta_2)$ there exists a solution $\epsilon_z$ to the equation
$$
R(z,\epsilon)=0,
$$
where $R:[\emax,\emax+\delta_1)\times (-\kappa_1/8,\kappa_1/8) \to L^2(\T^d)$ is defined as
$$
R(z,\epsilon):=
\begin{cases}
P_\epsilon(z)\phi_0-\frac{QQ_1(z)P_\epsilon(z)\phi_0}{\kappa_z + \epsilon} - \frac{Q_1(z)P_\epsilon(z)\phi_0}{\sqrt{a(z)}(\kappa_z + \epsilon)^{1/2}} & \text{if $z\in(\emax,\emax+\delta_1),$}\\
\frac{\epsilon \phi_0}{\kappa_1 + \epsilon} & \text{if $z=\emax.$}
\end{cases}
$$
Indeed,  notice that
$$
\lim\limits_{z\to\emax} \|R(z,\epsilon) - R(\emax,\epsilon)\|_{L^2}=0
$$
so that $R\in C^0([\emax,\emax+\delta_1)\times (-\kappa_1/8,\kappa_1/8);L^2(\T^d))$ and
$$
\lim\limits_{z\to\emax} R(z,\epsilon)= R(\emax,0)=0.
$$
Moreover, since $R(z,\cdot)$ is analytic around $\epsilon=0$ and
$$
\frac{\partial R(z,\epsilon)}{\partial \epsilon}\Big|_{(z,\epsilon)= (\emax,0)} = \frac{\phi_0}{\kappa_1}\ne0,
$$
by the Implicit Function Theorem, there exists $\delta_2\in(0,\delta_1)$ such that for any $z\in[\emax,\delta_2)$ there exists a unique $\epsilon_z\in(-\kappa_1/8,\kappa_1/8)$ such that
$$
R(z,\epsilon_z) =0,\qquad z\in[\emax,\epsilon_z).
$$
Since $[\emax,\emax+\delta_1)\subset\R,$  the fact that $\emax$ is not an interior point does not affect, since we do not need that any regularity of the implicit function $\epsilon_z.$ However,  notice that $\epsilon_z\to0$ as $z\searrow \emax.$

Let us now introduce
$$
\chi(z):=\sqrt{a(z)}\big[\kappa_z + \epsilon_z\big],\qquad z\in(\emax,\emax+ \delta_2);
$$
then $\chi(z)>0$ and by the definition of $R(z,\epsilon)$ and $\epsilon_z,$ we have
\begin{equation*}
\frac{\chi^2}{a(z)}\,\phi_z = QQ_1(z) \phi_z + \frac{\chi(z)Q_1(z)\phi_z}{a(z)},\qquad z\in(\emax,\emax+\delta_2),
\end{equation*}
where $\phi_z:=P_{\epsilon_z}(z)\phi_0\ne0.$ Thus,
\begin{equation}\label{kernella_same}
\phi_z\in \Ker\left[1- \left(\frac{a(z)}{\chi(z)} + \frac{a(z)^2Q_1(z)}{\chi(z)^2}\right)\frac{Q_1(z)}{a(z)}\right] =\Ker\left[1 - \frac{a(z)Q}{\chi(z)} - \frac{Q_1(z)}{\chi(z)}\right],
\end{equation}
where in the equality we used \eqref{resolvent_Q}. By \eqref{qora_kozli_sanamjon}, this implies that $\phi_z$ is an eigenvector of $\birman(z)$ associated to its positive eigenvector $\chi(z).$ Recall that $\chi(z) \le \lambda(z).$

To establish $\chi(z)=\lambda(z)$ and $\lambda(z)$ is a simple eigenvalue of $\birman(z),$ we consider the self-adjoint operator
$$
\tilde \hamiltonian_\mu:=\hamiltonian_0 - \mu \potential;
$$
notice that $\tilde \hamiltonian_\mu$ differes from $\hamiltonian_\mu$ only with the sign of $\potential.$ Then associated Birman-Schwinger operator $\tilde \birman(z)$ satisfies $\tilde \birman(z) = - \birman(z).$ Since $\sum_x\hat v(x)=0,$ by the arguments above, there exists a positive eigenvalue $\tilde\xi(z)$ of $\tilde\birman(z)$ which satisfies
$$
\tilde \chi(z) = \sqrt{a(z)}(\kappa_z + o(1))\qquad \text{as $z\to\emax$.}
$$
Recall that $-\tilde \chi(z)$ is a negative eigenvalue of $\birman(z).$ Let us enumerate all nonzero eigenvalues $\eta_0(z),\eta_1(z),\ldots$ (counted with multiplicities) of $\birman(z)$ as  $|\eta_0(z)|\ge |\eta_1(z)|\ge 0$ and also all positive eigenvalues $\nu_0\ge \nu_1\ge \ldots$  of $|\birman(z)|.$ By \cite[Lemma 3.3]{GK:1969},
\begin{equation}\label{gihberg_lemma}
|\eta_0(z)\eta_1(z)\eta_2(z)| \le \nu_0(z)\nu_1(z)\nu_2(z).
\end{equation}
Note that by \eqref{eigenvalue_asymp_d1}-\eqref{eigenvalue_asymp_d2} applied to $|\birman(z)|,$ we have
\begin{equation*}
\frac{\nu_0(z)}{a(z)}= \sum\limits_{x\in\Z^d} |\hat v(x)| + o(1) \qquad\text{as $z\to\emax.$}
\end{equation*}
Moreover, by  \eqref{boje_moi},
\begin{equation}\label{asymp_nu12}
\sup\limits_{z>\emax}\,|\nu_i(z)| \le C_4 \sum\limits_{x\in\Z^d} |x|^{2-d+\gamma}|\hat v(x)|,\qquad i=1,2,
\end{equation}
and hence, if $\birman(z)$ has at least two positive eigenvalues with asymptotics $\ge \sqrt{a(z)}(c + o(1)),$ then recalling the definition of $-\tilde \chi(z)$ we get
$$
|\eta_0(z)\eta_1(z)\eta_2(z)|\ge a(z)^{3/2}(\tilde c + o(1))
$$
for some $\tilde c>0$ as $z\to+\infty.$  Hence, by \eqref{gihberg_lemma}
$$
\tilde c +o(1) \le \frac{\tilde C}{a(z)}\to0\qquad\text{as $z\to\emax,$}
$$
a contradiction. Thus,
$$
\lambda(z)=\sqrt{a(z)} \,[\kappa_1 + o(1)]\qquad \text{as $z\to\emax.$}
$$
Analogously,
$$
\tilde \lambda(z):=\sup\sigma(\tilde \birman(z)) =\sqrt{a(z)} \,[\kappa_1 + o(1)]\qquad\text{as $z\to\emax.$}
$$
Since  $|\eta_0\eta_1| = \lambda(z)\tilde \lambda(z)$ for small and positive $z-\emax,$  \eqref{gihberg_lemma} implies
$$
|\eta_n(z)|\le |\eta_2(z)| \le \frac{\nu_0(z)\nu_1(z)\nu_2(z)}{\lambda(z)\tilde\lambda(z)}
$$
for any $n\ge2.$ Now using the asymptotics of $\lambda(z),$ $\nu_0(z)$ and $\tilde \lambda(z)$ as well as estimate \eqref{asymp_nu12}, we find $\delta_3\in(0,\delta_2)$ such that
\begin{equation}\label{local_est}
\lambda_n(z) \le \tilde C_5:=\Big(\frac{C_4}{\kappa_1}\, \sum\limits_{x\in\Z^d} |\hat v(x)|\Big)^2 +1,\qquad n\ge1,
\end{equation}
for all $z\in(\emax,\emax+\delta_3),$ where $\lambda_0(z)=\lambda(z)\ge \lambda_1(z)\ge \ldots$ are all positive eigenvalues of $\birman(z)$ (counted with their multiplicities). Since $\lambda_n(z)\le \lambda_0(z) \le \|\birman(z)\|,$
\begin{equation}\label{local_inf_est}
\sup\limits_{z\ge \emax+\delta_3} \lambda_n(z) \le\|\birman(\emax+\delta_3)\|,\qquad n\ge0.
\end{equation}
Now \eqref{local_est} and \eqref{local_inf_est} implies  \eqref{jilmayasan_xayol_surasan} with $C_5:=\max\{\tilde C_5,\|\birman(\emax + \delta_3)\|\}.$

It remains to prove \eqref{issaqbaazi_he}. We set
\begin{equation}\label{resideue_lalalal}
\epsilon(z):= \frac{\lambda(z)^2}{a(z)} - \kappa_z,
\end{equation}
where $\kappa_z$ and $\kappa_1$ are given in \eqref{def_ez} and \eqref{def_e0}, respectively. By \eqref{estimate_ez_e0}, there exists $\delta_4\in(0,\delta_3)$ such that
\begin{equation}\label{ez_to_e0_estim}
\kappa_z = \kappa_1 + h_4(z),
\end{equation}
where $h_4\in C^0([\emax,\emax+\delta_1))$ with $h_4(z)=O\big((z-\emax)^{\gamma/4}\big)$ as $z\to\emax.$
Now we consider the spectral projection $P_\epsilon(z)$ in \eqref{P_epsilon_z} with $\epsilon:=\epsilon(z).$
By \eqref{projection_perturba}, $P_{\epsilon(z)}(z)\phi_0 = \phi_0 + \psi_z^{\epsilon(z)}.$ Since
$$
\frac{\lambda(z)^2}{a(z)}\,P_{\epsilon(z)}(z)\phi_0 = (QQ_1(z) + \frac{\lambda(z)Q_1(z)}{a(z)})\,P_{\epsilon(z)}(z)\phi_0
$$
(see \eqref{kernella_same}), we have
$$
\epsilon(z)\phi_0 =- [\kappa_z + \epsilon(z)]\psi_z^{\epsilon(z)}  + \phi_0 (Q_1(z)\psi_z^{\epsilon(z)},\phi_1)  + \frac{Q_1(z)\phi_0 + Q_1(z)\psi_z^{\epsilon(z)}}{a(z)^{1/2}}\,\big[\kappa_z + \epsilon(z)\big]^{1/2}.
$$
Multiplying this by $\phi_0$ and using the definition \eqref{def_psi_z_epsi} of $\psi_z^{\epsilon(z)}$ we get
$$
\begin{aligned}
\epsilon(z)\|\phi_0\|_{L^2}^2 = &- [\kappa_z + \epsilon(z)](\psi_z^{\epsilon(z)},\phi_0)_{L^2}  + \|\phi_0\|_{L^2}^2 (Q_1(z)\psi_z^{\epsilon(z)},\phi_1) \\
&+ \frac{(Q_1(z)\phi_0,\phi_0)_{L^2} + (Q_1(z)\psi_z^{\epsilon(z)},\phi_0)_{L^2}}{a(z)^{1/2}}\,\big[\kappa_z + \epsilon(z)\big]^{1/2}.
\end{aligned}
$$
Now if $d=1,$ then by \eqref{psiz_baho_d1}, \eqref{q1_kuydurgi}, \eqref{limit_q1_d1},
$$
|\epsilon(z)| \le \tilde C a(z)^{-\gamma/2}\ln^2(z-\emax),\qquad z\in(\emax,\emax+\delta_4).
$$
This, \eqref{ez_to_e0_estim} and \eqref{resideue_lalalal} implies \eqref{issaqbaazi_he}.

If $d=2,$ then  using the definition \eqref{def_psi_z_epsi} of $\psi_z^{\epsilon(z)},$ \eqref{ez_to_e0_estim} and \eqref{limit_q1_d2} we get
$$
(\psi_z^{\epsilon(z)},\phi_0)_{L^2} =a(z)^{-1/2}\Big[\frac{(Q_1(\emax)\phi_0,\phi_0)}{\kappa_1^{3/2}} + \frac{\|\phi_0\|^2(Q_1(\emax)^2\phi_0,\phi_0)}{\kappa_1^{5/2}} \Big] + a(z)^{-1/2} \,h_5(z),
$$
where $h_5\in C^0(\emax,\emax+\delta)$ with $h_5(z)=O((z-\emax)^{\gamma/4}).$ Moreover, by \eqref{ez_to_e0_estim}, \eqref{norm_converge_q1}, \eqref{q1f_phi1}, \eqref{def_psi_z_epsi}
$$
\begin{aligned}
(Q_1(z)\psi_z^{\epsilon(z)},\phi_1) = & a(z)^{-1/2}\Big[\frac{(Q_1(\emax)^2\phi_0,\phi_0)}{\kappa_1^{3/2}} + \frac{(Q_1(\emax)\phi_0,\phi_0)\,(Q_1(\emax)^2\phi_0,\phi_0)}{\kappa_1^{5/2}} \Big] \\
&+ a(z)^{-1/2} \,h_6(z),
\end{aligned}
$$
where $h_6\in C^0(\emax,\emax+\delta)$ with $h_6(z)=O((z-\emax)^{\gamma/4}).$
Moreover, again by \eqref{norm_converge_q1}
$$
\frac{(Q_1(z)\phi_0,\phi_0)_{L^2} + (Q_1(z)\psi_z^{\epsilon(z)},\phi_0)_{L^2}}{a(z)^{1/2}}\,\big[\kappa_z + \epsilon(z)\big]^{1/2}= \frac{\kappa_1^{1/2}(Q_1(\emax)\phi_0,\phi_0)}{a(z)^{1/2} } + \frac{h_7(z)}{a(z)^{1/2}},
$$
where $h_7\in C^0(\emax,\emax+\delta)$ with $h_7(z)=O((z-\emax)^{\gamma/4}).$ Hence,
\begin{equation}\label{kkkkka}
\epsilon(z) = \tilde C a(z)^{-1/2} + a(z)^{-1/2}h_8(z),
\end{equation}
where $h_8\in C^0(\emax,\emax+\delta)$ with $h_8(z)=O((z-\emax)^{\gamma/4}).$ Finally, since $a(z)^{1/2} (z-\emax)^{\gamma/4} = o((z-\emax)^{\gamma/8}),$ From \eqref{resideue_lalalal}, \eqref{ez_to_e0_estim} and \eqref{kkkkka}  we get \eqref{issaqbaazi_he}.
\end{proof}

Combining Proposition \ref{prop:asym_eigenvalues_bsh} and Lemma \ref{lem:b_sh_principle} we get

\begin{corollary}\label{cor:existence_for small_mu}
Assume Hypothesis \ref{hyp:maina} and let $C_5>1$ be given by Proposition \ref{prop:asym_eigenvalues_bsh}. Then for any $\mu\in(0,\frac{1}{C_5}):$
\begin{itemize}
\item[(a)] if $\sum\limits_{x\in\Z^d} \hat v(x)<0,$ then $\sigma_\disc(\hamiltonian_\mu) \cap (\emax,+\infty) =\emptyset;$

\item[(b)] if $\sum\limits_{x\in\Z^d} \hat v(x)\ge 0,$ then $\sigma_\disc(\hamiltonian_\mu) \cap (\emax,+\infty)$ is a singleton $\{E(\mu)\}.$ Moreover, the map $\mu\in(0,\frac{1}{C_5})\mapsto E(\mu)$ is analytic,  strictly increasing and $E(\mu)\to \emax$ as $\mu\to0$.
\end{itemize}
\end{corollary}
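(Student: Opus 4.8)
The plan is to transfer everything from the spectral picture of $\birman(z)$ furnished by Proposition~\ref{prop:asym_eigenvalues_bsh} to $\hamiltonian_\mu$ through the Birman--Schwinger principle. Recall that by \eqref{essential_spectrum} one has $\sigma_\ess(\hamiltonian_\mu)=[\emin,\emax]$, so every point of $\sigma_\disc(\hamiltonian_\mu)\cap(\emax,+\infty)$ is an isolated eigenvalue of finite multiplicity, and by Lemma~\ref{lem:b_sh_principle} a number $z>\emax$ is an eigenvalue of $\hamiltonian_\mu$ if and only if $1/\mu\in\sigma(\birman(z))$, with $\dim\Ker(\hamiltonian_\mu-z)=\dim\Ker(1-\mu\birman(z))$. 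Since $\birman(z)$ is compact with real spectrum and $1/\mu>0$, the membership $1/\mu\in\sigma(\birman(z))$ means $1/\mu=\lambda_k(z)$ for some $k\ge0$ in the notation of Proposition~\ref{prop:asym_eigenvalues_bsh}. The decisive remark is that for $\mu\in(0,\tfrac1{C_5})$ one has $1/\mu>C_5\ge\lambda_k(z)$ for all $k\ge1$ and all $z>\emax$ by \eqref{jilmayasan_xayol_surasan}; hence $1/\mu\in\sigma(\birman(z))$ can occur only for $1/\mu=\lambda_0(z)=\lambda(z)$, and when it does, $\lambda(z)$ is a \emph{simple} eigenvalue of $\birman(z)$, since then $\lambda_1(z)\le C_5<\lambda_0(z)$.

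Assertion (a) is then immediate: if $\sum_{x\in\Z^d}\hat v(x)<0$, Proposition~\ref{prop:asym_eigenvalues_bsh}(a) gives $\lambda(z)\le C_5<1/\mu$ for every $z>\emax$, so $1/\mu\notin\sigma(\birman(z))$ and $\dim\Ker(\hamiltonian_\mu-z)=0$ for all $z>\emax$, i.e. $\sigma_\disc(\hamiltonian_\mu)\cap(\emax,+\infty)=\emptyset$. For (b), assume $\sum_{x\in\Z^d}\hat v(x)\ge0$. Under the standing assumption that $\{\hat v>0\}\ne\emptyset$, Lemma~\ref{lem:properties_of_lambda} tells us that $\lambda\colon(\emax,+\infty)\to(0,+\infty)$ is continuous and strictly decreasing; the next step is to show it is onto.

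For surjectivity I would examine the two ends. From the kernel bound $\|B(\cdot,\cdot;z)\|_{L^\infty((\T^d)^2)}\le\mathrm{dist}(z,[\emin,\emax])^{-1}\sum_{x\in\Z^d}|\hat v(x)|$ one gets $\|\birman(z)\|\to0$, hence $\lambda(z)\to0$, as $z\to+\infty$. At the other end, since $\dispersion$ has a nondegenerate maximum at $\pmax$ and $d\le2$, the integral $a(z)=\int_{\T^d}(z-\dispersion(p))^{-1}\,\d p$ diverges as $z\searrow\emax$; combining this with Proposition~\ref{prop:asym_eigenvalues_bsh}(b), where $\lambda(z)$ is comparable to $\kappa_0\,a(z)$ when $\kappa_0>0$ and to $(\kappa_1 a(z))^{1/2}$ when $\kappa_0=0$, yields $\lambda(z)\to+\infty$ as $z\searrow\emax$. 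By the intermediate value theorem and strict monotonicity, $\lambda$ is a homeomorphism of $(\emax,+\infty)$ onto $(0,+\infty)$. Consequently, for each $\mu\in(0,\tfrac1{C_5})$ there is exactly one $E(\mu)\in(\emax,+\infty)$ with $\lambda(E(\mu))=1/\mu$; by the first paragraph $\dim\Ker(\hamiltonian_\mu-E(\mu))=1$, while $\dim\Ker(\hamiltonian_\mu-z)=0$ for every other $z>\emax$ (there $1/\mu$ equals no $\lambda_k(z)$). Hence $\sigma_\disc(\hamiltonian_\mu)\cap(\emax,+\infty)=\{E(\mu)\}$ with $E(\mu)$ simple.

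It remains to record the properties of $\mu\mapsto E(\mu)=\lambda^{-1}(1/\mu)$: since $\lambda^{-1}$ is continuous and strictly decreasing and $\mu\mapsto1/\mu$ is strictly decreasing, $E$ is strictly increasing, and $1/\mu\to+\infty$ as $\mu\searrow0$ forces $E(\mu)\to\emax$. For analyticity, $\hamiltonian_\mu=\hamiltonian_0+\mu\potential$ is a linear, hence analytic, family of self-adjoint operators and $E(\mu)$ is a simple eigenvalue lying strictly above $\sigma_\ess(\hamiltonian_\mu)=[\emin,\emax]$, so it is isolated from the rest of the spectrum; analytic perturbation theory \cite{Kato:1995} then yields analyticity of $\mu\in(0,\tfrac1{C_5})\mapsto E(\mu)$. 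The two points requiring care in this scheme are the simplicity assertion, which rests entirely on the uniform separation \eqref{jilmayasan_xayol_surasan} of $\lambda_0$ from the lower positive eigenvalues, and the surjectivity of $\lambda$ — i.e. the blow-up $\lambda(z)\to+\infty$ as $z\searrow\emax$ — which is where Proposition~\ref{prop:asym_eigenvalues_bsh}(b) and the divergence of $a(z)$ in dimensions one and two are both needed.
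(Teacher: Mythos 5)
Your proof is correct and follows essentially the same route as the paper: the Birman--Schwinger principle plus the uniform bound \eqref{jilmayasan_xayol_surasan} on $\lambda_k(z)$, $k\ge1$, reduce everything to the single eigenvalue branch $\lambda(z)$, whose continuity, strict monotonicity, decay at $z\to+\infty$ and blow-up at $z\searrow\emax$ (from Proposition~\ref{prop:asym_eigenvalues_bsh}(b)) give existence, uniqueness and the stated properties of $E(\mu)$. Your treatment is if anything slightly more complete, since you make explicit the simplicity of $E(\mu)$ and the analyticity via Kato's perturbation theory, which the paper only records in the introduction.
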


\begin{proof}
Let $\lambda(z)$ be given by \eqref{lambdaz}.

(a) If $\sum\limits_{x\in\Z^d} \hat v(x)<0,$ then by Proposition \ref{prop:asym_eigenvalues_bsh} (a)  for any $\mu<\frac{1}{C_5}$ and $z>\emax$ all positive eigenvalues of $\mu\birman(z)$ will be less than $1,$ i.e.,
$$
\Ker(1+\mu\birman(z))=0.
$$
Thus, by Lemma \ref{lem:b_sh_principle}, $\sigma_\disc(\hamiltonian_\mu)\cap (\emax,+\infty)=\emptyset.$

(b)  Assume that $\sum\limits_{x\in\Z^d} \hat v(x)\ge 0.$
Given $z>\emax,$ if $\lambda_0(z)=\lambda(z) \ge \lambda_1(z) \ge\ldots >0$ are all positive eigenvalues of $\birman(z),$ then by \eqref{jilmayasan_xayol_surasan}, $\mu\lambda_k(z)<1$ for any $\mu<\frac{1}{C_5}$ and $k\ge1.$ In particular, $\Ker (1+\mu\birman(z))$ is at most one-dimensional. By Proposition \ref{prop:asym_eigenvalues_bsh} (b), $\lambda(z) \to+\infty$ as $z\to \emax.$ Moreover, by Lemma \ref{lem:properties_of_lambda}, $\lambda(\cdot)$ is continuous and strictly decreasing in $(\emax,+\infty),$ and $\lim\limits_{z\to+\infty}\lambda(z)=0.$ Therefore,  for any $\mu\in(0,\frac{1}{C_5})$ there exists a unique $E(\mu)>\emax$ such that
$\mu \lambda(E(\mu)) =1.$ By Lemma \ref{lem:b_sh_principle}, $E(\mu)$ is the unique eigenvalue of $\hamiltonian_\mu$ in $(\emax,+\infty).$ By the Implicit Function Theorem in the monotonous case, the map $\mu\in(0,\frac{1}{C_5}) \mapsto E(\mu)$ is strictly increasing  and $E(\mu)\to \emax$ as $\mu\to0$.
\end{proof}

\section{Proofs of the main results}\label{sec:proof_of_main_results}

In this section we prove main results of the paper. 

\begin{proof}[Proof of Theorem \ref{teo:finiteness_eigenvalues}]
We show only \eqref{posi_eigenvalues}, and the proof of \eqref{nega_eigenvalues} is similar. Recall that  $\cN^+(\hat\hamiltonian_\mu,\emax) = \cN^+(\hamiltonian_\mu,\emax).$ Since
$$
\hamiltonian_\mu\le \hamiltonian_0 + \mu |\potential|,
$$
by the minmax principle, $\cN^+(\hamiltonian_\mu,\emax)\le \cN^+(\hamiltonian_0 + \mu |\potential|,\emax).$ Hence,
it suffices to establish
\begin{equation}\label{number_eigen_neg}
\cN^+(\hat\hamiltonian_0 + \mu |\hat\potential|,\emax) \le 1+ C_4\mu\,\sum\limits_{x\in\Z^d} |x|^{2-d+ \gamma }  |\hat v(x)|
\end{equation}
for $C_4$ of \eqref{boje_moi}.
Recall by \cite[Lemma 2.1 (iv)]{BPL:2018} that given $z>\emax,$
$$
\cN^+(\hamiltonian_0 + \mu |\potential|,z) = \cN^+(\mu|\birman(z)|,1).
$$
Let $\nu_0(z):=\big\||\birman(z)|\big\| \ge \nu_1(z)\ge\ldots>0$ be all positive eigenvalues of the $|\birman(z)|;$ since $\nu_k(z)\to0$ as $k\to\infty,$ there exists a unique  $k_\mu\ge0$ such that $\mu\nu_{k_\mu}\ge1$ and $\mu\nu_{k_\mu +1}<1.$ Therefore, by Lemma \ref{lem:asymp_lambda_sign_definite_case},
$$
\begin{aligned}
\cN^+(\hamiltonian_0+\mu |\potential|,z) = & \cN^+( \mu|\birman(z)|, 1)  =  1+ k_\mu \le 1 + \mu\sum\limits_{i=1}^{k_\mu} \nu_i(z) \\
\le  &1+ \mu \Big[\Tr(|\birman(z)|) - \nu_0(z)\Big] \le 1+ C_4 \mu \sum\limits_{x\in\Z^d} |x|^{2-d+ \gamma } |\hat v(x)|
\end{aligned}
$$
for any $z>\emax.$ Now letting $z\searrow \emax$ we get \eqref{number_eigen_neg}.
\end{proof}

\begin{proof}[Proof of Theorem \ref{teo:existence_and_nonexistence}]
Let $C_4(\dispersion,\hat v)$ and $C_4(-\dispersion,-\hat v)$ be given by Lemma  \ref{prop:asym_eigenvalues_bsh} applied with $(\dispersion,\hat v)$ and $(-\dispersion,-\hat v),$ respectively. Let $\mu_o:=\mu_0(\dispersion,\hat v):=\min\{1/C_4(\dispersion,\hat v),1/C_4(-\dispersion,-\hat v)\}>0.$  Now assertions of Theorem \ref{teo:existence_and_nonexistence} for small $\mu,$ i.e., for $\mu\in(0,\mu_o),$ follows from Corollary \ref{cor:existence_for small_mu} applied with $(\dispersion,\hat v)$ and $(-\dispersion,-\hat v),$ respectively.

Now we prove assertion (1) for all $\mu>0.$   Let
$$
E_0(\mu):=\|\hamiltonian_\mu\|:=\sup\sigma(\hamiltonian_\mu).
$$
Note that $E_0(\mu)=E(\mu)>\emax$ for $\mu\in(0,\mu_o),$ where $E(\mu)$ is the unique eigenvalue of $\hamiltonian_\mu$ given
by Corollary \eqref{cor:existence_for small_mu} (b).
By \eqref{essential_spectrum}
$$
E(\mu)= \sup\limits_{f\in L^2(\T^d),\,\|f\|=1}\,\, \max\{\emax, (\hamiltonian_\mu f,f)\}.
$$
Since the map $\mu\in(0,+\infty)\mapsto \min\{\emin, (\hamiltonian_\mu f,f)\}$ is nonincreasing for any $f\in L^2(\T^d)$ so is  $\mu\in(0,+\infty)\mapsto E_0(\mu).$ In particular, $E_0(\mu)>\emax$ for all $\mu>0.$ Thus, $E_0(\mu)\in\sigma_\disc(\hamiltonian_\mu)\cap(\emax,+\infty).$

The proof of assertion (2) follows from applying assertion (1) with $-\dispersion$ and $-\hat v,$ respectively.
\end{proof}

\begin{proof}[Proof of Theorem \ref{teo:asymptotics}]
We establish only the asymptotics of $E(\mu),$ i.e., in the case $\sum_x\hat v(x)\ge0,$ then the asymptotics of $e(\mu)$ follows by applying the established asymptotics  with $-\dispersion$ and $-\hat v.$  Let  $\mu_o>0$ be given by Theorem \ref{teo:existence_and_nonexistence}; recall that for any $\mu\in(0,\mu_o),$ $E(\mu)$ is the unique eigenvalue of $\hamiltonian_\mu$ in $(\emax,+\infty).$ By Lemma \ref{lem:b_sh_principle},
\begin{equation}\label{eigen_equaiaia}
\mu \lambda(E(\mu))= 1,\qquad \mu\in(0,\mu_o).
\end{equation}
We find the asymptotics of $E(\mu)$ using the asymptotics \eqref{dangal_dangal} and \eqref{issaqbaazi_he} of $\lambda(\cdot).$ Let us first establish the asymptotics of $a(z)$ as $z\to\emax.$ Since $\dispersion(\cdot)$ is has a unique non-degenerate maximum at $\pmax$ and $\dispersion$ is $C^{3,\alpha}$ around $\pmax,$ by the Morse Lemma there exists a neighborhood $U_{\pmax}\subset\T^d$ and a $C^{1,\alpha}$-diffeomorphism $\varphi: B_\gamma(0)\subset\R^d \mapsto U_{\pmax}$ such that $\varphi(0)=\pmax$ and
\begin{equation*}
\dispersion(\varphi(u)) = \emax - u^2,\qquad u\in B_\gamma(0).
\end{equation*}
Without loss of generality, we assume that $\alpha\in(0,\gamma/8].$
Writing
$$
a(z) = \int_{U_{\pmax}} \frac{\,\,\d q}{\dispersion(q) - z} + \int_{\T^d\setminus U_{\pmax}} \frac{\,\d q}{\dispersion(q) - z} =:I_1(z) +I_2(z),
$$
we observe that $I_2(\cdot)$ is analytic at $z=\emax.$ In $I_1(z)$ we make the change of variables  $q=\varphi(u):$
$$
\begin{aligned}
I_1(z) =& \int_{B_\gamma(0)} \frac{J(\varphi(u))du}{u^2 +z -\emax} \\
=&
J(\varphi(0))\int_{B_\gamma(0)} \frac{du}{u^2 + z -\emax} + \int_{B_\gamma(0)} \frac{[J(\varphi(u)) - J(\varphi(0))]du}{u^2 +z -\emax}\\
=&: I_{11}(z)+I_{12}(z),
\end{aligned}
$$
where $J\phi>0$ is the Jacobian of $\varphi.$ Since  $J\varphi \in C^{0,\alpha}(B_\gamma(0))$, there exists $c>0$ such that $|J\varphi(u) - J\varphi(0)| \le c|u|^\alpha$ for all $u\in B_\gamma(0),$  and hence, by Lemma \ref{lem:asymp_some_integral}
$$
|I_{12}(z)| \le cd\omega_d\int_0^\gamma \frac{r^{d+\alpha-1}dr}{r^2 + z -\emax} \le
\begin{cases}
2c(z - \emax)^{\frac{\alpha - 1}{2}} \int_0^{+\infty} \frac{r^\alpha dr}{r^2 + 1} & \text{if $d=1,$}\\[1mm]
\frac{2\pi c}{\gamma} + c_1 (z-\emax)^\alpha & \text{if $d=2,$}
\end{cases}
$$
where $c_2>0.$  Moreover, if $z>\emax,$
$$
I_{11}(z) =
\begin{cases}
\frac{\pi J(\varphi(0))}{(z - \emax)^{1/2}} \Big(1- \frac{2}{\pi}\,\arctan\frac{( z - \emax)^{1/2}}{\gamma} \Big) & \text{if $d=1,$}\\[3mm]
-\pi J(\varphi(0)) \ln(E(\mu) - \emax) \,\Big(1 - \frac{\ln (\gamma + z-\emax)}{\ln(z - \emax)}\Big) & \text{if $d=2.$}
\end{cases}
$$
Thus,
$$
\lim\limits_{z\searrow\emax} \frac{I_{12}(z)}{I_{11}(z)} =0
$$
and
\begin{equation}\label{az_asymptoti}
a(z) =
\begin{cases}
\frac{\pi J_0}{(z - \emax)^{1/2}} \Big(1+  (z-\emax)^\alpha h_1(z)\Big) & \text{if $d=1,$}\\[3mm]
-\pi J_0 \ln(z -\emax) \,\Big(1 + \frac{C_8}{\ln(z-\emax)} + \frac{(z-\emax)^\alpha  h_2(z)}{\ln(z-\emax)}\Big) & \text{if $d=2,$}
\end{cases}
\end{equation}
where $J_0:=J(\varphi(0)),$ $h_1, h_2\in C^0[\emax,\emax+\delta_0]$.

Assume that $\kappa_0:=\sum\limits_{x\in\Z^d} \hat v(x)>0.$ Then by \eqref{dangal_dangal} and \eqref{eigen_equaiaia}
\begin{equation}\label{eigenvalue_equationa}
\frac{1}{\mu a(E(\mu))} = \kappa_0 +
\begin{cases}
\frac{g_1(E(\mu))\,\ln (E(\mu)-\emax)}{\kappa_0 a(E(\mu))} & \text{if $d=1,$}\\[1mm]
\frac{C_6}{\kappa_0  a(E(\mu))} + \frac{(E(\mu) - \emax)^{\gamma/2}\,g_2(E(\mu))}{\kappa_0 a(E(\mu))} & \text{if $d=2,$}
\end{cases}
\end{equation}
where $C_6\in\R,$ $g_1,g_2\in C^0[\emax,\emax+\delta_0].$
Therefore, if $d=1,$ then by \eqref{az_asymptoti} and \eqref{eigenvalue_equationa} we get
\begin{equation}\label{implicit_equation_d1_vn0}
1 = \frac{\pi J_0\kappa_0 \mu}{(E(\mu) - \emax)^{1/2}} \Big(1+  (E(\mu)-\emax)^\alpha h_1(E(\mu))\Big) + \frac{\mu}{\kappa_0}\,g_1(E(\mu))\,\ln(E(\mu) - \emax)
\end{equation}
for any $\mu\in(0,\mu_o).$ Let $u_1(\mu)$ be such that
$$
(E(\mu) - \emax)^{1/2}= \pi J_0 \kappa_0 \mu\,(1 + u_1(\mu)).
$$
Then by  \eqref{implicit_equation_d1_vn0}
$$
|u_1(\mu)| \le \tilde c_1 \mu^\alpha,\qquad \mu\in(0,\mu_o),
$$
for some $\tilde c_1>0.$ Hence \eqref{E_ning_asymp1} in $d=1$ follows.

If $d=2,$ then by \eqref{az_asymptoti} and \eqref{eigenvalue_equationa}
\begin{align}\label{implicit_equation_d2_vn0}
1 =   -\pi J_0\kappa_0 \mu\, \ln(E(\mu) - \emax)\, & \Big(1+  \frac{h_2(E(\mu))}{\ln(E(\mu)-\emax)}\Big) \nonumber \\
& + \frac{C_8\mu}{\kappa_0} + \frac{\mu\,(E(\mu) - \emax)^{\gamma/2}\,g_2(E(\mu))}{\kappa_0}
\end{align}
for any $\mu\in(0,\mu_o).$ Let $u_2(\mu)$ be such that
$$
E(\mu) - \emax = e^{-\frac{1}{\pi J_0\kappa_0\mu}}\big(c + u_2(\mu)\big),\qquad c:=e^{\frac{C_6}{\pi J_0\kappa_0^2} - C_8}>0.
$$
Then \eqref{implicit_equation_d2_vn0} implies
$$
|u_2(\mu)|\le \tilde c_2\mu,\qquad \mu\in(0,\mu_o),
$$
for some $\tilde c_2>0.$ Hence \eqref{E_ning_asymp1} in $d=2$ follows.

Now assume that $\kappa_0:=\sum_x\hat v(x)=0.$ By Proposition \ref{prop:asym_eigenvalues_bsh}, $\kappa_1\in(0,+\infty).$ Moreover, by \eqref{eigen_equaiaia} and \eqref{issaqbaazi_he}
\begin{equation}\label{eigenvalue_equationb}
\frac{1}{\mu^2a(E(\mu))} =
\kappa_1 +
\begin{cases}
\frac{g_3(E(\mu))\ln^2(E(\mu)-\emax)}{a(E(\mu))^{\gamma/2}} & \text{if $d=1,$}\\[1mm]
\frac{C_7}{a(E(\mu))^{1/2}} + \frac{(E(\mu)-\emax)^{\gamma/2}g_4(z)}{a(E(\mu))^{1/2}} & \text{if $d=2,$}
\end{cases}
\end{equation}
where $C_7\in\R,$ $g_3,g_4\in C^0[\emax,\emax+\delta_0].$

Let $d=1.$ In this case by \eqref{eigenvalue_equationb} and  \eqref{az_asymptoti} we get
\begin{align}\label{implicit_equation_d1_v0}
& \frac{(E(\mu) - \emax)^{1/2}}{\pi J_0\kappa_0 \mu^2\Big(1+  (E(\mu)-\emax)^\alpha h_1(E(\mu))\Big)}\nonumber \nonumber \\
&\hspace*{4cm}=  \kappa_1 + \frac{(E(\mu)-\emax)^{\gamma/4} \,g_3(E(\mu))\,\ln^2(E(\mu) - \emax )}{\Big[\pi J_0 (1 + (E(\mu) - \emax)^\alpha \,h_1(E(\mu)))\Big]^{1/2}}
\end{align}
for any $\mu\in(0,\mu_o).$ Let $u_3(\mu)$ be such that
$$
(E(\mu) - \emax)^{1/2}= \pi J_0 \kappa_1 \mu^2\,(1 + u_1(\mu)).
$$
Then \eqref{implicit_equation_d1_v0} implies
$$
|u(\mu)| \le \tilde c_3 \mu^\gamma \ln^2\mu,\qquad \mu\in(0,\mu_o)
$$
for some $\tilde c_3>0.$ Hence \eqref{E_ning_asymp01} in $d=1$ follows.

Let $d=2.$ In this case by \eqref{eigenvalue_equationb} and  \eqref{az_asymptoti} we get
\begin{align}\label{implicit_equation_d2_v0}
&\frac{1}{-\pi J_0 \mu^2 \ln(E(\mu) -\emax) \,\Big(1 + \frac{C_8}{\ln(E(\mu)-\emax)} + \frac{(E(\mu)-\emax)^\alpha  h_2(z)}{\ln(E(\mu)-\emax)}\Big)} \nonumber \\
= &\kappa_1 +
\frac{C_7 + (E(\mu) - \emax)^{\gamma/2}g_4(E(\mu))}{\Big[-\pi J_0 \ln(E(\mu) -\emax) \,\Big(1 + \frac{C_8}{\ln(E(\mu)-\emax)} + \frac{(E(\mu)-\emax)^\alpha  h_2(z)}{\ln(E(\mu)-\emax)}\Big)\Big]^{1/2}}.
\end{align}
This equation can be rewritten as
$$
-\pi J_0 \kappa_1\mu^2 \ln(E(\mu) - \emax) +C_7\mu^2 [-\pi J_0  \ln(E(\mu) - \emax) ]^{1/2}\,[1 + o(1)] = 1.
$$
Note that the equation
$$
-\pi J_0 \kappa_1 \ln t + C_7[-\pi J_0  \ln t ]^{1/2} = \frac{1}{\mu^2}
$$
has a unique solution
$$
t = \exp\Big(- \frac{\big(\sqrt{4\kappa_1 + C_7^2\mu^2} - C_7\mu\big)^2}{4\pi J_0 \kappa_1^2\mu^2}\Big),
$$
hence if we set
$$
E(\mu) - \emax = \exp\Big(- \frac{\big(\sqrt{4\kappa_1 + C_7^2\mu^2} - C_7\mu\big)^2}{4\pi J_0 \kappa_1^2\mu^2}\Big)\,[c + u_4(\mu)],\qquad c:=e^{-\frac{C_8}{\pi J_0}}>0,
$$
for some $u_4(\mu)\in\R,$ then from \eqref{implicit_equation_d2_v0} we get
$$
|u_4(\mu)| \le \tilde c_4 \mu,\qquad \mu\in(0,\mu_o),
$$
for some $\tilde c_4>0.$ Hence \eqref{E_ning_asymp01} in $d=2$ follows.
\end{proof}

\appendix

\section{Asymptotics of some parametric integrals}

In this paper we frequently use the following technical tool.

\begin{lemma}\label{lem:asymp_some_integral}
Given $\alpha\ge0$ and $r_0\in(0,1),$ consider the integral
\begin{equation}\label{def:t_d}
T_\alpha(\omega):= \int_0^{r_0} \frac{r^\alpha dr}{r^2 + \omega^2},\qquad \omega>0.
\end{equation}
Then for any $\alpha\ge0$ there exist a polynomial $P_\alpha(\omega)$ such that $P_\alpha\equiv0$ for $\alpha\in[0,1],$  $P_\alpha\equiv T_\alpha(0):=\frac{r_0^{\alpha-1}}{\alpha-1}>0$ for $\alpha\in(1,2]$ and $P_\alpha$ is at most of order  $[\alpha]-2$ if $\alpha>2,$ where $[\alpha]$ is the integer part of $\alpha,$ and $g_\alpha\in L^\infty(0,+\infty)$  such that

\begin{itemize}
\item[(a)] for  $\alpha\in[0,1]:$
$$
T_\alpha(\omega)=
\begin{cases}
\frac{\pi}{2\omega}\Big[1 - \frac{2}{\pi}\,\arctan\frac{\omega}{r_0}\Big] & \text{if $\alpha=0,$}\\[2mm]
\frac{g_\alpha(\omega)}{\omega^{1-\alpha}} & \text{if $\alpha\in(0,1),$}\\[2mm]
- \ln\omega\,\Big[1 +  \frac{\ln(r_0^2 + \omega^2)}{2\ln\omega}\Big] & \text{if $\alpha=1;$}
\end{cases}
$$

\item[(b)] Let $\alpha>1.$ Then
$$
T_\alpha(\omega) \le T_\alpha(0)
$$
and
$$
T_\alpha(\omega)=  P_\alpha(\omega) +
g_\alpha(\omega)\,\omega^{[\alpha]-1}.
$$
\end{itemize}

\end{lemma}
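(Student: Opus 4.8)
The plan is to dispatch the base range $\alpha\in[0,1]$ by explicit primitives together with a rescaling, and the range $\alpha>1$ by iterating an elementary one-step reduction that removes a factor $\omega^2$ and terminates once the exponent has been pushed into $(-1,1)$, where the rescaling argument applies again.

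For the base range: at $\alpha=0$ one integrates directly, $T_0(\omega)=\omega^{-1}\arctan(r_0/\omega)$, and uses $\arctan(r_0/\omega)=\tfrac{\pi}{2}-\arctan(\omega/r_0)$; at $\alpha=1$ one gets $T_1(\omega)=\tfrac12\ln(r_0^2+\omega^2)-\ln\omega$, the claimed form. For $\alpha\in(0,1)$ the substitution $r=\omega t$ gives $T_\alpha(\omega)=\omega^{\alpha-1}g_\alpha(\omega)$ with $g_\alpha(\omega):=\int_0^{r_0/\omega}\frac{t^\alpha\,dt}{t^2+1}$, and $g_\alpha\in L^\infty(0,+\infty)$ since $\int_0^{+\infty}\frac{t^\alpha\,dt}{t^2+1}<\infty$ for every $\alpha\in(-1,1)$ (convergence at $0$ because $\alpha>-1$, at $+\infty$ because $\alpha<1$) while $g_\alpha(\omega)\to0$ as $\omega\to+\infty$, the integral being then taken over a shrinking interval. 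This computation is valid for any exponent in $(-1,1)$ and will reappear as the termination step of the reduction.

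For $\alpha>1$: the inequality $T_\alpha(\omega)\le T_\alpha(0)$ is immediate from $\frac{r^\alpha}{r^2+\omega^2}\le r^{\alpha-2}$ together with $\int_0^{r_0}r^{\alpha-2}\,dr=\frac{r_0^{\alpha-1}}{\alpha-1}=T_\alpha(0)$. For the decomposition, the identity $r^\alpha=r^{\alpha-2}(r^2+\omega^2)-\omega^2r^{\alpha-2}$ yields the recursion
\[
T_\alpha(\omega)=\frac{r_0^{\alpha-1}}{\alpha-1}-\omega^2\,T_{\alpha-2}(\omega),\qquad\alpha>1,
\]
which one iterates $k$ times, $k$ chosen so that $\alpha-2k\in(-1,1)$ (for $\alpha$ not an integer, $k=[(\alpha+1)/2]$), obtaining
\[
T_\alpha(\omega)=\sum_{j=0}^{k-1}(-1)^j\,\frac{r_0^{\alpha-2j-1}}{\alpha-2j-1}\,\omega^{2j}\;+\;(-1)^k\,\omega^{2k}\,T_{\alpha-2k}(\omega).
\]
Inserting the base-case formula for $T_{\alpha-2k}$ turns the last summand into $(-1)^k\,\omega^{\alpha-1}$ times a function bounded on $(0,+\infty)$; writing $\omega^{\alpha-1}=\omega^{[\alpha]-1}\cdot\omega^{\alpha-[\alpha]}$ and verifying that $\omega^{\alpha-[\alpha]}$ times that bounded function is still bounded both as $\omega\to0^+$ (it tends to $0$) and as $\omega\to+\infty$ (where the bounded factor decays like $\omega^{-(\alpha-2k+1)}$) recasts the remainder as $g_\alpha(\omega)\,\omega^{[\alpha]-1}$ with $g_\alpha\in L^\infty(0,+\infty)$. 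The finite sum is $P_\alpha$: it has only even powers and degree $2(k-1)$, which is $\le[\alpha]-2$ except when $[\alpha]$ is odd and $\alpha>2$, in which case $2(k-1)=[\alpha]-1$ and one moves its top monomial $\omega^{[\alpha]-1}$ — with constant coefficient — into the remainder, so that $g_\alpha$ stays bounded and $\deg P_\alpha\le[\alpha]-2$. The normalizations $P_\alpha\equiv0$ on $[0,1]$ and $P_\alpha\equiv T_\alpha(0)$ on $(1,2]$ come out as the cases $k=0$ and $k=1$, the latter with $g_2(\omega)=-\arctan(r_0/\omega)$ at $\alpha=2$.

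I expect the difficulty to be bookkeeping rather than conceptual: keeping the number $k$ of reduction steps, the degree of $P_\alpha$, and the exponent $[\alpha]-1$ of the remainder mutually consistent through the parity cases of $[\alpha]$ and the boundary exponents $\alpha=1,2$, and carrying out the elementary but slightly delicate checks that each candidate $g_\alpha$ is bounded at both ends $\omega\to0^+$ and $\omega\to+\infty$. (For integer $\alpha$ the recursion terminates at exponent $0$ or, for odd $\alpha$, at exponent $1$ after one fewer step, so that in the odd case an extra logarithmic factor appears in the remainder; since the applications in the paper use only non-integer exponents, this case may be treated separately.)
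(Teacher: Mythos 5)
Your proposal is correct and follows essentially the same route as the paper: explicit primitives plus the rescaling $r=\omega t$ on the base range, and for $\alpha>1$ a reduction modulo $r^2+\omega^2$ (your iterated recursion $T_\alpha=\tfrac{r_0^{\alpha-1}}{\alpha-1}-\omega^2T_{\alpha-2}$ is exactly the closed-form polynomial factorizations the paper writes out), with the remainder controlled by the same rescaling. Your extension of the base case to exponents in $(-1,0)$, the absorption of the top monomial when $[\alpha]$ is odd, and the explicit caveat about odd integer $\alpha\ge3$ (where a logarithm genuinely appears and the stated form degenerates, a case the paper's proof also tacitly excludes) are all sound.
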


\begin{proof}
(a) The asymptotics of $T_\alpha$ for $\alpha\in\{0,1\}$ is clear. In this case we define $g_\alpha\equiv1.$ If $\alpha\in(0,1),$ then using the change of variables $r=wt$ in the integral we get
$$
T_\alpha(\omega) \le \omega^{\alpha-1} \int_0^{+\infty} \frac{t^\alpha \,\d t}{t^2+1}.
$$
Hence,
$$
g_\alpha(\omega): = \omega^{1 - \alpha} T_\alpha(\omega)
$$
satisfies $\|g_\alpha\|_{L^\infty(0,+\infty)} \le \int_0^{+\infty} \frac{t^\alpha \,\d t}{t^2+1}<+\infty.$

(b) Let $\alpha\in(1,2].$ Then using the change of variable $r=t^{\frac{1}{\alpha-1}}$ in $T_\alpha$ we get
$$
T_\alpha(\omega) = \frac{1}{\alpha-1}\int_0^{r_0^{\alpha-1}} \frac{t^{\frac{2}{\alpha-1}} \,\d t}{t^{\frac{2}{\alpha-1}} + \omega^2} = \frac{r_0^{\alpha-1}}{\alpha-1} + \frac{\omega^2}{\alpha-1} \int_0^{r_0^{\alpha-1}} \frac{\,\d t}{t^{\frac{2}{\alpha-1}} + \omega^2}.
$$
Now using the change of variable $t=\omega^{\alpha-1}s$ we get
$$
\int_0^{r_0^{\alpha-1}} \frac{\,\d t}{t^{\frac{2}{\alpha-1}} + \omega^2} \le \omega^{\alpha-1} \int_0^{+\infty} \frac{ds}{s^{\frac{2}{\alpha-1}} +1}
$$
so that
$$
g_\alpha(\omega):= \omega^{1-\alpha} \big[T_\alpha(\omega) - T_\alpha(0)\big]
$$
satisfies $\|g_\alpha\|_{L^{\infty}(0,+\infty)} \le \frac{1}{\alpha-1} \int_0^{+\infty} \frac{ds}{s^{\frac{2}{\alpha-1}} +1} <+\infty.$

Let $n:=[\alpha]\ge2.$ Note that if $n$ is even, then
$$
r^n +(-1)^{[n/4]}\omega^n = (r^2+\omega^2)(r^{n-2} - r^{n-4}\omega^{2} +\ldots + (-1)^{[n/4]}\omega^{n-2}).
$$
Thus
$$
P_\alpha(\omega):= \int_0^{r_0} \Big((r^{n-2} - r^{n-4}\omega^{2} +\ldots + (-1)^{[n/4]}\omega^{n-2})\Big) r^{\alpha-n} dr;
$$
is a polynomial of order $n-2$ and
$$
T_\alpha(\omega) = P_\alpha(\omega) - (-1)^{[n/4]} \omega^{n} \int_0^{r_0} \frac{r^{\alpha -n} dr}{r^2 + \omega^2}.
$$
Note that
$$
\int_0^{r_0} \frac{r^{\alpha -n} dr}{r^2 + \omega^2} \le \omega^{\alpha-n-1} \int_0^{+\infty} \frac{r^{\alpha-n}dr}{r^2+1},
$$
where the last integral is finite since $\alpha-n\in[0,1).$ Hence,
$$
g_\alpha(\omega): = \omega^{1-\alpha} \big[T_\alpha(\omega) - P_{\alpha}(\omega)\big]
$$
satisfies
$\|g_\alpha\|_{L^\infty(0,+\infty)} \le \int_0^{+\infty} \frac{r^{\alpha-n}dr}{r^2+1}<+\infty.$

If $n\ge3$ is odd, then
$$
r^{n-1} +(-1)^{[\frac{n-1}{4}]}\omega^{n-1} = (r^2+\omega^2)(r^{n-3} - r^{n-5}\omega^{2} +\ldots + (-1)^{[n/4]}\omega^{n-3}).
$$
Thus,
$$
P_\alpha(\omega):= \int_0^{r_0} \Big(r^{n-3} - r^{n-5}\omega^{2} +\ldots + (-1)^{[n/4]}\omega^{n-3}\Big)r^{1+\alpha-n} dr
$$
is a polynomial of order $n-3$ and
$$
T_\alpha(\omega) = \tilde P_{n-3}(\omega) -(-1)^{[\frac{n-1}{4}]}\omega^{n-1} \int_0^{r_0} \frac{r^{1+\alpha-n}dr}{r^2 + \omega^2},
$$
Now as in  the case of $\alpha\in(1,2),$
$$
\int_0^{r_0} \frac{r^{1+\alpha-n}dr}{r^2 + \omega^2} \le \frac{\omega^{\alpha-n}}{\alpha-n} \int_0^{+\infty} \frac{dr}{r^{\frac{2}{\alpha -n}} +1},
$$
therefore,
$$
g_\alpha(\omega):= \omega^{1-\alpha}\big[ T_\alpha(\omega) - P_{n-3}(\omega)\big]
$$
satisfies
$\|g_\alpha\|_{L^{\infty}(0,+\infty)}\le \frac{1}{\alpha-n} \int_0^{+\infty} \frac{dr}{r^{\frac{2}{\alpha -n}} +1}<+\infty.$
\end{proof}

\end{document}